\DeclareMathOperator{\SAW}{SAW}
\title{Spectral Independence in High-Dimensional Expanders and Applications to the Hardcore Model}
\author{Nima Anari}
\affil{\small Stanford University, \textsf{anari@cs.stanford.edu}}
\author{Kuikui Liu}
\author{Shayan Oveis Gharan}
\affil{\small University of Washington, \textsf{liukui17@cs.washington.edu}, \textsf{shayan@cs.washington.edu}}
\date{September 2020}
\begin{document}
\maketitle
\begin{abstract}
We say a probability distribution $\mu$ is spectrally independent if an associated pairwise influence matrix has a bounded largest eigenvalue for the distribution and all of its conditional distributions. 
We prove that if $\mu$ is spectrally independent, then the corresponding high dimensional simplicial complex is a local spectral expander.
Using a line of recent works on mixing time of high dimensional walks on simplicial complexes \cite{KM17,DK17,KO18,AL20}, this implies that the corresponding Glauber dynamics  mixes rapidly and generates (approximate) samples from $\mu$.

As an application, we show that natural Glauber dynamics mixes rapidly (in polynomial time) to generate a random independent set from the hardcore model up to the uniqueness threshold. This improves the quasi-polynomial running time of Weitz's deterministic correlation decay algorithm \cite{Wei06} for estimating the hardcore partition function, also answering a long-standing open problem of mixing time of Glauber dynamics \cite{LV97,LV99,DG00,Vig01,EHSVY16}.
\end{abstract}
\section{Introduction}
Suppose we have a ground set $[n]=\{1,\dots,n\}$ of elements.
Let $\mu:2^{[n]}\to\R_+$ be a probability distribution on subsets of $[n]$. We say $\mu$ is $d$-homogeneous, if for every $S\in\supp\{\mu\}$, we have $|S|=d$. When the choice of $\mu$ and $[n]$ are clear from context, we will write $\Pr[i] = \Pr_{S \sim \mu}[i \in S]$ and $\Pr[\overline{j}] = \Pr_{S \sim \mu}[i \notin S]$. The following definitions are crucial in our paper.

\begin{definition}[(Signed) Pairwise Influence Matrix]
Fix a distribution $\mu$ on subsets of a ground set $[n]$. We define the pairwise correlation matrix $\Psi_{\mu}\in\R^{n \times n}$ by
\begin{align*}
    \Psi_{\mu}(i,j) \overset{\defin}{=} \Pr[j \mid i] - \Pr[j \mid \overline{i}]
\end{align*}
for $i \neq j$, and $\Psi_{\mu}(i,i) = 0$ for all $i=1,\dots,n$. We refer to the entry $\Psi_{\mu}(i,j)$ as the pairwise influence of $i$ on $j$.
\end{definition}
Note this differs from existing definitions of ``influence''; see \cref{subsec:correlationinfluencematrices} for further discussion. One may also view $\Psi_{\mu}$ is a matrix of pairwise correlations.

\begin{definition}[Spectral Independence]\label{def:spectralind}
	We say a probability distribution $\mu$ on subsets of $[n]$ is {\em $\eta$-spectrally independent} if $\lambda_{\max}(\Psi_{\mu})\leq \eta$. Note that since the maximum eigenvalue is always at most the maximum absolute row/column sum, we have $\mu$ is $\eta$-spectrally independent if either $$\sum_{j \neq i} |\Psi_{\mu}(i,j)| \leq \eta \quad\quad \text{or} \quad\quad\sum_{j \neq i} |\Psi_{\mu}(j,i)| \leq \eta,$$

	We say $\mu$ is $(\eta_0,\dots,\eta_{n-2})$-spectrally independent if $\mu$ is $\eta_0$-independent, for all $0\leq i<n$, $\{\mu | i \text{ in/out}\}$ is $\eta_1$-independent, for all $i,j$, $\{\mu | i \text{ in/out}, j \text{in/out}\}$ is $\eta_2$-independent,
	and so on.
\end{definition}
Note that for any $i$, we always have $\eta_i\leq n-i-1$; the smaller $\eta_i$'s are, the more independent $\mu$ is. Ideally, we are interested in distributions where $\eta_0,\dots,\eta_{n-2}\leq O(1)$ independent of $n$. Observe that if $\mu$ is a product distribution, then it is $(0,\dots,0)$-independent.

Let us explain a more interesting example. Recall that a probability distribution $\mu$ is {\em negatively correlated} if for all $i\neq j$, we have $\Pr[i | j] \leq  \Pr[i]$.
If $\mu$ is $d$-homogeneous and all measures obtainable from $\mu$ by conditioning are negatively correlated,  then $\mu$ is $(1,1,\dots,1)$-spectrally independent.

For a bad example, consider the distribution $\mu$ which places $1/2$ probability to both $\wrapc{1,\dots,\frac{n}{2}}$ and $\wrapc{\frac{n}{2} + 1,\dots,n}$. In this case , $\lambda_{\max}(\Psi_{\mu}) = n-1$.

Given a probability distribution $\mu$ we can define a Markov chain called the {\em Glauber dynamics} to generate samples from $\mu$ as follows: Given a set $S\in\supp\{\mu\}$, we choose a uniformly random element $i$ and we transition to
$$\begin{cases}
	S\smallsetminus \{i\} & \text{with prob } \frac{\mu(S\smallsetminus \{i\})}{\mu(S\smallsetminus \{i\}) + \mu(S\cup\{i\})}\\
	S\cup \{i\} & \text{o.w.,} 
\end{cases}$$
It turns out that this chain has the right stationary distribution.

The following is our main technical theorem.
\begin{theorem}[Main]\label{thm:spectralindmixing}
	For any $(\eta_0,\dots,\eta_{n-2})$-spectrally independent distribution $\mu:2^{[n]}\to\R_+$, the natural Glauber dynamics (defined above) has spectral gap at least
	$$ \frac{1}{n} \prod_{i=0}^{n-2} \left(1-\frac{\eta_i}{n-i-1}\right) $$
\end{theorem}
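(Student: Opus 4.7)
The strategy is to embed $\mu$ into a pure weighted simplicial complex and apply the local-to-global spectral gap theorem of Kaufman--Oppenheim and Alev--Lau, the crux being to translate spectral independence into local spectral expansion at every link.

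\emph{Homogenization.} I would enlarge the ground set to $[n]\times\{0,1\}$ and define $\hat\mu$ on $n$-element subsets that contain exactly one of $(i,0),(i,1)$ per index by $\hat\mu(S) = \mu(\{i : (i,1)\in S\})$. This yields a pure $(n-1)$-dimensional weighted simplicial complex. Faces of dimension $k-1$ correspond to partial pinnings of $k$ coordinates, and the link of such a face is precisely the homogenization of the conditional distribution $\mu_\tau$ on the $n-k$ remaining coordinates. A short unwinding of definitions shows that the top-level down-up walk on this complex reproduces Glauber dynamics on $\mu$ exactly.

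\emph{Local spectral expansion from spectral independence.} This is the main technical step. For the link of any $(k-1)$-face, I would analyze the 1-skeleton walk $P_\tau$ on its $2(n-k)$ vertices by decomposing test functions as $f((j,c)) = \alpha_j + (c - p_j)\,h_j$, where $p_j$ denotes the conditional marginal. Using the binary identity
\begin{equation*}
\Pr[X_j = c \mid X_i = b] - p_j \;=\; (b - p_i)\,\Psi_{\mu_\tau}(i,j),
\end{equation*}
a direct computation shows that the $\alpha$- and $h$-subspaces are $P_\tau$-invariant, the $\alpha$-subspace contributes only the eigenvalues $1$ and $-\tfrac{1}{n-k-1}$, and on the $h$-subspace $P_\tau$ acts as $\tfrac{1}{n-k-1}\Psi_{\mu_\tau}$. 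Hence the second eigenvalue of $P_\tau$ is bounded by $\eta_k / (n-k-1)$ using the $\eta_k$-spectral independence of $\mu_\tau$.

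\emph{Assembly.} Plugging these bounds $\alpha_k \le \eta_k/(n-k-1)$ into the Alev--Lau local-to-global theorem for the down-up walk at level $n-1$ of a pure $(n-1)$-dimensional weighted complex yields
\begin{equation*}
\mathrm{gap}(\mathrm{Glauber}) \;\ge\; \frac{1}{n}\prod_{k=0}^{n-2}\left(1 - \frac{\eta_k}{n-k-1}\right),
\end{equation*}
which is the claimed bound. The principal obstacle is the calculation in the middle step: identifying the non-trivial spectrum of every link's 1-skeleton walk with the (normalized) signed pairwise influence matrix of the corresponding conditional distribution, and tracking normalizations consistently across codimensions. The other two steps are bookkeeping once the homogenization is in place and the local-to-global theorem is invoked as a black box.
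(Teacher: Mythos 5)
Your proposal is correct and follows the paper's overall architecture — homogenize $\mu$ into a pure partite weighted complex whose top-level down-up walk is Glauber, translate $\eta_k$-spectral independence into local spectral expansion of every link, then invoke the Alev--Lau local-to-global theorem — but the key middle step is executed differently. The paper proves the identity $\lambda_2(P_\emptyset) = \frac{1}{n-1}\lambda_{\max}(\Psi_\mu)$ by matrix surgery: it forms $Q_\emptyset = P_\emptyset - \frac{n}{n-1}\mathbf{1}\pi^\top$ and $M_\emptyset = Q_\emptyset + \frac{n}{n-1}\sum_i \mathbf{1}^i(\pi^i)^\top$, shows the $\mathbf{1}^i$ are eigenvectors of $Q_\emptyset$ with eigenvalue $-\frac{1}{n-1}$ so that $M_\emptyset$ zeroes out the trivial spectrum, and then exploits the block structure $M_\emptyset = \bigl[\begin{smallmatrix} A & -A \\ B & -B \end{smallmatrix}\bigr]$ together with a block-determinant identity to show $M_\emptyset$'s spectrum is that of $\frac{1}{n-1}(A-B) = \frac{1}{n-1}\Psi_\mu$ padded with zeros. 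You instead decompose the test-function space $\R^{2n}$ directly into two $P_\emptyset$-invariant subspaces: the span of the indicator vectors $\mathbf{1}^j$ (your $\alpha$-subspace), on which $P_\emptyset$ acts as $\frac{1}{n-1}(J-I)$ with spectrum $\{1\}\cup\{-\frac{1}{n-1}\}^{n-1}$, and its $\langle\cdot,\cdot\rangle_\pi$-orthogonal complement spanned by $v^j = (1-p_j)e_{(j,1)} - p_j e_{(j,0)}$, on which a direct computation yields $P_\emptyset v^j = \sum_i \frac{\Psi(i,j)}{n-1} v^i$. This gives the same conclusion more transparently, avoiding the rank-one corrections and the characteristic-polynomial argument; what it costs is that you must verify invariance of the $h$-subspace directly (or appeal to self-adjointness plus orthogonality), whereas the paper's route pushes more of the work into a clean algebraic identity. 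One minor imprecision: the displayed identity $\Pr[X_j=c\mid X_i=b] - p_j = (b-p_i)\Psi_{\mu_\tau}(i,j)$ holds as written only for $c=1$ (for $c=0$ there is a sign flip), but since your ansatz $f((j,c)) = \alpha_j + (c-p_j)h_j$ already carries the coefficient $(c - p_j) \in \{1-p_j,\,-p_j\}$, the computation goes through correctly once the signs are tracked.
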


We note that prior works \cite{FM92, AOR16} show that as long as the distribution $\mu$, and all its conditional distributions, satisfying certain negative correlation properties, then a very similar Markov chain mixes rapidly. In our setting, negative correlation is equivalent to all entries of $\Psi_{\mu}$ being nonpositive. Thus, in a similar spirit to spectral negative dependence \cite{ALOV19ii}, one may view spectral independence and \cref{thm:spectralindmixing} as also relaxing these negative correlation requirements to allow for some positive correlation between elements, while still providing mixing time guarantees.

In the following sections we will explain an application of the above theorem in bounding the mixing time of the Glauber dynamics for sampling independent sets from the hardcore distribution. The proof of \cref{thm:spectralindmixing} uses recent connections developed by the authors and collaborators between analysis of Markov chains and the field of high dimensional expanders \cite{ALOV19ii}.

\subsection{Connections to High Dimensional Simplicial Complexes}
Let us first phrase our main contribution in the language of high-dimensional expanders. 
For a ground set $U = [n]$ of elements, a {\em simplicial complex} $X$ is a downward closed family of subsets of $U$. Sets in $X$ are also called {\em faces} of $X$. The dimension of a face in $X$ is its size.
For an integer $k$, we write $X(k)$ to denote all faces of $X$ of size $k$.
We say $X$ is {\em pure} if all maximal faces have the same size.
The dimension of $X$ is the size of the maximum face in $X$. For a pure $d$-dimensional simplicial complex $X$, we say $X$ is {\em $d$-partite} if $U$ can be partitioned into sets $U_1,\dots,U_d$ such that every maximal face $\sigma$ has exactly one element of each $U_i$.

We will often weight the maximal faces of a pure $d$-dimensional simplicial complex $X$ by some function $w:X(d) \rightarrow \R_{>0}$. This induces weights on all faces of $X$ via
\begin{align}\label{eq:faceweights}
    w(\tau) = \sum_{\sigma \in X(d) : \sigma \supset \tau} w(\sigma)
\end{align}
For a face $\tau$ of $X$, the {\em link} of $\tau$ is the simplicial complex $X_\tau=\{\sigma \smallsetminus \tau : \sigma \in X, \sigma \supset \tau\}$. We endow the maximal faces of $X_{\tau}$ with the weight $w_{\tau}(\sigma) = w(\tau \cup \sigma)$.

The {\em $1$-skeleton} of link $X_{\tau}$ of $\tau$ is a weighted graph defined as follows: For every element $i\in U$, such that $\{i\}\in X_\tau$ we have a vertex. We connect two vertices $i,j$ if $\{i,j\}\in X_\tau$ and the weight of the edges is $w_\tau(\{i,j\})$. We will let $P_{\tau}$ denote the simple random walk on the $1$-skeleton of $X_{\tau}$.

We also define a random walk on the maximal faces of $X$ by a two-step process. If the walk is currently at some $\sigma \in X(d)$, we transition by
\begin{enumerate}
    \item removing a uniformly random element $i \in \sigma$
    \item adding a random $j \notin \sigma \smallsetminus \{i\}$ to $\sigma \smallsetminus \{i\}$ with probability proportional to $w(\sigma \cup \{j\} \smallsetminus \{i\})$
\end{enumerate}
Note that there is always a nonzero probability staying at $\sigma$ in a given step.

The transition probability matrix $P_{d}^{\vee}$ of this random walk may be written down as
\begin{align*}
    P_{d}^{\vee}(\sigma,\sigma') &= \begin{cases}
        \sum_{\tau \subset \sigma : |\tau| = d-1} \frac{w(\sigma)}{d \cdot w(\tau)}, &\quad \text{if } \sigma = \sigma' \\
        \frac{w(\sigma')}{d \cdot w(\sigma \cap \sigma')}, &\quad \text{if } |\sigma \cap \sigma'| = d-1 \\
        0, &\quad \text{otherwise}
    \end{cases}
\end{align*}
where we recall that $w(\tau) = \sum_{\sigma \in X(d) : \sigma \supset \tau} w(\sigma)$.

Given a distribution $\mu$ on subsets of $[n]$, define a pure $n$-dimensional $n$-partite simplicial complex $X^{\mu}$ as follows: Let the ground set of elements be $\{1,\overline{1},2,\overline{2},\dots,n,\overline{n}\}$ with $n$ parts $U_{1} = \{1,\overline{1}\}, U_{2} = \{2,\overline{2}\}, \dots, U_{n} = \{n,\overline{n}\}$. For every set $S\in\supp\{\mu\}$ we add a maximal face $\sigma_{S}$ which has $i$ for every $i\in S$ and $\overline{i}$ for every $i\notin S$. We assign a weight to $\sigma_{S}$ given by $w(\sigma_S)=\mu(S)$.
We turn this into a simplicial complex by taking downward closure of all maximal faces. Note that in this case, $P_{n}^{\vee}$ describes exactly the Glauber dynamics for sampling $S \subset [n]$ with probability proportional to $\mu(S)$.

We are now ready to define the notion of high-dimensional expansion that we will use, which was first introduced in \cite{DK17, KM17, KO18, Opp18}.
\begin{definition}[Local Spectral Expander; \cite{KO18}]
Let $X$ be a pure $d$-dimensional simplex complex.
We say a face $\tau$ of $X$ is an $\alpha$-spectral expander if the second largest eigenvalue of the simple (non-lazy) random walk on the $1$-skeleton of $X_\tau$ is at most $\alpha$.
We say $X$ is $(\alpha_0,\dots,\alpha_{d-2})$-local spectral expander if for all $0\leq k\leq d-2$, every $\tau\in X(k)$ is an $\alpha_k$-spectral expander.
\end{definition}

We prove the following theorem making connection between spectral independence of probability distributions and local spectral expanders.
\begin{theorem}\label{thm:spectralindimpliesexpansion}
For any $(\eta_0,\dots,\eta_{n-2})$-spectrally independent distribution $\mu:2^{[n]}\to\R_+$, the pure $n$-dimensional $n$-partite simplicial complex $X^\mu$ is a $(\frac{\eta_0}{n-1}, \frac{\eta_1}{n-2},\dots,\frac{\eta_{n-2}}{1})$-local spectral expander.
\end{theorem}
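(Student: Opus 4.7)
My plan is to fix an arbitrary face $\tau \in X^\mu(\ell)$ for $\ell \in \{0,1,\dots,n-2\}$, let $\nu$ be the conditional distribution obtained from $\mu$ by enforcing the in/out values recorded in $\tau$ (so $\nu$ lives on subsets of the unpinned coordinate set $\Lambda \subseteq [n]$, $|\Lambda| = n-\ell$), and bound the second-largest eigenvalue of the simple random walk $P_\tau$ on the $1$-skeleton of $X^\mu_\tau$ by $\eta_\ell/(n-\ell-1)$. The vertex set of this $1$-skeleton is $\{i,\overline{i} : i \in \Lambda\}$, and because the edge weight $w_\tau(\{a,b\}) \propto \Pr_\nu[a,b]$, a direct computation shows that $P_\tau$ acts on a vector $v = (v^+, v^-) \in \R^{2(n-\ell)}$ (with $v^{\pm}$ indexed by the positive/negative vertices) via
\[
(P_\tau v)^+(j) = \frac{1}{n-\ell-1} \sum_{k \in \Lambda \setminus \{j\}} \bigl( \Pr\nolimits_\nu[k \mid j]\, v^+(k) + \Pr\nolimits_\nu[\overline{k} \mid j]\, v^-(k) \bigr),
\]
and analogously for $(P_\tau v)^-(j)$ with $j$ replaced by $\overline{j}$.

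The heart of the argument is a change of basis that block-diagonalizes $P_\tau$ and isolates $\Psi_\nu$ from a trivial ``marginal'' piece. I define
\[
c(j) = v^+(j) - v^-(j), \qquad d(j) = \Pr\nolimits_\nu[j]\, v^+(j) + \Pr\nolimits_\nu[\overline{j}]\, v^-(j),
\]
which is invertible in each coordinate (the $2 \times 2$ transform has determinant $1$) with inverse $v^+(k) = d(k) + \Pr_\nu[\overline{k}] c(k)$, $v^-(k) = d(k) - \Pr_\nu[k] c(k)$. Substituting into the formula for $P_\tau v$ and using the elementary identities $\Pr_\nu[k|j]\Pr_\nu[\overline{k}] - \Pr_\nu[\overline{k}|j]\Pr_\nu[k] = \Pr_\nu[k|j] - \Pr_\nu[k]$ together with $\Pr_\nu[k,j] + \Pr_\nu[k,\overline{j}] = \Pr_\nu[k]$, the off-diagonal blocks cancel and one obtains
\[
c_{\mathrm{new}} = \frac{1}{n-\ell-1}\, \Psi_\nu\, c, \qquad d_{\mathrm{new}} = \frac{1}{n-\ell-1}\, (J-I)\, d,
\]
where $J$ is the $(n-\ell) \times (n-\ell)$ all-ones matrix. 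Thus $P_\tau$ is similar to $\frac{1}{n-\ell-1}(\Psi_\nu \oplus (J-I))$.

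With this block structure in hand, the spectrum of $P_\tau$ is the multiset union of $\operatorname{spec}(\Psi_\nu)/(n-\ell-1)$ and $\operatorname{spec}(J-I)/(n-\ell-1) = \{1\} \cup \{-1/(n-\ell-1)\}^{n-\ell-1}$. The trivial eigenvalue $1$ of $P_\tau$ corresponds to the pair $(c,d) = (0,\mathbf{1})$, i.e., the constant eigenvector. Since $\Psi_\nu$ has zero diagonal and hence zero trace, $\lambda_{\max}(\Psi_\nu) \geq 0$, and every other eigenvalue of $P_\tau$ is at most $\max\bigl(\eta_\ell/(n-\ell-1),\, -1/(n-\ell-1)\bigr) = \eta_\ell/(n-\ell-1)$, as required.

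I expect the main obstacle to be discovering the correct $(c,d)$ change of basis; once it is in hand, the cancellation of off-diagonal terms is essentially forced by the marginal identity $\Pr_\nu[k,j] + \Pr_\nu[k,\overline{j}] = \Pr_\nu[k]$, which is the algebraic reason spectral independence translates into local spectral expansion. The $c$-coordinate captures the ``in minus out'' asymmetry on which $\Psi_\nu$ naturally acts, while $d$ captures the marginal distribution governing the trivial $J$-walk.
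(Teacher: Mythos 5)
Your proof is correct, and it reaches the same spectral characterization of $P_\tau$ as the paper's \cref{thm:correlationeigenvaluesubset}, but by a genuinely different route. The paper reduces \cref{thm:spectralindimpliesexpansion} to \cref{thm:correlationspecradius}, which it proves by working through two auxiliary operators: first $Q_\emptyset = P_\emptyset - \frac{n}{n-1}\mathbf{1}\pi^\top$, chosen so that the vectors $\mathbf{1}^i = e_i + e_{\overline{i}}$ become an explicit $\pi$-orthogonal eigenbasis for the degenerate eigenvalue $-\frac{1}{n-1}$, and then $M_\emptyset = Q_\emptyset + \frac{n}{n-1}\sum_i \mathbf{1}^i (\pi^i)^\top$, which zeroes out those trivial directions (\cref{claim:PemptysetMemptyset}); it then verifies that $M_\emptyset$ has the block structure $\bigl(\begin{smallmatrix} A & -A \\ B & -B \end{smallmatrix}\bigr)$ with $A - B = \frac{1}{n-1}\Psi_\mu$ and extracts the nontrivial spectrum from the characteristic polynomial via a Schur-style factorization (\cref{claim:MemptysetPsi}). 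You instead build an explicit similarity $(v^+,v^-) \mapsto (c,d)$, with $c = v^+ - v^-$ and $d(k) = \Pr_\nu[k]v^+(k) + \Pr_\nu[\overline{k}]v^-(k)$, under which $P_\tau$ itself becomes block-diagonal, $\frac{1}{n-\ell-1}\Psi_\nu \oplus \frac{1}{n-\ell-1}(J-I)$; the same multiset of eigenvalues --- a single $1$, $n-\ell-1$ copies of $-\frac{1}{n-\ell-1}$, and $\operatorname{spec}(\Psi_\nu)/(n-\ell-1)$ --- then drops out immediately, and the final bound uses only $\operatorname{tr}\Psi_\nu = 0$ to know $\lambda_{\max}(\Psi_\nu) \geq 0$. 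What your route buys is directness: no detour through $Q_\emptyset$ or $M_\emptyset$, no characteristic-polynomial computation, and a concrete similarity transform rather than an equality of spectra inferred from a determinant identity; the $c$-coordinate is precisely the signed ``in minus out'' direction on which $\Psi_\nu$ acts, while $d$ carries the marginals and the $J-I$ walk accounts exactly for the $n-\ell$ trivial eigenvalues tied to the partite structure (cf.\ \cref{rem:partitegeneralization}). One small simplification you could make: the $c$-update does not even require the inverse substitution, since $c_{\mathrm{new}}(j) = (P_\tau v)^+(j) - (P_\tau v)^-(j) = \frac{1}{n-\ell-1}\sum_{k\neq j}\Psi_\nu(j,k)\,c(k)$ directly because $\Pr_\nu[\overline{k}\mid j] - \Pr_\nu[\overline{k}\mid \overline{j}] = -\Psi_\nu(j,k)$; the marginal identity is only needed for the $d$-update. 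The one caveat, shared with the paper, is that $\Psi_\nu$ and your transform tacitly require every unpinned marginal to lie strictly in $(0,1)$; if some marginal degenerates, the corresponding vertex drops out of the $1$-skeleton of the link and should be removed before applying the argument.
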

We note that there are strong theorems in the literature of high-dimensional expanders \cite{Opp18} which show that if the $(d-2)$-dimensional faces of a pure $d$-dimensional complex $X$ are $\alpha$-spectral expanders for $\alpha\leq 1/2d$, then every face of $X$ is a $2\alpha$-spectral expander. However, such theorems fail dramatically when the $(d-2)$-dimensional faces have spectral expansion, say, $1/2$.

Here, the main new ingredient is to show that as long as the underlying distribution $\mu$ is spectrally independent for $\eta_0,\dots,\eta_{n-2}\leq O(1)$, then we get better and better spectral expansion as we go to lower dimensional faces of the underlying weighted simplicial complex $X^{\mu}$.

The key usefulness of local spectral expansion lies in the following local-to-global theorem, which may be used to bound $\lambda_{2}(P_{d}^{\vee})$. A weaker version of this result was already proved in \cite{KO18}.
\begin{theorem}[\cite{AL20}]\label{thm:localtoglobal}
Consider a pure $d$-dimensional simplicial complex $X$ with weights $w$. If $(X,w)$ is a $(\alpha_{0},\dots,\alpha_{d-2})$-local spectral expander, then
\begin{align*}
    \lambda_{2}(P_{d}^{\vee}) \leq 1 - \frac{1}{d} \prod_{k=0}^{d-2} (1 - \alpha_{k})
\end{align*}
\end{theorem}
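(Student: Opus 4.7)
The strategy is induction on the dimension $d$, using a recursive bound that relates the spectral gap of the maximal-faces walk $P_k^\vee$ to that of its analog at level $k-1$, with a correction governed by the local spectral expansion at codimension two. First I would define the down operator $D_k : \R^{X(k)} \to \R^{X(k-1)}$ that averages $f$ over the uniform removal of one element, and the up operator $U_{k-1} : \R^{X(k-1)} \to \R^{X(k)}$ that distributes mass along weighted inclusions. Then $P_k^\vee = U_{k-1} D_k$, and the standard identity that $AB$ and $BA$ share the same nonzero spectrum lets me replace $P_k^\vee$ on $\R^{X(k)}$ with $D_k U_{k-1}$ on the smaller space $\R^{X(k-1)}$ (the ``up-down'' walk at level $k-1$), which is the object I will actually estimate.

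The inductive heart of the plan is the recursion
\begin{align*}
1 - \lambda_2(P_k^\vee) \;\geq\; \frac{k-1}{k}\,(1 - \alpha_{k-2})\,(1 - \lambda_2(P_{k-1}^\vee)),
\end{align*}
valid for $2 \leq k \leq d$. To prove it, I would decompose the weighted $\ell^2$ space on $\R^{X(k-1)}$ as the orthogonal sum of $\ker(D_{k-1})$ and $\mathrm{im}(U_{k-2})$. On $\ker(D_{k-1})$, Garland's local-to-global identity expresses the quadratic form $\langle f, D_k U_{k-1} f\rangle$ as an average over codimension-$(k-2)$ links $\tau$ of the corresponding quadratic forms for the $1$-skeleton walks $P_\tau$; the hypothesis that every such link is an $\alpha_{k-2}$-spectral expander then controls this piece. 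On $\mathrm{im}(U_{k-2})$, the operator $D_k U_{k-1}$ is conjugate, up to a convex combination with the identity, via $U_{k-2}$ to the next-lower-level walk $P_{k-1}^\vee$, which is what produces the $\lambda_2(P_{k-1}^\vee)$ term. Combining the two pieces with the correct convex weighting yields the recursion.

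Unrolling the recursion down to the base case $k=1$, where $P_1^\vee$ is a rank-one projection onto the stationary direction and so $\lambda_2(P_1^\vee) = 0$, gives
\begin{align*}
1 - \lambda_2(P_d^\vee) \;\geq\; \prod_{k=2}^{d} \frac{k-1}{k}\,(1 - \alpha_{k-2}) \;=\; \frac{1}{d}\prod_{j=0}^{d-2}(1 - \alpha_j),
\end{align*}
which is the desired bound. The main obstacle will be justifying the two-part decomposition of $D_k U_{k-1}$: showing that it really does restrict to a link-averaged operator on $\ker(D_{k-1})$ (the Garland identity) and to a conjugate of the lower-dimensional walk on $\mathrm{im}(U_{k-2})$. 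Both require careful bookkeeping of the induced weights $w(\tau)$ and exploit the adjointness of $D$ and $U$ under the stationary inner product. It is precisely this sharp split, rather than a uniform application of Oppenheim's trickle-down to codimension-$2$ expansion, that gives the full product form $\prod_k (1-\alpha_k)$ and improves on the weaker bound obtained in \cite{KO18}.
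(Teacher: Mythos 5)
The paper does not prove this theorem; it is cited verbatim from \cite{AL20}, so you are reconstructing a proof rather than competing with one in the text. Your overall plan matches the structure of the argument in \cite{AL20}: use the $AB$/$BA$ identity (\cref{lem:ABBAeigs}) to replace $P_k^\vee = U_{k-1}D_k$ by the up-down walk $D_k U_{k-1}$ on $X(k-1)$, prove a one-step recursion by a Garland-type localization, and unroll to the trivial base case $\lambda_2(P_1^\vee) = 0$. Your recursion
$1 - \lambda_2(P_k^\vee) \geq \frac{k-1}{k}(1-\alpha_{k-2})(1-\lambda_2(P_{k-1}^\vee))$
is correct, and the telescoping $\prod_{k=2}^{d}\frac{k-1}{k} = \frac{1}{d}$ cleanly accounts for the $1/d$ factor (which ultimately comes from the $\frac{1}{k}$ laziness of $D_k U_{k-1}$).

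The gap is in how you propose to prove the recursion. You decompose $\R^{X(k-1)} = \ker(D_{k-1}) \oplus \mathrm{im}(U_{k-2})$ and want to bound the quadratic form $\langle f, D_k U_{k-1} f\rangle$ separately on each summand. But these subspaces are \emph{not} invariant under $D_k U_{k-1}$ in general: there is no identity $D_k U_{k-1} U_{k-2} = U_{k-2}\bigl(\text{conv.\ comb.\ of } I \text{ and } P_{k-1}^\vee\bigr)$, nor is $\ker(D_{k-1})$ stable under the walk. (The naive commutation relation $D_k U_{k-1} = \frac{1}{k}I + \frac{k-1}{k}U_{k-2}D_{k-1}$ holds for the complete complex but fails in general; the true identity replaces $U_{k-2}D_{k-1}$ by a distinct ``non-lazy swap'' operator $A_{k-1}$.) Because the summands are not invariant, the cross term $\langle f_1, D_k U_{k-1} f_2\rangle = \langle U_{k-1} f_1, U_{k-1} f_2\rangle$ does not vanish, and a bound on each block does not bound the Rayleigh quotient of a general $f = f_1 + f_2 \perp \mathbf{1}$. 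Moreover, for $f_2 = U_{k-2}g$ the localizations $(f_2)_\tau$ to links $X_\tau$ of $(k-2)$-faces are not constant, so the ``conjugacy'' claim on $\mathrm{im}(U_{k-2})$ is not straightforward to substantiate even at the level of quadratic forms.

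The fix is that the Garland decomposition is intrinsically \emph{local}, not a global orthogonal split. Write $D_k U_{k-1} = \frac{1}{k}I + \frac{k-1}{k}A_{k-1}$ where $A_{k-1}$ is the non-lazy swap walk on $X(k-1)$, and use the Garland identity
$\langle f, A_{k-1} f\rangle = \sum_{\tau\in X(k-2)} \Pr[\tau]\,\langle f_\tau, P_\tau f_\tau\rangle$
(with the compatible isometry $\|f\|^2 = \sum_\tau \Pr[\tau]\|f_\tau\|^2$, up to normalization). Now decompose each $f_\tau = \bar f_\tau + \tilde f_\tau$ into its mean and mean-zero part \emph{inside each link}; this local split does not glue to a global decomposition of $\R^{X(k-1)}$. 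Bounding $\langle \tilde f_\tau, P_\tau \tilde f_\tau\rangle \leq \alpha_{k-2}\|\tilde f_\tau\|^2$ by local expansion, and observing that $\sum_\tau \Pr[\tau]\bar f_\tau^2 = \|D_{k-1}f\|^2 = \langle f, P_{k-1}^\vee f\rangle$, one gets for $f \perp \mathbf{1}$
\begin{align*}
\frac{\langle f, A_{k-1}f\rangle}{\|f\|^2} \;\leq\; \lambda + \alpha_{k-2}(1-\lambda), \qquad \lambda := \frac{\langle f, P_{k-1}^\vee f\rangle}{\|f\|^2} \leq \lambda_2(P_{k-1}^\vee),
\end{align*}
i.e.\ $1-\lambda_2(A_{k-1}) \geq (1-\alpha_{k-2})(1-\lambda_2(P_{k-1}^\vee))$. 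Multiplying by the laziness factor $\frac{k-1}{k}$ gives exactly your recursion. So the recursion and the unrolling are right; the thing you need to replace is the global $\ker \oplus \mathrm{im}$ block argument with the per-link mean/mean-zero split that the Garland identity actually supports.
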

\begin{remark}
For instance, if there is a constant $\alpha$ such that $(X,w)$ is a $\wrapp{\frac{\alpha}{d-1},\frac{\alpha}{d-2},\dots,\frac{\alpha}{2}, \frac{\alpha}{1}}$-local spectral expander, then we would obtain $\lambda_{2}(P_{d}^{\vee}) \leq 1 - \frac{1}{d^{1 + \alpha}}$. This is precisely what we do for the hardcore model.
\end{remark}
\begin{proof}[Proof of \cref{thm:spectralindmixing}]
By \cref{thm:spectralindimpliesexpansion}, spectral independence of $\mu$ implies strong local spectral expansion of $X^{\mu}$. \cref{thm:localtoglobal} then furnishes the spectral gap of the Glauber dynamics, which we recall is described by $P_{n}^{\vee}$.
\end{proof}
It now remains to prove \cref{thm:spectralindimpliesexpansion}.
\subsection{Application to Sampling from Hardcore Distribution}
Our main application of the above machinery is to generate random samples from the hardcore distribution. Given a graph $G=(V,E)$, and a parameter $\lambda>0$, sample an independent set $I$ with probability $\lambda^{|I|}/Z_{G}(\lambda)$, where  
\begin{align*}
    Z_{G}(\lambda) = \sum_{I \subset V \text{ independent}} \lambda^{|I|}
\end{align*}
 is the normalizing constant, a.k.a., the partition function. Exact computation of $Z_{G}(\lambda)$ is \#P-Hard \cite{Val79, Vad95, Gre00} even when the input graphs have special structure \cite{Vad02} and hence, we can only hope for efficient approximation algorithms.

Studying the hardcore model has been pivotal in helping us understand the relationship between phase transitions in statistical physics and phase transitions in efficient approximability. Specifically, has been known since \cite{Kel85} that there is a critical threshold $\lambda_{c}(\Delta) \overset{\defin}{=} \frac{(\Delta-1)^{\Delta-1}}{(\Delta-2)^{\Delta}} \approx \frac{e}{\Delta-2}$ for which the Gibbs distribution is unique on the infinite $\Delta$-regular tree if and only if $\lambda < \lambda_{c}(\Delta)$. The case $\lambda < \lambda_{c}(\Delta)$ exactly corresponds to the regime where the ``influence'' of a vertex $u$ on another vertex $v$ decays exponentially fast in the distance between $u,v$. This is known to physicists as the uniqueness regime for the hardcore model. On the flip side, $\lambda > \lambda_{c}(\Delta)$ exactly corresponds to the regime where long-range correlations persist in the model.

In the seminal work of Weitz \cite{Wei06}, it was shown that for any $\lambda < \lambda_{c}(\Delta)$ and fixed constant $\Delta$, there exists a deterministic fully polynomial time approximation scheme (FPTAS) for estimating $Z_{G}(\lambda)$. Immediately following, a sequence of results \cite{SS14,GGSVY14,GSV15,GSV16} beginning with the seminal work of Sly \cite{Sly10} proved a matching lower bound for the case $\lambda > \lambda_{c}(\Delta)$. There is no fully polynomial randomized approximation scheme (FPRAS) for estimating $Z_{G}(\lambda)$ on graphs of maximum degree $\leq \Delta$ when $\lambda > \lambda_{c}(\Delta)$ unless $\mathsf{NP} = \mathsf{RP}$. This rigorously established the first example where the statistical physics phase transition coincides with a computational complexity phase transition.

Weitz's algorithm is based on the correlation decay framework which was later on developed for estimating partition functions of two state spin systems \cite{LLY12, LLY13, SST14}. More recently, a new framework was established based on Barvinok's polynomial interpolation method \cite{Bar16, Bar17, PR17, PR19} where Weitz's result was re-proved using a different deterministic algorithm which only uses the knowledge of connected subgraphs of $G$ of diameter $O_{\epsilon,\delta}(\log n)$ \cite{PR17}. All of these methods suffer from a quasi-polynomial running time when the input graph has unbounded max-degree. 
Specifically, if $\lambda = (1 - \delta) \lambda_{c}(\Delta)$, then there is a constant $C(\delta)$ such that Weitz's correlation decay algorithm returns a $(1\pm\epsilon)$-multiplicative approximation of $Z_{G}(\lambda)$ in time $O\wrapp{(n/\epsilon)^{C(\delta)\log \Delta}}$. In particular, due to the exponential dependence in $\log \Delta$, Weitz's algorithm does not run in polynomial time for graphs with unbounded maximum degree.
Roughly speaking, the main difficulty is that in order to estimate the partition function, one needs to estimate the marginal probabilities of vertices within $O(1/n)$-error, and to do that one needs to look at $O(\log n)$-depth neighborhood of vertices which leads to a quasi-polynomial number of operations on graphs of max-degree polynomial in $n$.

On the other hand, it is conjectured that the natural Glauber dynamics mixes in polynomial time up to the uniqueness threshold. But to this date after a long line of works \cite{LV97,LV99,DG00,Vig01} this was only shown up to $\frac{2}{\Delta-2}$ for general graphs and up to the uniqueness threshold for special families of graphs \cite{Wei04, Wei06, RSTVY13, EHSVY16}.
We use the result of the previous sections to prove that for {\em any graph} the Glauber dynamics mix in polynomial time up to the uniqueness threshold.


For sampling from the hardcore model, the Glauber dynamics can be described via the following two-step process. To make a transition from an independent set $I$ to another,
\begin{enumerate}
    \item Select a uniformly random vertex $v \in V$.
    \item If $v \in I$, remove $v$ from $I$ with probability $\frac{1}{1 + \lambda}$, and keep it otherwise.
    \item If $v \notin I$ and $v$ is not a neighbor of some $u \in I$, add $v$ to $I$ with probability $\frac{\lambda}{1 + \lambda}$, and leave it otherwise.
\end{enumerate}
It is clear that this process is reversible. It is also clear that this Markov chain is connected, since there is a path from every independent set to the empty independent set $\emptyset$. Hence, these dynamics have a unique stationary distribution $\pi$, and the distribution of the chain converges to stationarity in total variation distance as the number of steps goes to infinity. Finally, by checking the detailed balance condition that the stationary distribution $\pi$ of the Glauber dynamics is exactly the Gibbs distribution $\mu$. Our goal is to bound the $\epsilon$-total variation mixing time of the Glauber dynamics starting from any state $\tau$, which is given by
\begin{align*}
    t_{\tau}(\epsilon) = \min\wrapc{t \in \Z_{\geq0} : \norm{P^{t}(\tau,\cdot) - \pi}_{1}} \leq \epsilon
\end{align*}
where $P$ denotes the transition probability matrix describing the chain. Here, $P^{t}(\tau,\cdot)$ gives the distribution at time $t$ of the chain started at $\tau$.
\begin{theorem}\label{thm:hardcorespectrallyindependent}
There is a function $C:[0,1] \rightarrow \R_{>0}$ such that for every graph $G=(V,E)$ with maximum degree $\leq \Delta$, every $0 < \delta < 1$, and $\lambda = (1 - \delta) \lambda_{c}(\Delta)$, the associated hardcore distribution $\mu$ is $(\eta_{0},\dots,\eta_{n-2})$-spectrally independent where $\eta_{i} \leq \min\left\{C(\delta),\frac{\lambda}{1+\lambda}(n-i-1)\right\}$ for every $0 \leq i \leq n-2$.
\end{theorem}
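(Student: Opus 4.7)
The plan is to reduce the statement to a consequence of the strong spatial mixing property of the hardcore model in the uniqueness regime, using Weitz's self-avoiding walk tree as the bridge.

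First, I would observe that conditioning the hardcore Gibbs measure on any collection of vertices being pinned ``in'' or ``out'' of the independent set yields another hardcore Gibbs measure, on the subgraph obtained by deleting out-pinned vertices and deleting in-pinned vertices together with all their neighbors. This subgraph still has maximum degree $\leq \Delta$ and the same activity $\lambda \leq (1-\delta)\lambda_c(\Delta)$, so it suffices to prove the unconditional bound $\sum_{j \neq i} |\Psi_\mu(i,j)| \leq C(\delta)$ for the hardcore model on an arbitrary such graph. The alternative bound $\eta_i \leq \frac{\lambda}{1+\lambda}(n-i-1)$ is immediate from the pointwise estimate $|\Psi_\mu(i,j)| \leq \max\{\Pr[j \mid i],\,\Pr[j \mid \overline{i}]\} \leq \lambda/(1+\lambda)$, followed by summing over the at most $(n-i-1)$ free coordinates.

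Second, I would transport the question onto the Weitz self-avoiding walk tree $T = T_{\SAW}(G,i)$ rooted at $i$. By Weitz's theorem the marginal of $i$ in $G$ equals the marginal of the root in the hardcore measure on $T$ with the appropriate occupied/vacant boundary conditions, and perturbing the activity $\lambda_j$ at $j \in V$ corresponds to simultaneously perturbing $\lambda_v$ at every copy $v \in T$ of $j$. Combined with the identity $\mathrm{Cov}_\mu(\mathbf{1}[i \in S],\mathbf{1}[j \in S]) = \Pr[i]\Pr[\overline{i}]\,\Psi_\mu(i,j) = \partial \Pr[i]/\partial \log \lambda_j$, this yields the linear decomposition $\Psi_\mu(i,j) = \sum_{v \text{ a copy of } j} \Psi_T(\mathrm{root},v)$ and hence, via the triangle inequality,
$$\sum_{j \neq i} |\Psi_\mu(i,j)| \;\leq\; \sum_{v \in T \setminus \{\mathrm{root}\}} |\Psi_T(\mathrm{root}, v)|.$$

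Third, on the tree I would invoke the contraction analysis from the correlation-decay literature. Writing the hardcore tree recursion $R_r = \lambda/\prod_c (1 + R_c)$ in the Li-Lu-Yin potential $y = \phi(R)$, for $\lambda \leq (1-\delta)\lambda_c(\Delta)$ there exists $\alpha(\delta) < 1$ such that $\sum_c |\partial y_r/\partial y_c| \leq \alpha(\delta)$ at every internal node. Iterating along root-to-descendant paths yields $\sum_{v : \mathrm{depth}(v) = k} |\partial y_{\mathrm{root}}/\partial y_v| \leq \alpha(\delta)^k$. Converting this $\phi$-weighted Jacobian bound into an estimate on $|\Psi_T(\mathrm{root},v)|$, using that on the range $R \in [0,\lambda]$ arising on the tree both $\phi'$ and the map from $R$ to marginal probability are bilipschitz with constants depending only on $\lambda$, and summing the resulting geometric series in $k \geq 1$, yields the desired $\Delta$-free constant $C(\delta)$.

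The main obstacle is exactly this last conversion: the correlation-decay literature states contraction in terms of $\phi$-weighted log-ratio derivatives, because that is where $\ell^1$ contraction holds all the way up to the uniqueness threshold $\lambda_c(\Delta)$, whereas spectral independence demands an $\ell^1$ bound on \emph{raw} marginal differences. Ensuring that the bilipschitz constants involved in the passage from $\phi$-derivatives back to marginals remain independent of $\Delta$, so that the final $C(\delta)$ also is, is the subtle point, and is what forces us to use the specific LLY potential rather than any naive contraction.
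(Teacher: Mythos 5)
Your proposal is correct but takes a genuinely different route from the paper's --- essentially the approach of the follow-up work \cite{CLV20} that the paper itself flags in the remark after \cref{thm:hardcoretotalinfbound}. The paper bounds the column sums $\sum_{u}|\Psi_\mu(u,v)|$ (total influence \emph{on} a vertex), while you bound the row sums $\sum_j|\Psi_\mu(i,j)|$ (total influence \emph{of} the SAW-tree root); either one certifies spectral independence. The more substantive difference is how you transfer the problem to the tree. The paper's \cref{lem:decoupling} is lossy: to distribute the influence of a vertex with many SAW-tree copies over those copies, it uses a telescoping/triangle-inequality argument and must replace each term by the worst-case $R$-pseudoinfluence $\mathcal{R}_r^v=\max_p|R_r^{v^0,p}-R_r^{v^1,p}|$, and that inner max over boundary conditions then forces the $(1+2\eta)^{\Delta+1}$ loss in \cref{lem:truetoideal}, which is ultimately why the paper's $C(\delta)$ degrades to $\exp(O(1/\delta))$. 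Your decomposition $\Psi_\mu(i,j)=\sum_{v\in C(j)}\Psi_T(\mathrm{root},v)$ is an \emph{exact} identity (it follows from $\Psi_\mu(i,j)=\partial\log R_{G,i}/\partial\log\lambda_j$, the multivariate form of Weitz's theorem, and the chain rule) and is evaluated only at the unperturbed activities, so no boundary max appears; chaining $\|\nabla(\varphi\circ F\circ\varphi^{-1})\|_1\le\sqrt{1-\delta}$ directly then yields $C(\delta)=O(1/\delta)$, strictly better than the paper. One imprecision worth fixing in your last paragraph: $\phi'(R)=1/\sqrt{R(R+1)}$ is \emph{not} bilipschitz on $[0,\lambda]$ --- it diverges at $R=0$, and some $R_v$ along the recursion may be arbitrarily small. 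What is actually bounded is the full conversion coefficient $\frac{R_v\,\phi'(R_v)}{R_{\mathrm{root}}\,\phi'(R_{\mathrm{root}})}=\sqrt{\frac{R_v/(R_v+1)}{R_{\mathrm{root}}/(R_{\mathrm{root}}+1)}}\le\sqrt{\frac{(1+\lambda)^\Delta+\lambda}{1+\lambda}}$, which does involve $\Delta$ through the lower bound $R_{\mathrm{root}}\ge\lambda/(1+\lambda)^\Delta$, and becomes an absolute constant only after invoking $\lambda\Delta=O(1)$ in the uniqueness regime. You correctly identify this as the delicate step, but the phrase ``bilipschitz with constants depending only on $\lambda$'' would not survive a referee and should be replaced by the computation above.
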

Combined with \cref{thm:spectralindmixing}, we obtain fast mixing for the Glauber dynamics for sampling independent sets according to the hardcore distribution whenever $\lambda < \lambda_{c}(\Delta)$ (for the precise mixing time, see \cref{rem:mixingtime}). 
\begin{corollary}\label{cor:fprasunboundeddegree}
For every $\delta > 0$, there exists a fully polynomial randomized approximation scheme for estimating $Z_{G}(\lambda)$ at $\lambda = (1 - \delta) \lambda_{c}(\Delta)$ on any graph $G$ with maximum degree $\leq \Delta$.
\end{corollary}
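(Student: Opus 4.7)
The plan is to apply the standard reduction from approximate counting to approximate sampling via self-reducibility, supplying the required approximate samples from the hardcore Gibbs distribution by Glauber dynamics, whose mixing time follows from combining \cref{thm:spectralindmixing} and \cref{thm:hardcorespectrallyindependent}. Using $\eta_i \leq C(\delta)$ for small $i$ together with the refined bound $\eta_i \leq \frac{\lambda}{1+\lambda}(n-i-1)$ for the top $O_\delta(1)$ levels, \cref{thm:spectralindmixing} gives spectral gap
\[
\mathrm{gap}(P_n^\vee) \;\geq\; \frac{1}{n}\prod_{i=0}^{n-2}\left(1 - \frac{\eta_i}{n-i-1}\right) \;\geq\; n^{-O_\delta(1)},
\]
because the $O_\delta(1)$ ``bad'' top levels contribute a factor of $(1+\lambda)^{-O_\delta(1)}$ and the remaining product is at least $\exp\!\left(-\sum_{k \geq K} C(\delta)/k\right) = n^{-O_\delta(1)}$. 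Standard conversion from spectral gap to total variation mixing time then gives an $\epsilon$-TV mixing time of $\mathrm{poly}(n) \cdot \log(1/\epsilon)$ for Glauber dynamics on $G$ \emph{or any induced subgraph at the same $\lambda$}, since $\lambda_c(\Delta')$ is nonincreasing in $\Delta'$ and hence every induced subgraph remains in the uniqueness regime.

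Second, I would set up the classical self-reducibility telescoping. Fix any ordering $v_1, \ldots, v_n$ of $V(G)$, let $G_i = G[\{v_1, \ldots, v_i\}]$, and set $G_0 = \emptyset$. From the identity $Z_{G_i}(\lambda) = Z_{G_{i-1}}(\lambda) + \lambda \cdot Z_{G_{i-1} \setminus N(v_i)}(\lambda)$ one obtains
\[
Z_G(\lambda) \;=\; \prod_{i=1}^{n}\frac{Z_{G_i}(\lambda)}{Z_{G_{i-1}}(\lambda)} \;=\; \prod_{i=1}^{n}\frac{1}{\Pr_{I\sim\mu_{G_i}}[v_i \notin I]},
\]
where $\mu_{G_i}$ denotes the hardcore distribution on $G_i$ at fugacity $\lambda$. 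Each marginal $\Pr_{\mu_{G_i}}[v_i \notin I]$ lies in $\left[\frac{1}{1+\lambda}, 1\right]$ because removing $v_i$ from any independent set multiplies its weight by at most $\lambda$, so each factor of the telescoping product is bounded and bounded away from zero.

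Third, I would estimate each of these $n$ marginals to relative error $\epsilon/(3n)$ by drawing $\widetilde{O}((n/\epsilon)^2)$ approximate samples from $\mu_{G_i}$, each produced by running Glauber dynamics long enough to reach TV distance $\widetilde{O}(\epsilon/n^2)$ from $\mu_{G_i}$; because each marginal is bounded below by $1/(1+\lambda)$, a Chernoff bound controls the empirical error, and a union bound over the $n$ marginals together with multiplication of the estimates yields a $(1 \pm \epsilon)$-approximation of $Z_G(\lambda)$ in time polynomial in $n$, $\Delta$, and $1/\epsilon$. The main technical content has already been packaged into \cref{thm:hardcorespectrallyindependent}; the only obstacle here is the standard bookkeeping that balances the sampler's TV error against the Chernoff error of each ratio estimator, which is routine in the Jerrum--Valiant--Vazirani reduction from counting to sampling.
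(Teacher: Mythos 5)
Your proposal is correct and follows exactly the route the paper intends (but leaves implicit): combine \cref{thm:spectralindmixing} with \cref{thm:hardcorespectrallyindependent} to get a $n^{-O_\delta(1)}$ spectral gap and hence a $\mathrm{poly}(n)\log(1/\epsilon)$ mixing time from any starting state (via \cref{thm:mixingtime} applied at the empty independent set), then invoke the standard Jerrum--Valiant--Vazirani self-reducibility telescoping to convert the approximate sampler into an approximate counter. The computation of the gap, splitting the levels into a constant ($O_\delta(1)$) number of ``bad'' top levels handled by the $\eta_i \leq \frac{\lambda}{1+\lambda}(n-i-1)$ bound and the rest handled by $\eta_i \leq C(\delta)$, matches what underlies \cref{rem:mixingtime}. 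Two small remarks: (i) to apply the mixing bound on each prefix graph $G_i$ you need not appeal to monotonicity of $\lambda_c$ in $\Delta$ --- each $G_i$ already has maximum degree $\leq \Delta$ and the same fugacity $\lambda = (1-\delta)\lambda_c(\Delta)$, so \cref{thm:hardcorespectrallyindependent} applies verbatim; (ii) the final running time is in fact polynomial in $n$ and $1/\epsilon$ with no dependence on $\Delta$ in the exponent (since $\lambda \leq \lambda_c(\Delta) \leq 4$), which is the salient improvement over Weitz's method that the paper highlights in \cref{rem:mixingtime}, so stating the bound as ``polynomial in $n$, $\Delta$, and $1/\epsilon$'' understates what you have actually established.
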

\begin{remark}\label{rem:mixingtime}
For $0 < \delta < 1$, $\lambda = (1 - \delta) \lambda_{c}(\Delta)$ and a graph $G=(V,E)$  maximum degree $\Delta$, the Glauber dynamics from any starting state $\tau$ has mixing time
\begin{align*}
    t_{\tau}(\epsilon) \leq O\wrapp{((1 + \lambda) \cdot n)^{1 + C(\delta)} \cdot \log\wrapp{\frac{1}{\epsilon \cdot \mu(\tau)}}}
\end{align*}
 To be explicit, the constant $C(\delta)$ obeys the bound
\begin{align*}
    C(\delta) \leq \exp(O(1/\delta))
\end{align*}

Note $\mu(\emptyset) = \frac{1}{Z_{G}(\lambda)} \geq \frac{1}{(1 + \lambda)^{n}}$ so that
\begin{align*}
    t_{\emptyset}(\epsilon) \leq O\wrapp{((1 + \lambda) \cdot n)^{1 + C(\delta)} \cdot \log\wrapp{\frac{1}{\epsilon \cdot \mu(\emptyset)}}} \leq O\wrapp{n^{2 + C(\delta)} \cdot \log\wrapp{\frac{1}{\epsilon}}}
\end{align*}
The key advantage of our result is that the running time has no dependence on $\Delta$. Furthermore, $\lambda_{c}(\Delta) \leq 4$ for all $\Delta \geq 3$ so we may treat $\lambda$ as bounded above by a constant. Hence, only the gap parameter $\delta$ matters.
\end{remark}

\subsection{Related Works}
The question of building deterministic approximation algorithms for estimating $Z_{G}(\lambda)$ on bounded degree graphs has been settled. \cite{Wei06} proved that there is an FPTAS on graphs of maximum degree $\leq\Delta$ whenever $\lambda < \lambda_{c}(\Delta)$. \cite{HSV18} extends this result to estimating the multivariate independence polynomial, and \cite{PR19} proves the existence of a zero-free region around $[0,\lambda_{c}(\Delta))$, which makes Barvinok's polynomial interpolation technique \cite{Bar17} applicable to estimating $Z_{G}(\lambda)$; see \cite{PR17}. We note that the Bethe approximation for estimating $Z_{G}(\lambda)$ has also been studied in \cite{CCGSS11}.

For studying the mixing time of the Glauber dynamics in the uniqueness regime, there has been a long line of work starting with \cite{LV97, LV99, DG00, Vig01}. For general graphs, the state-of-the-art was given by \cite{Vig01}, which showed the Glauber dynamics mixes in $O(n\log n)$ steps when $\lambda < \frac{2}{\Delta-2}$. A more recent result of \cite{EHSVY16} shows that for any $0 < \delta < 1$, there is a $\Delta_{0}(\delta)$ such that for any $\Delta \geq \Delta_{0}(\delta)$ and $\lambda = (1 - \delta)\lambda_{c}(\Delta)$, the Glauber dynamics mixes in $O(n\log n)$ steps for graphs with maximum degree $\Delta$ and girth $\geq 7$.

More is known for restricted families of graphs. The hardcore distribution over independent sets of the line graph $L(G)$ of a graph $G$ is equivalent to the monomer-dimer distribution over matchings of $G$ itself. Here, the Glauber dynamics is known to mix in time $O(\lambda^{3}mn^{2}\log n)$ time \cite{JS89b}. \cite{BGKNT07, SSSY15} give deterministic FPTAS for this problem in the full range of $\lambda$ on bounded-degree graphs. It is proved in \cite{Wei06} that the Glauber dynamics mixes in $O(n^{2})$ steps for any $\lambda < \lambda_{c}(\Delta)$ when the input graph has maximum degree $\leq \Delta$ and satisfies subexponential growth. This encompasses, for instance, the integer lattices $\Z^{d}$. On such lattices, there is an equivalence between strong spatial mixing and optimal mixing of the Glauber dynamics \cite{DSVW02, Wei04}. \cite{MSW03, MSW04, Wei04} obtained rapid mixing for trees, and \cite{Hay06} obtained rapid mixing for planar graphs. For graphs of large girth, \cite{HV05} studies the mixing time of the Glauber dynamics and \cite{BG08} studies deterministic correlation decay algorithms. In the case of the square grid $\Z^{2}$, we have a more precise understanding \cite{VVY13, RSTVY13, BGRT13, BCGRT19}. \cite{MS08, MS13, SSY13, SSSY15} study the case of $G(n,d/n)$ random graphs, or more generally graphs with bounded connective constant.

On the hardness side, exact computation of $Z_{G}(\lambda)$ is known to be \#P-Hard \cite{Val79, Vad95, Gre00}, even for very restricted families of graphs \cite{Vad02}. For hardness of approximation, \cite{LV97} showed there exists a constant $c > 0$ such that there is no FPRAS for estimating $Z_{G}(1)$ when $\lambda > c/\Delta$ unless $\mathsf{NP} = \mathsf{RP}$. For the case of evaluating $Z_{G}(1)$, this was improved in \cite{DFJ02}, which showed that there is no FPRAS for estimating $Z_{G}(1)$ on graphs with maximum degree exceeding $25$ unless $\mathsf{NP} = \mathsf{RP}$. \cite{DFJ02} further showed that the Glauber dynamics has exponential mixing time for $\Delta \geq 6$. \cite{MWW07} provided further evidence the Markov chain techniques are likely to fail for sampling from the Gibbs distribution when $\lambda > \lambda_{c}(\Delta)$. These results were dramatically improved in the work of \cite{Sly10} (and further refined by follow-up works \cite{SS14,GGSVY14,GSV15,GSV16}), which showed that unless $\mathsf{NP} = \mathsf{RP}$, there is no FPRAS for estimating $Z_{G}(\lambda)$ on graphs of maximum degree $\leq\Delta$ when $\lambda > \lambda_{c}(\Delta)$.

There are also many works extending results for the hardcore model to general antiferromagnetic two-state spin systems. For antiferromagnetic Ising models in the uniqueness regime, there are FPTAS based on both correlation decay \cite{SST14} (see also \cite{BLZ11} for a special case) and polynomial interpolation \cite{LSS19, Liu19, SS19}. Combined with algorithms for the hardcore model, these give FPTAS for all antiferromagnetic two-state spin systems via reductions described for instance in \cite{SST14}. In a more direct fashion, \cite{LLY12, LLY13} give deterministic correlation decay algorithms for all antiferromagnetic two-state spin systems up to the uniqueness threshold. \cite{GJP03} analyze the corresponding Glauber dynamics via the path coupling method, but do not obtain rapid mixing in entire uniqueness regime. \cite{GJP03} provide hardness of approximation for a certain range of edge activities $\beta,\gamma$. \cite{SS14, GGSVY14, GSV15, GSV16} extend these hardness results to all antiferromagnetic two-state spin systems in the nonuniqueness regime.

\subsection{Relation to Existing Definitions of Influence}\label{subsec:correlationinfluencematrices}
Fix a distribution $\mu$ on subsets of a ground set $[n]$. Recall that our pairwise influence matrix of $\mu$ is described by
\begin{align*}
    \Psi_{\mu}(i,j) =  \Pr[j \mid i] - \Pr[j \mid \overline{i}]
\end{align*}
for all $i,j$, with zeros on the diagonal. For spin systems, this definition is reminiscent but different from the influence matrix method used in \cite{Hay06, DGJ09} and the works \cite{Dob70, DS85i, DS85ii, DS87} on variants of the Dobrushin condition. Specifically, the $(i,j)$th entry of the Dobrushin influence matrix (also known as the matrix of ``dependencies''/``interdependencies'') considered in these prior works is given by the maximum absolute difference $\abs{\Pr[j \mid i, \tau] - \Pr[j \mid \overline{i},\tau]}$ over all partial assignments $\tau$ of the remaining ground elements excluding $i,j$. In the case of the hardcore distribution for an input graph $G=(V,E)$ with fugacity $\lambda > 0$, this influence matrix is exactly $\frac{\lambda}{1+\lambda}A$, where $A$ is the adjacency matrix of $G$ (see, for instance, \cite{Hay06}).

On the other hand, our pairwise influence matrix $\Psi_{\mu}$ may have nonzero entries for $u,v \in V$ not connected by an edge, since $\Psi_{\mu}(u,v)$ considers the marginal of $v$ conditioned only on $u$ or $\overline{u}$, with the assignment for other elements left undetermined. Furthermore, our method requires understanding exponentially many pairwise influence matrices, one for each conditional distribution, while all variants of the Dobrushin condition only require analyzing a single Dobrushin influence matrix.

\subsection{Subsequent Works}
Finally, we mention several follow-up works applying and extending the notion of spectral independence we introduce this paper. The first is the work by \cite{CLV20}, where they obtained rapid mixing of the Glauber dynamics for all two-state spin systems in the correlation decay regime. \cite{CGSV20, FGYZ20} extended our notion of spectral independence to multi-state spin systems, including $q$-colorings of a graph. The extension of spectral independence in \cite{FGYZ20} is more reminiscent of the Dobrushin uniqueness condition \cite{Dob70, DS85i, DS85ii, DS87} discussed in \cref{subsec:correlationinfluencematrices}, while the extension in \cite{CGSV20} is precisely the observation in \cref{rem:partitegeneralization}. Both approaches were applied to obtain rapid mixing of the Glauber dynamics for sampling $q$-colorings on triangle-free graphs of maximum degree $\Delta$ when $q > \alpha^{*}\Delta$, where $\alpha^{*} \approx 1.763$ is the unique solution to the equation $xe^{-1/x} = 1$.

\subsection{Proof Overview}
For a face $\sigma$ of $X^\mu$, recall $P_{\sigma}$ denotes the transition probability matrix of the simple random walk on the $1$-skeleton of $X^{\mu}_\sigma$. Our first technical contribution is the following.
\begin{theorem}\label{thm:correlationspecradius}
For every distribution $\mu$ over subsets of a ground set $[n]$, the eigenvalues of $\Psi_{\mu}$ are real. Furthermore, we have the identity $\lambda_{2}(P_{\emptyset}) = \frac{1}{n-1} \cdot \lambda_{\max}(\Psi_{\mu})$.
\end{theorem}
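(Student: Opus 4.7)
The plan is to diagonalize $P_\emptyset$ by splitting the state space of the $1$-skeleton of $X^\mu_\emptyset = X^\mu$ into two explicit $P_\emptyset$-invariant subspaces, one of which is isospectral to $\frac{1}{n-1}\Psi_\mu$. First I would spell out the $1$-skeleton: the vertex set is $\{i,\bar i : i \in [n]\}$, and for $a \in U_i$, $b \in U_j$ with $i \neq j$ the edge weight is $w(\{a,b\}) = \Pr_{S \sim \mu}[\sigma_S \supset \{a,b\}]$, giving induced vertex weights $w(\{i\}) = \Pr[i]$, $w(\{\bar i\}) = \Pr[\bar i]$ and weighted degree $(n-1)w(\{a\})$ at every vertex $a$. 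In particular $P_\emptyset$ is reversible with stationary distribution $\pi(a) = w(\{a\})/n$, so its eigenvalues are real. (If some $\Pr[i]$ or $\Pr[\bar i]$ vanishes, restrict to the support, or adopt the convention that the corresponding row of $\Psi_\mu$ is zero.) Since $\Psi_\mu$ is not symmetric in general, reality of its spectrum has genuine content and will be a byproduct of the reduction below.

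The two invariant subspaces are the \emph{part-constant} subspace $C = \{f : f(i) = f(\bar i) \text{ for all } i\}$ and its $\pi$-orthogonal complement $C^\perp = \{f : \Pr[i]f(i) + \Pr[\bar i]f(\bar i) = 0 \text{ for all } i\}$. On $C$, writing $f(i) = f(\bar i) = c_i$ and using $\Pr[i\cap j] + \Pr[i\cap\bar j] = \Pr[i]$, a direct calculation gives $(P_\emptyset f)(i) = (P_\emptyset f)(\bar i) = \frac{1}{n-1}\sum_{j \neq i} c_j$, so $P_\emptyset|_C$ acts on $\R^n$ as $\frac{1}{n-1}(J - I)$, with spectrum $\{1\}$ and $-\frac{1}{n-1}$ of multiplicity $n-1$. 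On $C^\perp$, I would parameterize $f$ by $g \in \R^n$ via $f(i) = \Pr[\bar i]g(i)$, $f(\bar i) = -\Pr[i]g(i)$, which automatically enforces the orthogonality constraint. The key identity is
\[
\Pr[i \cap j]\Pr[\bar j] - \Pr[i \cap \bar j]\Pr[j] = \Pr[j]\Pr[\bar j]\bigl(\Pr[i \mid j] - \Pr[i \mid \bar j]\bigr) = \Pr[j]\Pr[\bar j]\,\Psi_\mu(j,i),
\]
from which a bookkeeping computation yields $(P_\emptyset f)(i) = \Pr[\bar i]g'(i)$ and $(P_\emptyset f)(\bar i) = -\Pr[i]g'(i)$ with
\[
g'(i) = \frac{1}{n-1}\sum_{j} \frac{\Pr[j]\Pr[\bar j]}{\Pr[i]\Pr[\bar i]}\,\Psi_\mu(j,i)\,g(j).
\]
Hence $C^\perp$ is $P_\emptyset$-invariant, and in the $g$-basis $P_\emptyset|_{C^\perp} = \frac{1}{n-1}D^{-1}\Psi_\mu^\top D$, where $D = \mathrm{diag}(\Pr[i]\Pr[\bar i])$. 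This is similar to $\frac{1}{n-1}\Psi_\mu^\top$, so it has exactly the same spectrum as $\frac{1}{n-1}\Psi_\mu$.

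Combining the two pieces, the full spectrum of $P_\emptyset$ is $1$, together with $-\frac{1}{n-1}$ of multiplicity $n-1$, together with $\frac{1}{n-1}\lambda_k(\Psi_\mu)$ for the $n$ eigenvalues $\lambda_k$ of $\Psi_\mu$, accounting for all $2n$ eigenvalues. Since the eigenvalues of $P_\emptyset$ are real by reversibility, so are those of $\Psi_\mu$. Finally, $\mathrm{tr}(\Psi_\mu) = 0$ (zero diagonal) forces $\lambda_{\max}(\Psi_\mu) \geq 0$, so $\frac{1}{n-1}\lambda_{\max}(\Psi_\mu) \geq 0 > -\frac{1}{n-1}$, which gives the identity $\lambda_2(P_\emptyset) = \frac{1}{n-1}\lambda_{\max}(\Psi_\mu)$. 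The main obstacle will be guessing the correct parameterization of $C^\perp$; the weights $\Pr[\bar i], -\Pr[i]$ are chosen precisely to convert the marginal identity above into a similarity with $\Psi_\mu^\top$, and once this is in hand everything else is routine.
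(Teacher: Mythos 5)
Your proof is correct, and it takes a route that is related to but genuinely different from the paper's. You decompose $\R^{2n}$ into the part-constant subspace $C$ (spanned by $\mathbf{1}^i = e_i + e_{\bar i}$) and its $\pi$-orthogonal complement $C^\perp$, verify both are $P_\emptyset$-invariant, compute $P_\emptyset|_C$ directly as $\frac{1}{n-1}(J-I)$, and exhibit an explicit diagonal similarity $P_\emptyset|_{C^\perp} = \frac{1}{n-1}D^{-1}\Psi_\mu^\top D$ in the $g$-basis. The paper instead constructs a single rank-$n$ modification $M_\emptyset = P_\emptyset - \frac{n}{n-1}\mathbf{1}\pi^\top + \frac{n}{n-1}\sum_i \mathbf{1}^i(\pi^i)^\top$ that zeroes out the trivial eigenvalues, then reads off a block form $\begin{bmatrix} A & -A \\ B & -B \end{bmatrix}$ with $A-B = \frac{1}{n-1}\Psi_\mu$ and uses a block-determinant identity to extract the nontrivial spectrum. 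Both proofs hinge on the same observation — that the $n$-partite structure of $X^\mu$ forces $\operatorname{span}\{\mathbf{1}^i\}$ to carry the trivial eigenvalues $1$ and $-\frac{1}{n-1}$ — but you realize the nontrivial part as a restricted operator on $C^\perp$ with explicit eigenvectors (via the ansatz $f(i)=\Pr[\bar i]g(i)$, $f(\bar i)=-\Pr[i]g(i)$), whereas the paper compares characteristic polynomials. Your version buys explicit eigenvectors and a cleaner conceptual picture; the paper's is more mechanical and sidesteps any need to invert $\Pr[i]\Pr[\bar i]$ (your approach tacitly assumes $0<\Pr[i]<1$ for all $i$, which you correctly flag). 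You also supply the final step (trace of $\Psi_\mu$ is zero, hence $\lambda_{\max}(\Psi_\mu)\ge 0 > -1$) that pins $\lambda_2(P_\emptyset)$ to $\frac{1}{n-1}\lambda_{\max}(\Psi_\mu)$ rather than $-\frac{1}{n-1}$; the paper leaves this implicit.
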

Given this, we may now prove \cref{thm:spectralindimpliesexpansion}.
\begin{proof}[Proof of \cref{thm:spectralindimpliesexpansion}]
Since \cref{thm:correlationspecradius} holds for any distribution $\mu$, it in particular holds for all conditional distributions of $\mu$. Now, observe that conditioning on an element $i$ being ``in'' corresponds exactly to taking the link of $X^{\mu}$ w.r.t. $i$. Similarly, conditioning on an element $i$ being ``out'' corresponds exactly to taking the link of $X^{\mu}$ w.r.t. $\overline{i}$. The result then follows by definition of spectral independence and local spectral expansion.
\end{proof}

The proof of \cref{thm:correlationspecradius} hinges on the observation that for each element $i \in [n]$, no face of $X^{\mu}$ can contain both $i$ and $\overline{i}$. In particular, there is no edge connecting $i$ and $\overline{i}$ in the $1$-skeleton of $X^{\mu}$, for each $i$. Thus, there are $n$ parts, one corresponding to each element of $[n]$, such that all edges only go between parts. This \textit{$n$-partite} structure of the $1$-skeleton of $X^{\mu}$ induces $n-1$ additional ``trivial'' eigenvalues, besides the trivial eigenvalue of $1$, in the transition matrix $P_{\emptyset}$. This is, in fact, a generalization of the fact that the transition matrix of a bipartite graph always also has eigenvalue $-1$. We show that $\Psi_{\mu}$ is essentially equal to $P_{\emptyset}$ projected away from these $n$ trivial eigenvalues; see \cref{claim:PemptysetMemptyset} and \cref{claim:MemptysetPsi} in \cref{sec:correlationspecradius} for more details.


We apply these results to the hardcore distribution over independent sets of an input graph $G=(V,E)$. \cref{thm:correlationspecradius} tells us that to bound $\lambda_{2}(P_{\emptyset})$, it suffices to bound $\lambda_{\max}(\Psi_{\mu})$. We show how to bound $\lambda_{\max}(\Psi_{\mu})$ by bounding $\sum_{u \in V : u \neq v} \abs{\Psi_{\mu}(u,v)}$ for any vertex $v \in V$. In particular, we have the following two bounds.
\begin{lemma}\label{lem:totalinfboundlambda}
Consider the hardcore distribution $\mu$ on independent sets of a graph $G=(V,E)$ on $n$ vertices. Then for every $v \in V$, and every $\lambda > 0$, we have the bound
\begin{align*}
    \sum_{u \in V : u \neq v} \abs{\Psi_{\mu}(u,v)} \leq \frac{\lambda}{1 + \lambda} \cdot (n-1)
\end{align*}
\end{lemma}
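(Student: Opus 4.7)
The plan is to reduce the lemma to the elementary marginal bound $\Pr[v \in I] \leq \frac{\lambda}{1+\lambda}$, and then apply this bound to both conditional hardcore distributions appearing in $\Psi_\mu(u,v)$.

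First, I would establish the marginal bound: for any graph $G$ and any vertex $v$, the hardcore distribution $\mu$ satisfies $\Pr[v \in I] \leq \frac{\lambda}{1+\lambda}$. This follows from the standard computation
\begin{align*}
    \Pr[v \in I] = \frac{\lambda \cdot Z_{G \smallsetminus N[v]}(\lambda)}{Z_G(\lambda)},
\end{align*}
together with the observation that $Z_G(\lambda) \geq Z_{G \smallsetminus \{v\}}(\lambda) \geq Z_{G \smallsetminus N[v]}(\lambda)$, since every independent set of $G \smallsetminus N[v]$ is also an independent set of $G \smallsetminus \{v\}$ and thus of $G$. Plugging this in yields $\Pr[v \in I] \leq \frac{\lambda}{1+\lambda}$.

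Next, I would verify that the hardcore structure is preserved under the conditionings appearing in $\Psi_\mu(u,v)$. Conditioning on $u \in I$ gives the hardcore distribution on $G \smallsetminus N[u]$ (with $u$ and its neighbors forced out/in in the trivial way), while conditioning on $u \notin I$ gives the hardcore distribution on $G \smallsetminus \{u\}$; both are themselves hardcore distributions with the same fugacity $\lambda$. Applying the marginal bound to each, we get
\begin{align*}
    0 \leq \Pr[v \mid u], \Pr[v \mid \overline{u}] \leq \frac{\lambda}{1+\lambda},
\end{align*}
where the case $u \in N(v)$ just gives $\Pr[v \mid u] = 0$, still within the bound. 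Consequently
\begin{align*}
    |\Psi_\mu(u,v)| = |\Pr[v \mid u] - \Pr[v \mid \overline{u}]| \leq \frac{\lambda}{1+\lambda}.
\end{align*}

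Summing this uniform per-entry bound over the $n-1$ vertices $u \neq v$ yields the desired inequality. There is no serious obstacle here: the only thing to be careful about is that the conditional measures really are hardcore on the appropriate induced subgraph, which is a direct check using $\mu(I) \propto \lambda^{|I|}$ restricted to the event $\{u \in I\}$ or $\{u \notin I\}$. The lemma is presumably stated because it is the ``trivial'' bound that becomes binding in the small-$n$ regime, complementing the $\Delta$-dependent bound used elsewhere to obtain the $\min$ in \cref{thm:hardcorespectrallyindependent}.
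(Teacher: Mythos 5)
Your proposal is correct and follows essentially the same route as the paper: bound each conditional marginal $\Pr[v \mid u]$ and $\Pr[v \mid \overline{u}]$ by $\frac{\lambda}{1+\lambda}$, conclude $|\Psi_\mu(u,v)| \leq \frac{\lambda}{1+\lambda}$ for each $u \neq v$, and sum. You simply spell out the partition-function justification for the marginal bound and the self-reducibility of the hardcore measure, both of which the paper leaves implicit.
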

\begin{proof}
Observe that the maximum probability that a given vertex is placed in a random independent set is at most $\frac{\lambda}{1 + \lambda}$. In particular, $\Pr[v \mid u], \Pr[v \mid \overline{u}] \in \wrapb{0, \frac{\lambda}{1 + \lambda}}$ so that $\abs{\Psi_{\mu}(u,v)} \leq \frac{\lambda}{1 + \lambda}$ for every $u \neq v$. The claim follows.
\end{proof}
\begin{theorem}\label{thm:hardcoretotalinfbound}
There exists a function $C : [0,1] \rightarrow \R_{>0}$ such that for every graph $G=(V,E)$ of maximum degree $\leq \Delta$, every vertex $v \in V$, every $0 < \delta < 1$, and $\lambda = (1 - \delta)\lambda_{c}(\Delta)$, we have the following bound, 
\begin{align*}
    \sum_{u \in V : u \neq v} \abs{\Psi_{\mu}(u,v)} \leq C(\delta)
\end{align*}
To be explicit, $C(\delta)$ satisfies $C(\delta) \leq \exp(O(1/\delta))$.
\end{theorem}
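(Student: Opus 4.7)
The plan is to combine Weitz's self-avoiding walk tree reduction with the potential-function contraction argument from the correlation decay literature. Fix $v \in V$ and let $T = T_{\SAW}(G,v)$ be the self-avoiding walk tree rooted at $v$ constructed in \cite{Wei06}, so that the hardcore marginal of $v$ in $G$ equals the hardcore marginal of the root of $T$ (with Weitz's boundary pinning on back-edges). Applying Weitz's reduction also to the conditional distributions $\mu \mid u$ and $\mu \mid \overline{u}$ amounts to pinning every copy of $u$ in $T$ to \textsf{in}, respectively \textsf{out}. Writing $T_u$ for the set of copies of $u$ inside $T$ and $I(u')$ for the absolute change in the root marginal induced by flipping the pin at a single $u' \in T$, a telescoping argument over $T_u$ gives $|\Psi_{\mu}(u,v)| \leq \sum_{u' \in T_u} |I(u')|$. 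Since the non-root vertices of $T$ partition as $\bigsqcup_{u \neq v} T_u$, summing yields
$$\sum_{u \neq v} |\Psi_{\mu}(u,v)| \leq \sum_{u' \in T \smallsetminus \{\mathrm{root}\}} |I(u')|.$$

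Next, I would control the right-hand side by potential-function contraction on $T$. For each vertex $u'$ of $T$ define the hardcore ratio $R_{u'} = \Pr[u' \text{ in}]/\Pr[u' \text{ out}]$ in the subtree rooted at $u'$, satisfying $R_{u'} = \lambda \prod_{c} \frac{1}{1 + R_c}$ over the children $c$ of $u'$. Using the standard Li-Lu-Yin potential $\phi(R) = 2\sinh^{-1}(\sqrt{R})$, with $\phi'(R) = 1/\sqrt{R(1+R)}$, and writing $\Phi(u') = \phi(R_{u'})$, the known contraction calculation \cite{LLY13, RSTVY13} shows that whenever $\lambda = (1-\delta)\lambda_c(\Delta)$, every internal $u'$ with children $c_1, \dots, c_k$ (with $k \leq \Delta$) satisfies
$$\sum_{i=1}^{k} \left| \frac{\partial \Phi(u')}{\partial \Phi(c_i)} \right| \leq \kappa(\delta) \leq 1 - \Omega(\delta).$$
Iterating via the chain rule along root-to-descendant paths gives
$$\sum_{u' \text{ at depth } k} \left| \frac{\partial \Phi(\mathrm{root})}{\partial \Phi(u')} \right| \leq \kappa(\delta)^k.$$

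Finally, I would convert this potentialized bound into one on the raw influences $I(u')$ by multiplying along each root-to-$u'$ path by the appropriate ratios of $\phi'$; because $R_{u'} \in [0, \lambda]$ with $\lambda \leq \lambda_c(\Delta) \leq 4$ and because the ratios produced under valid pinnings stay in an interval on which the aggregate $\phi'$-conversion can be bounded by $\exp(O(1/\delta))$, summing the resulting geometric series in the depth $k$ yields
$$\sum_{u' \in T \smallsetminus \{\mathrm{root}\}} |I(u')| \leq \exp(O(1/\delta)) \cdot \sum_{k \geq 1} \kappa(\delta)^k \leq \exp(O(1/\delta)),$$
which is the advertised bound. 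The main obstacle will be this last conversion step: whenever a neighbor of $u'$ is pinned \textsf{in}, the ratio $R_{u'}$ can be pushed arbitrarily close to $0$ where $\phi'$ diverges, and controlling the aggregate effect over the whole tree requires either a careful telescoping of the $\phi'$ factors along each root-to-leaf path (so that only the endpoints contribute) or a refined variant of the potential tailored to the hardcore recursion; once this is handled, the Weitz reduction and contraction combine cleanly to deliver the stated $C(\delta) \leq \exp(O(1/\delta))$.
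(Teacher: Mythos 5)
Your proposal follows essentially the same route as the paper---Weitz's self-avoiding walk tree, decoupling the influence of $u$ across its copies in $T$, the Li--Lu--Yin potential $\varphi(R)=2\log(\sqrt R+\sqrt{R+1})$ (your $2\sinh^{-1}\sqrt R$ is the same function), and per-level contraction. The outline is sound, but you do not close the step you yourself flag as the crux: converting the $\varphi$-pseudoinfluence decay back into a bound on the raw ratio-influences. ``Once this is handled'' is a genuine gap, and your suggested fix---that the $\varphi'$ factors telescope so only endpoints contribute---is the right intuition but needs to be carried out before the argument is complete.

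To fill the gap, the paper does three things your sketch should make explicit. First, the $\varphi'$-conversion factors do telescope, leaving a single ratio of the form $\Phi(R^{\min}(\ell_0))/\Phi(R^{\max}(\ell))$; the identity $\Phi(R_0)/\Phi(R_1)=\sqrt{R_1(R_1+1)/(R_0(R_0+1))}\le R_1/R_0$ for $R_0\le R_1$, together with the hardcore-specific facts $R^{\max}(2)/R^{\min}(2)=(1+\lambda)^{\Delta}$ and $\lambda\le O(1/\Delta)$, shows this ratio is only $O(1)$---so it is \emph{not} the source of the $\exp(O(1/\delta))$ you attribute to it. Second, the ideal per-level contraction $\sum_i\lvert\partial\Phi(u')/\partial\Phi(c_i)\rvert\le\sqrt{1-\delta}$ is attained when maximizing $\mathbf{R}$ over all of $[0,\infty]$; restricting the maximum to the true window $[\mathbf{R}^{\min},\mathbf{R}^{\max}]$ costs an extra $(1+2\eta)^{\Delta+1}$ per level, which the paper controls by quantitative strong spatial mixing, $\eta\le\eta^{*}\sqrt{1-\delta}^{\,\ell-2}$ with $\eta^{*}\le O(1/\Delta)$, so the product of these factors over all levels stays bounded. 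Third---and this is where the $\exp(O(1/\delta))$ actually comes from---the contraction is only usable once $\eta\le 1/2$, i.e., at depth $\ell_0=\Theta(1/\delta)$; for $\ell\le\ell_0$ one needs a separate, trivial per-level bound of $O(1)$, and the resulting early-level contribution of $O(C^{\ell_0})=\exp(O(1/\delta))$ dominates. Without these quantitative choices, the assertion that ``the aggregate $\varphi'$-conversion can be bounded by $\exp(O(1/\delta))$'' is a hope rather than a proof, and your geometric-series sum does not yet yield the stated bound.
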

\begin{remark}
We believe $C(\delta) \leq O(1/\delta)$ is possible, which we show is tight (see \cref{app:infonregtree}). We leave this as an open problem. We note that in follow-up work, \cite{CLV20} shows that $\sum_{u \in V : u \neq v} \abs{\Psi_{\mu}(v,u)} \leq O(1/\delta)$. In other words, they analyze the total pairwise influence of a vertex, while we analyze the total pairwise influence on a vertex.
\end{remark}

The key here is that we only need to understand the total sum of correlations between pairs of vertices. This is in contrast to strong spatial mixing results, where one has to analyze the correlation of any subset of vertices on another given vertex. 

To prove \cref{thm:hardcoretotalinfbound}, first, we take advantage of the self-avoiding walk tree construction introduced in \cite{Wei06} to reduce to a problem on trees.
Then, we give an method to decouple the influence of a set of vertices $S$ on a vertex $v$ into the sum of the single-vertex pairwise influences of each $u \in S$ on $v$.  The primary takeaway from these two steps is that it suffices to control the total pairwise influence of vertices on the root in any rooted tree of maximum degree $\leq\Delta$.

To control correlations between vertices and the root, we leverage the well-known tree recursion, which expresses the marginal of the root in terms of the marginals of its children. We amortize the total pairwise influence of all vertices at a fixed distance from the root using the potential method \cite{LLY12, LLY13, RSTVY13, SSY13, SST14, SSSY15}; we refer to \cite{Sri14} for further discussion of the potential method. This allows us to show a strong kind of correlation decay, where the total pairwise influence of all vertices at a fixed distance decays as the distance grows. After sharing a preliminary draft of this paper, it was pointed out to us by Eric Vigoda and Zongchen Chen that the notion of correlation decay we prove is very similar to the notion of \textit{aggregate strong spatial mixing} (for trees) studied in \cite{MS13, BCV20}.

Now, observe that \cref{thm:hardcorespectrallyindependent} simply follows from \cref{lem:totalinfboundlambda} and \cref{thm:hardcorespectrallyindependent}.
As a consequence, 
all we are left to do is to prove \cref{thm:correlationspecradius} and \cref{thm:hardcoretotalinfbound}, which we do in the remainder of the paper. 
\subsection{Structure of the Paper}
In \cref{sec:prelim}, we review necessary background in the theory of Markov chains, and correlation decay. In \cref{sec:correlationspecradius}, we prove \cref{thm:correlationspecradius}. In \cref{sec:decoupling} and \cref{sec:potentialmethodratiosinfdecay}, we prove \cref{thm:hardcoretotalinfbound}. 
In \cref{app:infonregtree}, we show that on the infinite $\Delta$-regular tree, the hardcore distribution $\mu$ over independent sets satisfies $\lambda_{\max}(\Psi_{\mu}) = \Theta(1/\delta)$ when $\lambda < (1-\delta)\lambda_{c}(\Delta)$.

\subsection{Acknowledgements}
We thank Vedat Levi Alev and Lap Chi Lau for stimulating discussions, and for sharing with us a preliminary draft of their results \cite{AL20}. We also thank Dorna Abdolazimi, Zongchen Chen, and Chihao Zhang for helpful comments, and for pointing out bugs in earlier versions of this paper. Finally, we thank Eric Vigoda, Lap Chi Lau and the anonymous referees for delivering helpful feedback on a preliminary draft of this paper.

Shayan Oveis Gharan and Kuikui Liu are supported by the NSF grants CCF-1552097 and CCF-1907845, and ONR-YIP grant N00014-17-1-2429.

\section{Preliminaries}\label{sec:prelim}
	First, let us establish some notational conventions. Unless otherwise specified, all logarithms are in base $e$. All vectors are assumed to be column vectors. For two vectors $\phi, \psi\in \R^n$, we use $\langle \phi, \psi\rangle$ to denote the standard Euclidean inner product between $\phi$ and $\psi$. We use $\R_{>0}$ and $\R_{\geqslant 0}$ to denote the set of positive and nonnegative real numbers, respectively, and $[n]$ to denote $\{1,\dots,n\}$. If $\phi \in \R^{n}$, then $\diag(\phi) \in \R^{n \times n}$ denotes the diagonal matrix with the entries of $\phi$ on the diagonal.

\subsection{Linear Algebra}
    Throughout, if $A$ is a matrix with real eigenvalues, then we write $\lambda_{n}(A) \leq \dots \leq \lambda_{1}(A)$ for its eigenvalues.
    \begin{lemma}\label{lem:maxabsrowcolsum}
    For a square matrix $A$, which need not have real eigenvalues, we have the inequalities
    \begin{align*}
        \lambda_{\max}(A) \leq \max_{u=1,\dots,n} \sum_{j=1}^{n} |A_{ij}| \quad\quad\quad\quad \lambda_{\max}(A) \leq \max_{j=1,\dots,n} \sum_{i=1}^{n} |A_{ij}|
    \end{align*}
    \end{lemma}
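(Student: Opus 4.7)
The plan is to run the classical eigenvector argument that bounds the spectral radius of $A$ by its induced $\ell_\infty$ and $\ell_1$ matrix norms; this is essentially the one-line version of Gershgorin's disk theorem.

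First I would let $\lambda$ be an arbitrary (possibly complex) eigenvalue of $A$ with a nonzero eigenvector $v$, and rescale $v$ so that $\norm{v}_\infty = 1$. Picking any coordinate $i^\star$ with $|v_{i^\star}| = 1$ and reading off the $i^\star$-th row of the eigenvalue equation $Av = \lambda v$, the triangle inequality gives
\begin{align*}
    |\lambda| = |\lambda v_{i^\star}| = \abs{\sum_{j=1}^{n} A_{i^\star j}\, v_j} \leq \sum_{j=1}^{n} \abs{A_{i^\star j}} \cdot |v_j| \leq \sum_{j=1}^{n} \abs{A_{i^\star j}} \leq \max_{i} \sum_{j=1}^{n} \abs{A_{ij}}.
\end{align*}
Since this bound holds for every eigenvalue of $A$, it holds in particular for the one realizing $\lambda_{\max}(A)$, which settles the first inequality.

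For the column-sum inequality I would invoke the fact that $A$ and $A^\top$ share the same spectrum, so applying the row-sum bound just established to $A^\top$ immediately yields $\lambda_{\max}(A) = \lambda_{\max}(A^\top) \leq \max_{j} \sum_{i=1}^{n} \abs{A_{ij}}$.

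There is no real obstacle here: the only subtle ingredient is the choice of normalization, which places the maximum of $|v_j|$ at the indexed row so that the crude bound $|v_j| \leq 1$ becomes sharp exactly where it is needed, and the division by $v_{i^\star}$ is legitimate since $|v_{i^\star}| = 1 \neq 0$.
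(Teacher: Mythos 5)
Your proof is correct, and it is the canonical Gershgorin-type argument: normalize an eigenvector in $\ell_\infty$, read off the row where the maximum is attained, and apply the triangle inequality; then transfer to column sums via $A^\top$. The paper records this lemma as a standard preliminary and offers no proof of its own, so there is nothing to compare against, but this is exactly the argument the authors have in mind. One tiny point worth making explicit: what you actually bound is the \emph{modulus} $|\lambda|$ of every eigenvalue, hence the spectral radius $\rho(A)$; the stated inequality for $\lambda_{\max}(A)$ then follows because $\lambda_{\max}(A) \leq |\lambda_{\max}(A)| \leq \rho(A)$ under any reasonable reading of $\lambda_{\max}$ for a matrix that may have complex eigenvalues (largest real eigenvalue, largest real part, or largest modulus). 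Spelling out that one-line bridge would make the final step ``it holds in particular for the one realizing $\lambda_{\max}(A)$'' fully airtight.
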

	\begin{lemma}\label{lem:ABBAeigs}
	Let $A \in \R^{n \times m}, B \in \R^{m \times n}$ where $m \geq n$. Then the spectrum of $BA$ (as a multiset) is precisely the union of the spectrum of $AB$ (as a multiset) with $m - n$ copies of $0$.
	\end{lemma}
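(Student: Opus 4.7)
The plan is to prove this via the block-matrix / Schur-complement identity that relates the characteristic polynomials of $AB$ and $BA$; this is the standard route and gives the multiset equality directly, rather than just matching nonzero eigenvalues.

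First, I would introduce the $(n+m) \times (n+m)$ block matrix depending on a scalar $x$:
\begin{align*}
    M(x) = \begin{pmatrix} x I_n & A \\ B & I_m \end{pmatrix}.
\end{align*}
Then I would compute $\det M(x)$ in two ways via Schur complements. Taking the Schur complement with respect to the invertible lower-right block $I_m$ gives
\begin{align*}
    \det M(x) = \det(I_m) \cdot \det(x I_n - A I_m^{-1} B) = \det(x I_n - AB).
\end{align*}
On the other hand, for $x \neq 0$ the upper-left block $x I_n$ is invertible, and the Schur complement there gives
\begin{align*}
    \det M(x) = \det(x I_n) \cdot \det\!\left(I_m - B (x I_n)^{-1} A\right) = x^n \cdot \det\!\left(I_m - \tfrac{1}{x} BA\right) = x^{n-m} \det(x I_m - BA).
\end{align*}

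Equating the two expressions yields
\begin{align*}
    x^{m-n} \det(x I_n - AB) = \det(x I_m - BA)
\end{align*}
for all $x \neq 0$. Since both sides are polynomials in $x$ and they agree on a cofinite set, they agree as polynomials, so the identity holds for every $x$. This says precisely that the characteristic polynomial of $BA$ equals $x^{m-n}$ times that of $AB$, which translates (comparing roots with multiplicity) into the claimed multiset equality of spectra.

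There is no real obstacle; the only item requiring a word of care is that the second Schur complement computation is a priori only valid for $x \neq 0$, but this is resolved by invoking the standard fact that a polynomial identity valid off a single point extends everywhere. Because this is a classical linear-algebra fact used only as a lemma in the preliminaries, I would keep the write-up to essentially the two displays above.
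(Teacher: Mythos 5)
Your proof is correct. The paper states \cref{lem:ABBAeigs} in the preliminaries as a known linear-algebra fact and does not supply a proof, so there is no paper argument to compare against. Your Schur-complement derivation of the characteristic-polynomial identity $x^{m-n}\det(xI_n - AB) = \det(xI_m - BA)$ is the standard route and is carried out correctly, including the needed remark that the identity, proved for $x \neq 0$, extends to all $x$ by polynomial continuity.
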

	\subsection{Markov Chains and Random Walks}\label{subsec:randomwalks}
	For this paper, we consider a Markov chain as a triple $(\Omega,P,\pi)$ where $\Omega$ denotes a (finite) state space, $P\in \R_{\geqslant 0}^{\Omega\times \Omega}$ denotes a transition probability matrix and $\pi \in \R_{\geqslant 0}^{\Omega}$ denotes a stationary distribution of the chain (which will be unique for all chains we consider). For  $\tau,\sigma \in \Omega$, we use $P(\tau,\sigma)$ to denote the corresponding entry of $P$, which is the probability of moving from $\tau$ to $\sigma$. 

	We say a Markov chain is $\epsilon$-lazy if for any state $\tau \in \Omega$, $P(\tau,\tau) \geqslant \epsilon$. A chain $(\Omega, P, \pi)$ is \emph{reversible} if there is a nonzero nonnegative function $f:\Omega\to\R_{\geqslant0}$ such that for any pair of states $\tau, \sigma \in \Omega$, 
	\[f(\tau) P(\tau,\sigma) = f(\sigma)P(\sigma,\tau).\]
	If this condition is satisfied, then $f$ is proportional to a stationary distribution of the chain. In this paper we only work with reversible Markov chains. Note that being reversible means that the transition matrix $P$ is self-adjoint w.r.t.\ the following inner product, defined for $\phi,\psi\in \R^\Omega$:
	\[ \langle\phi,\psi\rangle_{f}=\sum_{x\in\Omega} f(x)\phi(x)\psi(x).  \]

	Reversible Markov chains can be realized as random walks on weighted graphs. Given a weighted graph $G=(V,E,w)$ where every edge $e\in E$ has weight $w(e)$, the non-lazy \emph{simple} random walk on $G$ is the Markov chain that from any vertex $u\in V$ chooses an edge $e=\wrapc{u,v}$ with probability proportional to $w(e)$ and jumps to $v$. We can make this walk $\epsilon$-\emph{lazy} by staying at every vertex with probability $\epsilon$. It turns out that if $G$ is connected, then the walk has a unique stationary distribution where $\pi(u)\propto w(u)$, where $w(u)=\sum_{v\sim u} w(\wrapc{u,v})$ is the weighted degree of $u$.

	For any reversible Markov chain $(\Omega, P, \pi)$, the largest eigenvalue of $P$ is $1$. We let $\lambda^*(P)$ denote the second largest eigenvalue of $P$ in absolute value. That is, if $-1\leq \lambda_n\leq \dots\leq \lambda_1=1$ are the eigenvalues of $P$, then $\lambda^*(P)$ equals $\max\wrapc{\abs{\lambda_2},\abs{\lambda_n}}$.
	\begin{theorem}[\cite{DS91}]\label{thm:mixingtime}
		For any reversible irreducible  Markov chain $(\Omega, P, \pi)$,  $\epsilon>0$, and any starting state $\tau\in \Omega$,
	\[ t_\tau(\epsilon)  \ \leq  \ \frac1{1-\lambda^*(P)}\cdot \log\wrapp{\frac{1}{\epsilon\cdot \pi(\tau)}}.\]
	\end{theorem}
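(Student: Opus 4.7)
The plan is to argue via spectral decomposition in $L^{2}(\pi)$. Since the chain is reversible, $P$ is self-adjoint with respect to the inner product $\langle \phi,\psi\rangle_{\pi} = \sum_{x} \pi(x)\phi(x)\psi(x)$ introduced earlier, so there is an orthonormal basis $\phi_{1},\dots,\phi_{|\Omega|}$ of eigenvectors with real eigenvalues $1 = \lambda_{1} \geq \lambda_{2} \geq \dots \geq \lambda_{|\Omega|} \geq -1$, and $\phi_{1}$ is the constant function $\mathbf{1}$. Define the density $f_{t}(\sigma) = P^{t}(\tau,\sigma)/\pi(\sigma)$. The key observation is that $f_{t} - 1$ is orthogonal (in $\langle\cdot,\cdot\rangle_{\pi}$) to $\phi_{1} = \mathbf{1}$, since $\sum_{\sigma}\pi(\sigma)(f_{t}(\sigma)-1) = 0$, so expanding $f_{t} - 1$ in the basis $\phi_{2},\dots,\phi_{|\Omega|}$ and applying $P^{t}$ only contracts each coordinate by a factor at most $\lambda^{*}(P)^{t}$ in absolute value.

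From this I would derive the $L^{2}$ contraction
\[
\|f_{t}-1\|_{L^{2}(\pi)} \;\leq\; \lambda^{*}(P)^{t}\cdot \|f_{0}-1\|_{L^{2}(\pi)}.
\]
The initial density $f_{0}$ is the point mass indicator, $f_{0}(\sigma) = \mathbf{1}[\sigma=\tau]/\pi(\tau)$, so a direct computation gives $\|f_{0}-1\|_{L^{2}(\pi)}^{2} = (1-\pi(\tau))/\pi(\tau) \leq 1/\pi(\tau)$.

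Next I convert $L^{2}$ to $\ell_{1}$ by Cauchy--Schwarz:
\[
\|P^{t}(\tau,\cdot)-\pi\|_{1} \;=\; \sum_{\sigma}\pi(\sigma)\,|f_{t}(\sigma)-1| \;\leq\; \Bigl(\sum_{\sigma}\pi(\sigma)\Bigr)^{1/2}\|f_{t}-1\|_{L^{2}(\pi)} \;\leq\; \frac{\lambda^{*}(P)^{t}}{\sqrt{\pi(\tau)}}.
\]
Setting the right-hand side at most $\epsilon$, taking logarithms, and using $\log(1/x) \geq 1-x$ for $x \in (0,1]$ applied to $x = \lambda^{*}(P)$ gives
\[
t \;\geq\; \frac{1}{1-\lambda^{*}(P)}\cdot \log\!\Bigl(\frac{1}{\epsilon\sqrt{\pi(\tau)}}\Bigr),
\]
which is sufficient since $\log(1/(\epsilon\sqrt{\pi(\tau)})) \leq \log(1/(\epsilon\,\pi(\tau)))$, matching the stated bound.

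There is no real obstacle here; the argument is essentially a textbook spectral mixing-time bound. The only mildly delicate step is being careful that the contraction in $L^{2}$ uses $\lambda^{*}(P)$ (the second largest eigenvalue \emph{in absolute value}), which is necessary to also control a potentially large negative eigenvalue $\lambda_{|\Omega|}$; since the statement defines $\lambda^{*}(P) = \max\{|\lambda_{2}|,|\lambda_{|\Omega|}|\}$, this matches exactly. No laziness assumption is required, and irreducibility is used only to guarantee the spectral gap $1 - \lambda^{*}(P) > 0$ and the uniqueness of $\pi$.
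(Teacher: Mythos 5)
The paper does not prove this statement; it is cited directly from \cite{DS91} with no proof given. Your argument is correct and is essentially the standard spectral proof that appears in \cite{DS91} (and in many textbooks): expand the density $f_t - 1$ in the $L^2(\pi)$-orthonormal eigenbasis of $P$, contract by $\lambda^*(P)^t$, compute $\|f_0-1\|_{L^2(\pi)}$ explicitly, and pass to $\ell_1$ by Cauchy--Schwarz, finishing with $\log(1/x) \geq 1-x$. One small caveat on your closing remark: irreducibility alone gives $\lambda_2 < 1$ (Perron--Frobenius) but does \emph{not} rule out $\lambda_{|\Omega|} = -1$ for a periodic (e.g., bipartite) chain, so $\lambda^*(P) = 1$ is possible; in that case the stated bound is $+\infty$ and hence vacuously true, so the theorem and your proof are still fine, but the claim that irreducibility ``guarantees the spectral gap'' is slightly too strong.
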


\subsection{Tree Recurrences for Hardcore Model}
Fix a tree $T$ rooted at some vertex $r$. For a vertex $v$ in $T$, let $\ell(v)$ denote its distance from the root $r$. We will sometimes refer to it as the ``level'' which contains $v$. For a level $\ell$, let $L_{r}(\ell) = \{v \in T : \ell(v) = \ell\}$. For a vertex $u \in T$, we will write $T_{u}$ for the subtree of $T$ rooted at $u$.

A key tool we will need to analyze the hardcore model on trees is given by the tree recurrence. To describe the tree recurrence, we need to consider a change of variables w.r.t. the marginal probabilities. Fix a tree $T$ arbitrarily rooted at some vertex $r \in T$, and an arbitrary boundary condition $p:A \rightarrow [0,1]$ for a subset of remaining vertices $A$. We write the ratio of conditional probabilities as
\begin{align*}
    R_{T,r}^{p} = \frac{\Pr[r \mid p]}{\Pr[\overline{r} \mid p]} = \frac{\Pr[r \mid p]}{1 - \Pr[r \mid p]}
\end{align*}
Here, we think of the function $p:A \rightarrow [0,1]$ as fixing the marginal probability of vertices $v \in A$ to $p(v)$. In the special case where $p$ maps all vertices of $A$ to $0$ or $1$, then $p$ is really a boundary condition in the traditional sense, as $p$ is pinning the vertices of $A$ to be in/out. However, later on, we will need the additional flexibility of pinning the marginal of $v \in A$ to a specific value $p(v) \in [0,1]$

With this notation in hand, we may write the tree recurrence for the hardcore model as
\begin{align}\label{eq:treerecurrence}
    R_{T,r}^{p} = F(R_{T_{u},u}^{p} : u \in L_{r}(1)) \overset{\defin}{=} \lambda \prod_{u \in L_{r}(1)} \frac{1}{R_{T_{u},u}^{p} + 1}
\end{align}
Here, we make a slight abuse of notation by writing $p$ even when considering a subtree $T_{u}$; this should be understood as the restriction of $p$ to this subtree. We drop the superscript when $p$ is empty; we also drop the subscript $T$ when the tree is clear from context. Note that this tree recurrence naturally leads to a simple polynomial-time dynamic programming algorithm for exactly computing $Z_{G}(\lambda)$ on any tree.

In the case of a depth-$\ell$ complete $d$-ary tree rooted at $r$ with no boundary conditions, all of the $R_{u}^{\sigma}$ are the same, as the only relevant parameter is the depth of tree. In this case, the tree recurrence simplifies to a univariate recurrence given by
\begin{align*}
    f_{d}(R) = \lambda \wrapp{\frac{1}{R + 1}}^{d}
\end{align*}
It turns out that there exists a unique fixed point to $f_{d}(\cdot)$ we call this $\hat{R}_d$, i.e.,
$$ \hat{R}_d = \lambda \left(\frac{1}{\hat{R}_d+1}\right)^d,$$
see \cite{LLY13} and references therein.

\subsection{Correlation Decay and Weitz's Self-Avoiding Walk Tree}
In this subsection, we introduce the necessary notation for describing the correlation decay property for spin systems. We begin by rigorously defining correlation decay for a general distribution $\mu$ over subsets of a ground set $[n]$.
\begin{definition}[Weak Spatial Mixing]
Fix a metric $d:[n] \times [n] \rightarrow \R_{\geq0}$. We say a distribution $\mu$ on subsets of $[n]$ exhibits weak spatial mixing w.r.t. $d$ with rate $0 < \alpha < 1$ and constant $C > 0$ if for every $i \in [n]$, every $S \subset [n]$ with $i \notin S$, and every pair of assignments $\tau,\sigma: S \rightarrow \{\text{in}, \text{out}\}$, we have
\begin{align*}
    \abs{R_{i}^{\tau} - R_{i}^{\sigma}} \leq C \cdot \alpha^{d(i,S)}
\end{align*}
\end{definition}
\begin{definition}[Strong Spatial Mixing]\label{def:ssm}
Fix a metric $d:[n] \times [n] \rightarrow \R_{\geq0}$. We say a distribution $\mu$ on subsets of $[n]$ exhibits strong spatial mixing w.r.t. $d$ with rate $0 < \alpha < 1$ and constant $C > 0$ if for every $i \in [n]$, every $S \subset [n]$ with $i \notin S$, and every pair of assignments $\tau,\sigma: S \rightarrow \{\text{in}, \text{out}\}$, we have
\begin{align*}
    \abs{R_{i}^{\tau} - R_{i}^{\sigma}} \leq C \cdot \alpha^{d(i,S(\tau,\sigma))}
\end{align*}
where $S(\tau,\sigma) \subset S$ is the subset of elements in $S$ on which the assignments $\tau,\sigma$ differ.
\end{definition}
In the case of a distribution $\mu$ on configurations $\sigma:V \rightarrow \{0,1\}$ on a graph $G=(V,E)$ coming from the hardcore model with activity $\lambda$, our ground set will consist of the vertices $V$. The most natural metric to take is the shortest path metric on the graph $G$. This is what we will do throughout this paper. However, we note there are many works which use different metrics \cite{LLY13, RSTVY13}. We also note there are alternative forms of correlation decay based on computation trees that have been successfully used to obtain approximation algorithms \cite{BGKNT07, GK07}.

It has been known since the work of Kelly \cite{Kel85} that for the hardcore model, weak spatial mixing on the infinite $\Delta$-regular tree holds exactly when $\lambda < \frac{(\Delta-1)^{\Delta-1}}{(\Delta-2)^{\Delta}} \overset{\defin}{=} \lambda_{c}(\Delta)$. Here, $\lambda_{c}(\Delta)$ is known as the critical threshold for the hardcore model on graphs of maximum degree $\leq\Delta$. These results have been subsequently extended to all antiferromagnetic two-state spin systems \cite{LLY12, LLY13, SST14}.

The way the threshold $\lambda_{c}(\Delta)$ is derived is by analyzing when $\abs{f_{\Delta-1}'(\hat{R}_{\Delta-1})}$ is less than $1$. It turns out the gap between $\abs{f_{\Delta-1}'(\hat{R}_{\Delta-1})}$ and $1$ governs the rate $\alpha$ in the definition of weak spatial mixing. \cite{LLY13} quantified this in the following definition.
\begin{definition}[Up-to-$\Delta$ Uniqueness \cite{LLY13}]\label{def:uptodeltaunique}
We say the hardcore model with parameter $\lambda$ is up-to-$\Delta$ unique if for every $1 \leq d < \Delta$, we have $\abs{f_{d}'(\hat{R}_{d})} < 1$, where $\hat{R}_{d}$ denotes the unique fixed point of $f_{d}$. Furthermore, we say $\lambda$ is up-to-$\Delta$ unique with gap $0 < \delta < 1$ if $\abs{f_{d}'(\hat{R}_{d})} \leq 1 - \delta$ for every $1 \leq d < \Delta$.
\end{definition}
It is not hard to show that up-to-$\Delta$ uniqueness with gap $0 < \delta < 1$ is equivalent to $\lambda \leq (1 - \Theta(\delta)) \cdot \lambda_{c}(\Delta)$. Hence, throughout the paper whenever one encounters the phrase ``up-to-$\Delta$ unique with gap $0 < \delta < 1$'', one may safely assume $\lambda \leq (1 - \Theta(\delta))\lambda_{c}(\Delta)$. The rigorous statement of this equivalence and its proof are contained in \cref{sec:gappeduptodeltaunique}.

Surprisingly, Weitz \cite{Wei06} managed to show that for the hardcore model, weak spatial mixing actually implies strong spatial mixing with the same rate $\alpha$, albeit with a worse constant $C$. This was extended in \cite{LLY13, SST14} to all antiferromagnetic two-state spin systems. The way this was done was to first reduce spatial mixing on a general graph to spatial mixing on an associated tree, which we describe now.

\begin{definition}[Self-Avoiding Walk Tree; \cite{SS05}, \cite{Wei06}]
Fix a graph $G= (V,E)$ and a vertex $r \in V$. A self-avoiding walk of length $\ell$ in $G$ beginning at $r$ is a sequence of vertices $r = v_{0},\dots,v_{\ell}$ such that $v_{0},\dots,v_{\ell}$ are all distinct, and $v_{i} \sim v_{i-1}$ for each $i=1,\dots,\ell$. We construct a tree $T = T_{\SAW}(G,r)$ whose vertices correspond to walks $v_{0},\dots,v_{\ell}$ such that either
\begin{enumerate}
    \item $v_{0},\dots,v_{\ell}$ itself is a self-avoiding walk
    \item $v_{0},\dots,v_{\ell-1}$ is a self-avoiding walk, and $v_{\ell} = v_{i}$ for some $i < \ell - 2$, i.e. $v_{\ell}$ closes a cycle; note that $v_{\ell} = v_{\ell-2}$ (backtracking) and $v_{\ell} = v_{\ell-1}$ (staying) are both prohibited
\end{enumerate}
Two such walks are adjacent in $T$ if and only if one extends the other. Let $C(v)$ denote the set of copies of $v$ in $T = T_{\SAW}(G,r)$, i.e. vertices in $T$ corresponding to walks which end at $v$. For vertex $v \in G$, we will let $\ell(v) = \min\{\ell(u) : u \in C(v)\}$ denote the highest level of any copy of $v$ in $T$.

We will further need a boundary condition $\tau_{\SAW}$ on the leaves of $T$. Specifically, for each vertex $v \in G$, we first order its neighbors arbitrarily. Now consider a walk $v_{0},\dots,v_{\ell}$ such that $v_{\ell}$ closes a cycle. Let $i < \ell$ be such that $v_{\ell} = v_{i}$. We assign the vertex in $T$ corresponding to the walk $v_{0},\dots,v_{\ell}$ a spin value of
\begin{enumerate}
    \item $0$ if the neighbor $v_{i+1}$ of $v_{i}$ is larger than the neighbor $v_{\ell-1}$.
    \item $1$ if the neighbor $v_{i+1}$ of $v_{i}$ is larger than the neighbor $v_{\ell-1}$.
\end{enumerate}
Note that $T$ is a finite tree since any vertex in a self-avoiding walk can be visited at most once.
\end{definition}
\begin{figure}
    \centering
    \begin{subfigure}[t]{.5\textwidth}
        \centering
        \begin{tikzpicture}[thick,main node/.style={circle,draw,font=\sffamily\normalsize\itshape\bfseries, inner sep=0pt, minimum size=0.75cm}]
                    
            \node[main node, fill=orange!60] (a) {a};
            \node[main node] (b) [below left=1cm and 1cm of a] {b};
            \node[main node] (c) [below=0.8cm of a] {c};
            \node[main node] (d) [below right=1cm and 1cm of a] {d};
            \node[main node] (e) [below left=1cm and 0.4cm of d] {e};
            \node[main node] (f) [below right=1cm and 0.4cm of d] {f};
        
            \path[every node/.style={font=\sffamily\small}]
                (a) edge node {} (b)
                    edge node {} (c)
                    edge node {} (d)
                (c) edge node {} (d)
                (d) edge node {} (e)
                    edge node {} (f)
                (e) edge node {} (f);
        \end{tikzpicture}
        \caption{Base graph $G$}
        \label{subfig:SAWexamplebase}
    \end{subfigure}%
    \begin{subfigure}[t]{.5\textwidth}
        \centering
        \begin{tikzpicture}[thick,main node/.style={circle,draw,font=\sffamily\normalsize\itshape\bfseries, inner sep=0pt, minimum size=0.5cm}]
                            
            \node[main node, fill=orange!60] (a) {a};
            \node[main node] (ab) [below left=0.4cm and 1cm of a] {b};
            \node[main node] (ac) [below=0.3cm of a] {c};
            \node[main node] (acd) [below left=0.4cm and 1cm of ac] {d};
            \node[main node, fill=red!60] (acda) [below left=0.4cm and 0.6cm of acd] {a};
            \node[main node] (acde) [below=0.3cm of acd] {e};
            \node[main node] (acdef) [below=0.3cm of acde] {f};
            \node[main node, fill=red!60] (acdefd) [below=0.3cm of acdef] {d};
            \node[main node] (acdf) [below right=0.4cm and 0.6cm of acd] {f};
            \node[main node] (acdfe) [below=0.3cm of acdf] {e};
            \node[main node, fill=blue!40] (acdfed) [below=0.3cm of acdfe] {d};
            \node[main node] (ad) [below right=0.4cm and 1cm of a] {d};
            \node[main node] (adc) [below left=0.4cm and 0.6cm of ad] {c};
            \node[main node, fill=blue!40] (adca) [below=0.3cm of adc] {a};
            \node[main node] (ade) [below=0.3cm of ad] {e};
            \node[main node] (adef) [below=0.3cm of ade] {f};
            \node[main node, fill=red!60] (adefd) [below=0.3cm of adef] {d};
            \node[main node] (adf) [below right=0.4cm and 0.6cm of ad] {f};
            \node[main node] (adfe) [below=0.3cm of adf] {e};
            \node[main node, fill=blue!40] (adfed) [below=0.3cm of adfe] {d};
        
            \path[every node/.style={font=\sffamily\small}]
                (a) edge node {} (ab)
                    edge node {} (ac)
                    edge node {} (ad)
                (ac) edge node {} (acd)
                (acd) edge node {} (acda)
                      edge node {} (acde)
                      edge node {} (acdf)
                (acde) edge node {} (acdef)
                (acdef) edge node {} (acdefd)
                (acdf) edge node {} (acdfe)
                (acdfe) edge node {} (acdfed)
                (ad) edge node {} (adc)
                     edge node {} (ade)
                     edge node {} (adf)
                (adc) edge node {} (adca)
                (ade) edge node {} (adef)
                (adef) edge node {} (adefd)
                (adf) edge node {} (adfe)
                (adfe) edge node {} (adfed);
        \end{tikzpicture}
        \caption{$T = T_{\SAW}(G,a)$}
        \label{subfig:SAWexampletree}
    \end{subfigure}%
\caption{Red and blue vertices in $T$ indicate the boundary condition $\tau_{\SAW}$, with red representing ``in'' and blue representing ``out''. These boundary conditions are considered ``structural'' as they depend only upon the cycle structure of the base graph $G$. Here, for each vertex in $G$, we order its neighbors reverse lexicographically.}
\label{fig:SAWexample}
\end{figure}
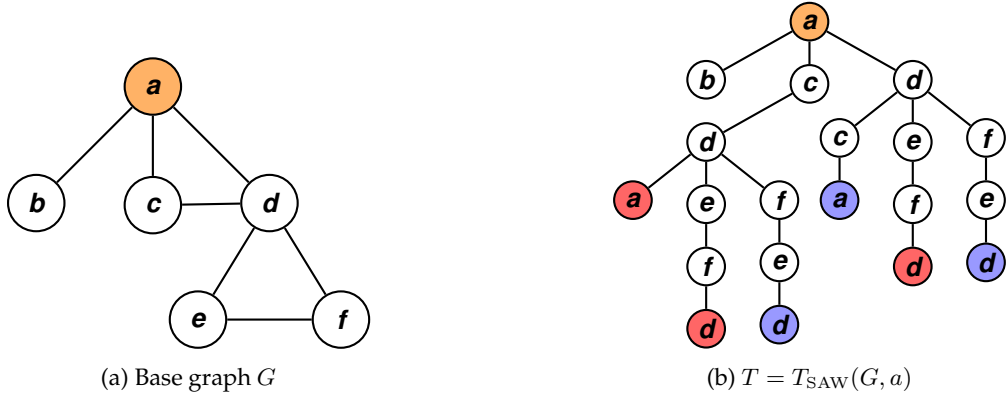
\begin{fact}\label{fact:SAWdistmaxdeg}
Let $G=(V,E)$ be a graph, and let $v \in V$. Let $T = T_{\SAW}(G,r)$.
\begin{enumerate}
    \item The maximum degree of $T$ equals the maximum degree of $G$.
    \item For a vertex $u \in G$, we have $\ell_{T}(u) = d_{G}(u,v)$.
\end{enumerate}
\end{fact}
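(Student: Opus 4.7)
The fact is essentially an unwinding of the construction of $T = T_{\SAW}(G,r)$, so my plan is to derive both parts directly from the two-clause definition of vertices of $T$. I will assume $G$ is connected; otherwise both statements apply to the connected component containing $r$ (and I will treat the $v/r$ mismatch in the statement as a typo, with $v = r$).

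For part (1), the plan is to compute the $T$-degree of each vertex explicitly and check it matches $\deg_G$ of the endpoint of the corresponding walk. The root $(r)$ has $T$-degree $\deg_G(r)$ immediately: its children are the length-one walks $(r,w)$ for each neighbor $w$ of $r$. For a non-root vertex corresponding to a walk $(v_0,\dots,v_\ell)$, its children in $T$ are the extensions $(v_0,\dots,v_\ell,w)$ that lie in $T$; working through the two clauses, these are precisely the $w \in N_G(v_\ell)$ with $w \neq v_{\ell-1}$: if $w \notin \{v_0,\dots,v_\ell\}$ then clause (1) applies, and otherwise $w = v_i$ for some $i \leq \ell - 2$ so clause (2) applies. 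This gives $\deg_G(v_\ell) - 1$ children, hence $T$-degree $\deg_G(v_\ell)$ for any non-root SAW-type vertex, while cycle-closing vertices are automatically leaves because their length-$\ell$ prefix is not a SAW, so no further extension can satisfy either clause. Combining this with the fact that every $u \in V$ arises as the endpoint of some SAW from $r$ (take any shortest $r$-to-$u$ path, which is automatically self-avoiding) gives $\Delta(T) = \Delta(G)$.

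For part (2), the plan is a straightforward two-sided bound. On one hand, every vertex of $T$ at level $\ell$ is by construction a walk of length $\ell$ in $G$ starting at $r$; if it ends at $u$ then $\ell \geq d_G(r,u)$, so $\ell_T(u) \geq d_G(r,u)$. For the reverse direction, any shortest path from $r$ to $u$ in $G$ has no repeated vertices (otherwise it could be shortened), hence is a SAW, and so corresponds to a vertex of $T$ at level $d_G(r,u)$ whose endpoint is $u$; thus $\ell_T(u) \leq d_G(r,u)$.

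The only mildly subtle step is the child count in part (1): one must correctly identify which neighbors $w$ of $v_\ell$ yield valid children, invoking both clauses of the definition of vertices of $T$, and then verify separately that cycle-closing vertices admit no extensions. Small corner cases (the root; a SAW whose endpoint has $G$-degree $1$, so $v_{\ell-1}$ is its unique neighbor and the vertex is a leaf of $T$) require minor handling but do not affect the final count.
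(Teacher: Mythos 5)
Your proof is correct. The paper states this as a \emph{Fact} with no proof attached (it is treated as an immediate consequence of the definition of $T_{\SAW}(G,r)$), and your argument is precisely the natural unwinding one would write down: the per-vertex degree count for non-root SAW-type vertices (which yields $\deg_G(v_\ell) - 1$ children plus the parent), the observation that cycle-closing vertices are leaves because their length-$\ell$ prefix is no longer a SAW so neither clause can hold for any extension, and the two-sided bound relating levels in $T$ to graph distance in $G$ via shortest paths being self-avoiding. Your handling of the $v$/$r$ typo in the statement and the connectedness caveat are also appropriate.
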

Note that any partial assignment $\sigma:S \rightarrow \{0,1\}$ in $G$ has a natural extension to a partial assignment $\sigma_{\SAW}$ in $T = T_{\SAW}(G,r)$. Specifically, for every $v \in S$ and every $u \in C(v)$, we have $\sigma_{\SAW}(u) = \sigma(v)$. Fixed vertices in $T$ come in two types. The first comes from $\tau_{\SAW}$, which arise from the cycle structure of $G$. In other words, fixed assignments of this type are ``structural''. The second comes from $\sigma_{\SAW}$, which arise from fixed vertices in $G$. Fixed assignments of this second type are simply ``copied'' assignments. For convenience, whenever we consider a self-avoiding walk tree, we will implicitly assume that $\tau_{\SAW}$ is part of any assignment, without writing it explicitly. In the case of the hardcore model, this is equivalent to simple throwing away all fixed vertices, and neighbors of vertices fixed to ``in'' (i.e. $0$).

One of Weitz's main results is the following \cite{Wei06}.
\begin{theorem}[Theorem 3.1 from \cite{Wei06}]\label{thm:weitzsaw}
Fix a graph $G=(V,E)$ and a vertex $r \in V$, and let $T = T_{\SAW}(G,r)$. Let $\sigma:S \rightarrow \{0,1\}$ be any partial assignment of vertices in $S \subset V$. Then $\Pr_{G}[r \mid \sigma] = \Pr_{T}[r \mid \tau_{\SAW},\sigma_{\SAW}]$ and $R_{G,r}^{\sigma} = R_{T,r}^{\tau_{\SAW},\sigma_{\SAW}}$.
\end{theorem}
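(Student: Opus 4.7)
The plan is induction on the cycle rank of the free subgraph of $G$ (the subgraph induced on vertices not pinned by $\sigma$). The base case is when the free subgraph is a forest: then $T_{\SAW}(G,r)$ coincides with the tree component containing $r$, the structural boundary $\tau_{\SAW}$ is empty, and the identity $R_{G,r}^\sigma = R_{T,r}^{\tau_{\SAW},\sigma_{\SAW}}$ is immediate.

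For the inductive step, I would first derive a graph-side analogue of the tree recurrence \cref{eq:treerecurrence}. Let $v_1,\dots,v_d$ be the free neighbors of $r$, ordered as in the SAW tree construction. Since $r=1$ forces all its free neighbors to be $0$ in the hardcore model, $R_{G,r}^\sigma = \lambda \cdot \Pr_{G\setminus\{r\}}[v_1 = \cdots = v_d = 0 \mid \sigma]$. Applying the chain rule and using that pinning a vertex to $0$ in the hardcore model is equivalent to deleting it, I obtain
\begin{equation*}
    R_{G,r}^\sigma = \lambda \prod_{i=1}^{d} \frac{1}{R_{G^{(i)}, v_i}^{\sigma^{(i)}} + 1},
\end{equation*}
where $G^{(i)} = G \setminus \{r, v_1, \dots, v_{i-1}\}$ and $\sigma^{(i)}$ extends $\sigma$ by pinning $v_j \mapsto 0$ for $j < i$. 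This exactly parallels \cref{eq:treerecurrence} at the root of $T_{\SAW}(G,r)$.

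The remaining task is to match each factor with the corresponding subtree ratio in $T_{\SAW}(G,r)$, after which the inductive hypothesis (applicable since each $G^{(i)}$ has strictly smaller cycle rank) closes the argument. The subtle point is that the subtree of $T_{\SAW}(G,r)$ rooted at the copy of $v_i$ adjacent to $r$ is not literally $T_{\SAW}(G^{(i)},v_i)$: the former contains self-avoiding walks in $G \setminus \{r\}$ starting at $v_i$ that traverse some $v_j$ with $j < i$ (eventually closing a cycle back to a previously visited vertex), whereas $G^{(i)}$ has simply deleted these $v_j$. The structural boundary $\tau_{\SAW}$ is designed precisely to reconcile this mismatch: by the lexicographic rule comparing the forward neighbor $v_{i+1}$ to the backward neighbor $v_{\ell-1}$ at each cycle-closing leaf, the two traversals of any single cycle receive opposite boundary values $\{0,1\}$, and a direct computation shows that their combined contribution to the tree ratio reproduces the effect of simply deleting the cycle-closing vertex.

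The main obstacle is verifying this cycle-closure pairing rigorously and uniformly across all graphs. Even the triangle $\{r,a,b\}$ already exhibits the mechanism: the cycle yields two opposite self-avoiding walks $r \to a \to b \to r$ and $r \to b \to a \to r$, whose cycle-closing leaves receive opposite boundary values under the lexicographic rule; expanding \cref{eq:treerecurrence} twice then correctly reproduces $R_{G,r} = \lambda/(1+2\lambda)$. Extending this local cancellation to graphs with arbitrarily many and arbitrarily nested cycles is where the combinatorial heart of Weitz's construction lies; the rest of the argument is a routine induction driven by the hardcore tree recurrence.
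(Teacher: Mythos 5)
The paper does not prove this statement; it is cited verbatim from \cite{Wei06} as background, so there is no in-paper proof to compare against. On its own terms, your graph-side recurrence $R_{G,r}^\sigma = \lambda \prod_{i=1}^{d} (R_{G^{(i)},v_i}^{\sigma^{(i)}}+1)^{-1}$ is correct (chain rule plus the fact that pinning a vertex ``out'' in the hardcore model is equivalent to deleting it), and you correctly locate the crux: the subtree of $T_{\SAW}(G,r)$ at the level-one copy of $v_i$ is \emph{not} $T_{\SAW}(G^{(i)},v_i)$, since it retains self-avoiding walks through the already-processed siblings $v_1,\dots,v_{i-1}$ and their cycle closures back to $r$ and to one another, while in $G^{(i)}$ those vertices have been deleted. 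But reconciling these two objects \emph{is} the theorem, not a step within it, and you explicitly leave it open. The triangle example exhibits a single cycle whose two traversals both hang directly off the root; once cycles through $r$ interleave with cycles among the $v_j$, or a vertex lies on several nested cycles, establishing the claimed cancellation requires Weitz's telescoping vertex-splitting argument (or an equivalent inductive bookkeeping), which your sketch does not supply. Declaring the remainder a ``routine induction'' does not close this gap.

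Two secondary concerns. First, induction on cycle rank is not well-founded as stated: if $r$ is a cut vertex lying on no cycle, then $G^{(1)}=G\setminus\{r\}$ has the same cycle rank as $G$, so the inductive hypothesis cannot be invoked; inducting on $|V|$ (or a lexicographic combination) would repair this. Second, your base case asserts $\tau_{\SAW}$ is empty whenever the \emph{free} subgraph is a forest; but $T_{\SAW}(G,r)$ is constructed from $G$ itself, independently of $\sigma$, so structural boundary leaves can still arise from cycles through pinned vertices. The base case only holds after first absorbing $\sigma$ into the graph structure (deleting ``out'' vertices and the closed neighborhoods of ``in'' vertices), a reduction you should make explicit.
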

Note that \cref{thm:weitzsaw} is generic in that it holds for any distribution $\mu$ described by a two-state spin system. As a consequence, strong spatial mixing holds on $G$ if and only if it holds on $T_{\SAW}(G,r)$ for any $v \in G$. This reduction is advantageous because one can leverage the tree recurrence \cref{eq:treerecurrence} for the ratios of conditional probabilities to prove spatial mixing results.

For the second step, in the case of the hardcore model, \cite{Wei06} showed that weak spatial mixing on the infinite $\Delta$-regular tree implies strong spatial mixing on all trees of maximum degree $\leq\Delta$, and hence, on all graphs of maximum degree $\leq\Delta$. To conveniently state the strong spatial mixing result proved in \cite{Wei06}, we make the following definition.
\begin{definition}
If $T$ is a tree rooted at $r \in T$, we define $R_{T,r}^{\min}(\ell)$ to be the minimum conditional probability ratio that $r$ is assigned $1$ in a random configuration, over all possible marginals of vertices at depth $\ell$ in the subtree rooted at $r$. That is, $R_{T,r}^{\min}(\ell) = \min_{p} R_{T,r}^{p}$, where $p$ is an assignment of marginals of vertices at depth $\ell$ in $T_{u}$. Similarly, define $R_{T,r}^{\max}(\ell)$ to be the maximum such conditional probability ratio. Finally, define $R^{\min}(\ell) = \min_{T,r} R_{T,r}^{\min}(\ell)$ and $R^{\max}(\ell) = \max_{T,r} R_{T,r}^{\max}(\ell)$, where the minimum and maximum are over all trees $T$ rooted at $r$ of maximum degree $\leq\Delta$.
\end{definition}
\begin{remark}
Note that the map $x \mapsto \frac{x}{1 - x}$ is monotone increasing. Hence, the boundary condition $p$ which achieves $R_{T,r}^{\min}(\ell)$ is the one which minimizes $\Pr[r \mid p]$. Essentially, due to the antiferromagnetic nature of the hardcore model, the level-$\ell$ boundary condition minimizing $p_{r}^{\tau}$ is the all-$1$ configuration if $\ell$ is odd, and the all-$0$ configuration if $\ell$ is even. Determining the configuration achieving $R_{r}^{\max}(\ell)$ can be done a similar way.

Note, in particular, that the boundary condition $p$ achieving the minimum or maximum maps the marginals of vertices to $\{0,1\}$; there is no advantage to allowing fractional marginals. However, this formulation of $R_{T,r}^{\min}(\ell)$ and $R_{T,r}^{\max}(\ell)$ will be convenient later.
\end{remark}
\begin{fact}\label{fact:RminRmaxbounds}
We have the inequalities
\begin{enumerate}
    \item $0 = R^{\min}(1) \leq R^{\max}(1) = \lambda$,
    \item $\frac{\lambda}{(1 + \lambda)^{\Delta}} = R^{\min}(2) \leq R^{\max}(2) = \lambda$,
    \item $R^{\min}(\ell) \leq R^{\min}(\ell + 1)$ and $R^{\max}(\ell) \geq R^{\max}(\ell + 1)$ for any $\ell \geq 1$.
\end{enumerate}
\end{fact}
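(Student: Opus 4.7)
The plan is to verify the three inequalities in turn, with parts (1) and (2) being direct consequences of the tree recurrence $R_{T,r}^p = \lambda \prod_{u \in L_r(1)} \frac{1}{R_{T_u,u}^p + 1}$, and part (3) following from a short monotonicity argument.

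For part (1), note that the tree recurrence gives $R_{T,r}^p = \lambda \prod_u \frac{1}{R_{T_u,u}^p + 1}$, and since each $R_{T_u,u}^p \geq 0$, this is at most $\lambda$, with equality iff every child has $R_{T_u,u}^p = 0$. Taking $T$ to be the root with one child, and the boundary $p$ which fixes that child's marginal to $0$ (so $R_{T_u,u}^p = 0$), achieves $R_{T,r}^p = \lambda$; this gives $R^{\max}(1) = \lambda$. For $R^{\min}(1) = 0$, it suffices to fix any child's marginal to $1$, which forces $R_{T,r}^p = 0$; nonnegativity of ratios then gives the matching lower bound.

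For part (2), the same recurrence still gives $R_{T,r}^p \leq \lambda$, and equality is achieved in a depth-$2$ tree by fixing every grandchild's marginal to $1$ (which propagates through the recurrence to force $R_{T_u,u}^p = 0$ at every child $u$ of the root). To minimize $R_{T,r}^p$ at depth $2$, I would instead maximize each $R_{T_u,u}^p$; but applying the recurrence once more shows $R_{T_u,u}^p \leq \lambda$ with equality when each grandchild has marginal $0$. Hence $R_{T,r}^p \geq \lambda/(1+\lambda)^{|L_r(1)|} \geq \lambda/(1+\lambda)^{\Delta}$, and this is attained by taking the root to have exactly $\Delta$ children, each with at least one grandchild, and fixing all grandchild marginals to $0$.

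For part (3), the key observation is that any boundary $p$ at depth $\ell+1$ induces, by one application of the tree recurrence at each depth-$\ell$ vertex, a boundary $p'$ at depth $\ell$ such that $R_{T,r}^p = R_{T,r}^{p'}$. Thus for every tree $T$ with depth at least $\ell+1$, the set of achievable ratios at level $\ell+1$ is contained in the set of achievable ratios at level $\ell$, so $R_{T,r}^{\min}(\ell) \leq R_{T,r}^{\min}(\ell+1)$ and $R_{T,r}^{\max}(\ell) \geq R_{T,r}^{\max}(\ell+1)$. Taking min (respectively max) over $T$ and noting that every tree of depth $\geq \ell+1$ is also a tree of depth $\geq \ell$ gives the two monotonicity inequalities. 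There is no serious obstacle here — the only thing one has to be mildly careful about is how to handle depth-$\ell$ vertices that happen to be leaves of $T$ (they do not participate in the induced marginal step), but these can be absorbed into $p'$ unchanged.
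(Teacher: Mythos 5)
Your proof is correct. The paper states this as a \emph{Fact} without proof, so there is no argument to compare against; parts (1) and (2) follow exactly as you say from one and two applications of the tree recurrence together with the degree bound $|L_r(1)|\leq\Delta$ on the root, and part (3) from the observation that a level-$(\ell+1)$ boundary $p$ induces a level-$\ell$ boundary $p'$ with the same root ratio, giving a set containment of achievable ratios. One small wording nit in part (3): a depth-$\ell$ vertex $v$ that is a leaf is not ``absorbed into $p'$ unchanged'' (there is no entry $p(v)$ to carry over, since $p$ lives on $L_r(\ell+1)$); rather, $v$ has the unconditional ratio $R_v=\lambda$ from the empty product in the recurrence, so one sets $p'(v)=\lambda/(1+\lambda)$. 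This is still a valid marginal in $[0,1]$, so the conclusion stands.
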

\begin{theorem}[Weak Spatial Mixing Implies Strong Spatial Mixing; \cite{Wei06}]\label{thm:ssm}
Assume $\lambda = (1-\delta)\lambda_{c}(\Delta)$ for some $0 < \delta < 1$. Then there exist constants $C > 0$ and $0 < \alpha < 1$ such that for every tree $T$ of maximum degree $\leq \Delta$ rooted at some $r \in T$, and every level $\ell$, we have the bound
\begin{align*}
    |R_{T,r}^{\min}(\ell) - R_{T,r}^{\max}(\ell)| \leq C \cdot \alpha^{\ell}
\end{align*}
\end{theorem}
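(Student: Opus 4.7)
The plan is to use the standard potential-function technique for the tree recursion, originating in Weitz \cite{Wei06} and refined in \cite{LLY12, LLY13, RSTVY13, SST14}. The tree recursion $F(R_1,\dots,R_d) = \lambda \prod_{i}(R_i+1)^{-1}$ is not directly contractive in the $R$-coordinates, since its partial derivatives can be arbitrarily large when some $R_i$ is small. The remedy is a well-chosen change of variables $y = \Phi(R)$ under which the recursion becomes uniformly contractive on the relevant range.

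First I would introduce the Li--Lu--Yin potential $\Phi$ defined by $\Phi'(R) = 1/\sqrt{R(1+R)}$. Writing $y_v = \Phi(R_v)$ and $H = \Phi \circ F \circ (\Phi^{-1})^{\otimes d}$, a direct application of the chain rule gives
\begin{align*}
\left|\frac{\partial H}{\partial y_i}\right| \;=\; \sqrt{\frac{R}{1+R}}\cdot \sqrt{\frac{R_i}{1+R_i}},
\end{align*}
where $R = F(R_1,\dots,R_d)$. The crucial step is then to prove the uniform $\ell^1$ gradient bound $\sum_{i=1}^d |\partial H/\partial y_i| \leq 1 - \Omega(\delta)$ over all admissible tuples $(R_1,\dots,R_d)$. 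By a symmetrization / AM--GM argument, the multivariate supremum is attained essentially on the diagonal $R_1=\cdots=R_d = \hat R_d$, where the sum collapses to a quantity comparable to $|f_d'(\hat R_d)|$; the latter is at most $1-\delta$ by \cref{def:uptodeltaunique}.

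Given the uniform contraction in $y$-coordinates, I would conclude by induction on $\ell$. By \cref{fact:RminRmaxbounds}, after two levels of recursion the ratios $R$ lie in a bounded interval $[R_*, R^*]$ depending only on $\delta,\Delta$, so $\Phi'$ and $(\Phi^{-1})'$ are bounded by constants $M_1,M_2$ on this interval. Assuming inductively that $|y_u^{\max}(\ell-1) - y_u^{\min}(\ell-1)| \leq C'\alpha^{\ell-1}$ for every child $u$ of $r$, the mean value theorem applied to $H$ together with the gradient bound yields
\begin{align*}
|y_r^{\max}(\ell) - y_r^{\min}(\ell)| \;\leq\; (1 - \Omega(\delta)) \cdot C'\alpha^{\ell-1} \;\leq\; C'\alpha^{\ell},
\end{align*}
with $\alpha = 1 - \Omega(\delta) < 1$. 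Transferring back via the Lipschitz bound on $\Phi^{-1}$ gives $|R_{T,r}^{\max}(\ell) - R_{T,r}^{\min}(\ell)| \leq C \alpha^{\ell}$ for the constant $C = M_2 C' = C(\delta,\Delta)$, completing the induction. The base case $\ell=1$ is handled directly using \cref{fact:RminRmaxbounds}.

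The main obstacle is establishing the uniform $\ell^1$ gradient bound: one must argue that the supremum of $\sqrt{R/(1+R)} \cdot \sum_i \sqrt{R_i/(1+R_i)}$, subject to $R = \lambda\prod_i (R_i+1)^{-1}$ and $R_i$ in the feasible range, is attained (up to lower order terms) at the symmetric point and then relates quantitatively to $|f_d'(\hat R_d)|$. This is a multivariate optimization where the uniqueness gap $\delta$ enters, and it is precisely the technical computation carried out in \cite{Wei06} for the hardcore model and generalized in \cite{LLY13}; I would follow their argument verbatim.
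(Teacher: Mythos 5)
Your approach matches the paper's: \cref{thm:ssm} is stated as a citation to \cite{Wei06}, and the quantitative version the paper actually proves (\cref{prop:precisessm} in the appendix) follows precisely your outline---pass from $R$-differences to $K$-differences through the potential $\varphi$, apply the $\ell^1$-contraction of the transformed tree recursion (\cref{lem:idealdecay}, itself cited from \cite{LLY13}, which is exactly your uniform bound on $\norm{\nabla H}_{1}$), and transfer back via the Lipschitz estimates on $\Phi$ together with \cref{fact:RminRmaxbounds}. One small inaccuracy worth flagging: the supremum of $\norm{\nabla H}_{1}$ is \emph{not} attained on the diagonal at the fixed point $\hat R_{d}$, where it equals $|f_{d}'(\hat R_{d})|\le 1-\delta$; the bound this potential actually yields over all nonnegative tuples is the weaker $\sqrt{1-\delta}$, attained off the fixed point. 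Since both are $1-\Omega(\delta)$ this does not affect your stated conclusion $\alpha=1-\Omega(\delta)$, and you rightly defer the genuine optimization to \cite{LLY13}.
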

Later on in the paper, we will need more precise control over $C,\alpha$. However, the above result is sufficient for the present discussion

\section{The Eigenvalues of the Pairwise Influence Matrix}\label{sec:correlationspecradius}
Our goal in this section is to prove \cref{thm:correlationspecradius}. In fact, we completely characterize the spectrum of $P_{\emptyset}$ in terms of the spectrum of $\Psi_{\mu}$, which immediately implies \cref{thm:correlationspecradius}.
\begin{theorem}\label{thm:correlationeigenvaluesubset}
The spectrum of $P_{\emptyset}$ (as a multiset) is precisely the union of the spectrum of $\frac{1}{n-1}\Psi_{\mu}$ (as a multiset), $n-1$ copies of $-\frac{1}{n-1}$, and an eigenvalue of $1$ (corresponding to the top eigenvalue of $P_{\emptyset}$).
\end{theorem}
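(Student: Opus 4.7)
The plan is to leverage the $n$-partite structure of the $1$-skeleton of $X^\mu$ to decompose $\R^{2n}$ into $P_\emptyset$-invariant subspaces and compute the spectrum on each. First I would compute $P_\emptyset$ directly. Since no face of $X^\mu$ contains both $i$ and $\bar i$, there are no edges within each part $U_i = \{i, \bar i\}$, and the weighted degree of any $a \in U_i$ is $\sum_{j \neq i}(w(\{a,j\}) + w(\{a,\bar j\})) = (n-1)\Pr[a]$. Hence $P_\emptyset(a,b) = \Pr[b \mid a]/(n-1)$ when $a,b$ lie in different parts, and zero otherwise. Defining the $2n \times 2n$ matrix $Q$ by $Q(a,b) = \Pr[b \mid a]$ for all $a,b$ (so $Q(a,a) = 1$ and $Q(a,\bar a) = 0$), we get the clean identity $(n-1) P_\emptyset = Q - I$. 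It therefore suffices to determine $\mathrm{spec}(Q)$.

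Next I would exhibit three $Q$-invariant subspaces that together span $\R^{2n}$. Let $w_k \in \R^{2n}$ denote the indicator of $U_k$. A direct computation shows $Q w_k = \mathbf{1}$ for every $k$, since $\sum_{b \in U_k} \Pr[b \mid a] = 1$. This gives: the line $\R \mathbf{1}$, on which $Q$ acts as the scalar $n$ (because $Q\mathbf{1} = \sum_k Q w_k = n \mathbf{1}$); the $(n-1)$-dimensional subspace $K = \mathrm{span}\{w_k - w_l\}$, which lies in the kernel of $Q$; and the $n$-dimensional ``mean-zero on each part'' subspace $W = \{v : \Pr[i] v(i) + \Pr[\bar i] v(\bar i) = 0 \text{ for all } i\}$, with natural basis $v_i = \Pr[\bar i] e_i - \Pr[i] e_{\bar i}$. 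A short check using the law of total probability $\Pr[j]\Pr[i \mid j] + \Pr[\bar j] \Pr[i \mid \bar j] = \Pr[i]$ confirms that $W$ is $Q$-invariant. Since $W \cap (\R\mathbf{1} \oplus K) = \{0\}$ by a direct argument, we obtain the decomposition $\R^{2n} = \R\mathbf{1} \oplus K \oplus W$.

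The crux is identifying $Q|_W$ with $\Psi_\mu + I$. In the basis $\{v_i\}$, one computes $(Q v_i)(a) = \Pr[\bar i]\Pr[i \mid a] - \Pr[i]\Pr[\bar i \mid a] = \Pr[i \mid a] - \Pr[i]$, where the last equality uses $\Pr[i \mid a] + \Pr[\bar i \mid a] = 1$. Writing $Q v_i = \sum_j c_{ji} v_j$ and reading off the coefficient on $e_j$ gives $c_{ji} = (\Pr[i \mid j] - \Pr[i])/\Pr[\bar j]$. For $j \neq i$, applying the law of total probability yields $\Pr[i \mid j] - \Pr[i] = \Pr[\bar j](\Pr[i \mid j] - \Pr[i \mid \bar j])$, so $c_{ji} = \Psi_\mu(j, i)$. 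For $j = i$, we get $c_{ii} = (1 - \Pr[i])/\Pr[\bar i] = 1$. Thus the matrix of $Q|_W$ in this basis is exactly $\Psi_\mu + I$, and its spectrum is $\mathrm{spec}(\Psi_\mu) + 1$.

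Putting the pieces together, $\mathrm{spec}(Q)$ is the multiset $\{n\} \sqcup \{0\}^{n-1} \sqcup (\mathrm{spec}(\Psi_\mu) + 1)$, and applying the transformation $\mu \mapsto (\mu - 1)/(n-1)$ yields the claimed spectrum for $P_\emptyset$. The most delicate step is Step~3: one must carefully account for the diagonal $c_{ii} = 1$ terms, which arise naturally from the action on $W$ but are absent from the definition of $\Psi_\mu$; these extra diagonal $+1$'s are precisely what is cancelled by the $-I$ in the identity $(n-1)P_\emptyset = Q - I$, so that the nontrivial eigenvalues of $P_\emptyset$ come out to exactly $\mathrm{spec}(\Psi_\mu)/(n-1)$ rather than $(\mathrm{spec}(\Psi_\mu) + 1)/(n-1)$.
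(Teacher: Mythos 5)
Your argument is correct and takes a genuinely different route from the paper's. The paper starts from reversibility of $P_{\emptyset}$ with respect to $\pi$, invokes the Spectral Theorem to get a $\langle\cdot,\cdot\rangle_{\pi}$-orthogonal eigenbasis, then performs two rank-one perturbations ($-\frac{n}{n-1}\mathbf{1}\pi^{\top}$ to shift the top eigenvalue, $+\frac{n}{n-1}\sum_{i}\mathbf{1}^{i}(\pi^{i})^{\top}$ to zero out the $\mathbf{1}^{i}$-directions) to reach an intermediate matrix $M_{\emptyset}$, which it finally shows has block form $\bigl[\begin{smallmatrix}A & -A \\ B & -B\end{smallmatrix}\bigr]$ whose characteristic polynomial factors as $x^{n}\det(xI - (A-B))$ with $A - B = \frac{1}{n-1}\Psi_{\mu}$. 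You instead observe the clean affine identity $(n-1)P_{\emptyset} = Q - I$ where $Q(a,b) = \Pr[b\mid a]$ with no indicator restriction, and then exhibit an explicit direct-sum decomposition $\R^{2n} = \R\mathbf{1} \oplus K \oplus W$ into $Q$-invariant subspaces, reading off the action of $Q$ on each: the scalar $n$ on $\R\mathbf{1}$, zero on $K$, and (after a short law-of-total-probability computation in the basis $v_{i} = \Pr[\overline{i}]e_{i} - \Pr[i]e_{\overline{i}}$) the matrix $\Psi_{\mu} + I$ on $W$. This is more elementary: it needs neither the $\pi$-inner product, nor the Spectral Theorem, nor the determinant block identity, and it also transparently exhibits the eigenspaces rather than only the characteristic polynomial. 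One minor point worth writing out: you should note explicitly that $\R\mathbf{1} + K = \operatorname{span}\{w_{1},\dots,w_{n}\}$ has dimension $n$, so that $W \cap (\R\mathbf{1}\oplus K) = \{0\}$ together with $\dim W = n$ really does yield $\R^{2n} = \R\mathbf{1}\oplus K\oplus W$, and hence that the multiset union over the three $Q$-invariant pieces accounts for all $2n$ eigenvalues.
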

Note that this also immediately implies $\Psi_{\mu}$ has real eigenvalues, since $P_{\emptyset}$ has real eigenvalues. The rest of the section is devoted to proving \cref{thm:correlationeigenvaluesubset}.

The main idea behind the proof is to relate the spectra of $P_{\emptyset}$ and of $\frac{1}{n-1}\Psi_{\mu}$ to an intermediate matrix $M_{\emptyset}$. This matrix $M_{\emptyset}$ will be built from $P_{\emptyset}$ by leveraging knowledge of the ``trivial'' eigenvalues and eigenvectors induced purely by the $n$-partite structure of $X^{\mu}$. Towards this, let us first express the entries of $P_{\emptyset}$ in a nice form. Observe that using \cref{eq:faceweights}, we have that for $i,j \in [n]$,
\begin{align*}
    P_{\emptyset}(i,j) = \frac{w(\{i,j\})}{w(\{i\})} \cdot \mathbf{1}[i \neq j] = \frac{1}{n-1} \Pr[j \mid i] \cdot \mathbf{1}[i \neq j]
\end{align*}
Similarly, we have the following for all $i,j \in [n]$.
\begin{align*}
    P_{\emptyset}(i,\overline{j}) &= \frac{1}{n-1} \Pr[\overline{j} \mid i] \cdot \mathbf{1}[i \neq j] \\
    P_{\emptyset}(\overline{i},j) &= \frac{1}{n-1} \Pr[j\mid \overline{i}] \cdot \mathbf{1}[i \neq j] \\
    P_{\emptyset}(\overline{i},\overline{j}) &= \frac{1}{n-1} \Pr[\overline{j} \mid \overline{i}] \cdot \mathbf{1}[i \neq j]
\end{align*}

Now, we compute the stationary distribution of $P_{\emptyset}$. Define $\pi \in \R^{2n}$ entrywise by $\pi(i) = \frac{1}{n}\Pr[i]$ and $\pi(\overline{i}) = \frac{1}{n}\Pr[\overline{i}]$. Observe that $P_{\emptyset}$ is reversible w.r.t. $\pi$. For instance, we have
\begin{align*}
    \pi(i)P_{\emptyset}(i,j) = \frac{\Pr[i] \cdot \Pr[j \mid i]}{n \cdot (n-1)} = \frac{\Pr[i,j]}{n \cdot (n-1)} = \frac{\Pr[j] \cdot \Pr[i \mid j]}{n \cdot (n-1)} = \pi(j)P_{\emptyset}(j,i)
\end{align*}
Hence, $\pi$ is indeed stationary w.r.t. $P_{\emptyset}$. For each element $i \in [n]$, define the vectors $\mathbf{1}^{i},\pi^{i} \in \R^{2n}$ by $\mathbf{1}^{i} = e_{i} + e_{\overline{i}}$ and $\pi^{i} = \pi(i) \cdot e_{i} + \pi(\overline{i}) \cdot e_{\overline{i}}$. In particular, for each $i\in [n]$, $\mathbf{1}^{i},\pi^{i}$ are vectors which are supported on the two entries corresponding to the two different possible assignments of $i$. We now define our intermediate matrix as
\begin{align*}
    M_{\emptyset} = P_{\emptyset} - \frac{n}{n-1} \mathbf{1}\pi^{\top} + \frac{n}{n-1} \sum_{i=1}^{n} \mathbf{1}^{i} (\pi^{i})^{\top}
\end{align*}
We prove the following two claims.
\begin{claim}[Relating $P_{\emptyset}$ and $M_{\emptyset}$]\label{claim:PemptysetMemptyset}
The matrix $P_{\emptyset}$ has eigenvalue $1$ with multiplicity (at least) $1$, and eigenvalue $-\frac{1}{n-1}$ with multiplicity (at least) $n-1$. These are the ``trivial'' eigenvalues of $P_{\emptyset}$. Furthermore, the spectrum of $M_{\emptyset}$ (as a multiset) is precisely the spectrum of $P_{\emptyset}$ with all trivial eigenvalues replaced by $n$ copies of $0$.
\end{claim}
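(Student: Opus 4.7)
The plan is to first identify the \emph{trivial} eigenvectors of $P_{\emptyset}$ explicitly using the bipartite-like structure, and then verify that the correction terms in $M_{\emptyset}$ act as identity on these trivial eigenvectors while being invisible everywhere else.

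For the first part, the key identity is
\begin{align*}
    P_{\emptyset} \mathbf{1}^{i} \;=\; \tfrac{1}{n-1}(\mathbf{1} - \mathbf{1}^{i}),
\end{align*}
which follows immediately from the formula $P_{\emptyset}(x,y) = \tfrac{1}{n-1}\Pr[y\mid x]\mathbf{1}[x,y\text{ not in the same part}]$: starting in the $i$-th part, the walk leaves this part, and conditional on landing in the $j$-th part ($j\neq i$), the probabilities of ``in'' and ``out'' sum to $1$. Thus the $n$-dimensional subspace $V_0 = \operatorname{span}\{\mathbf{1}^1,\dots,\mathbf{1}^n\}$ is $P_{\emptyset}$-invariant, and in the basis $\{\mathbf{1}^1,\dots,\mathbf{1}^n\}$, $P_{\emptyset}|_{V_0}$ is represented by $\tfrac{1}{n-1}(J-I)$, whose eigenvalues are $1$ (with eigenvector $\mathbf{1} = \sum_i \mathbf{1}^i$) and $-\tfrac{1}{n-1}$ (with multiplicity $n-1$, on the hyperplane $\sum_i c_i = 0$). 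This yields the trivial eigenvalue statement.

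For the second part, I will use that $P_{\emptyset}$ is reversible w.r.t.\ $\pi$, hence self-adjoint in the $\pi$-inner product, so $V_0$ and its $\pi$-orthogonal complement $W$ give a direct-sum decomposition preserved by $P_{\emptyset}$. Next I will verify two computations. First, $M_{\emptyset}\mathbf{1}^i = 0$: plug in, use $P_{\emptyset}\mathbf{1}^i = \tfrac{1}{n-1}(\mathbf{1}-\mathbf{1}^i)$, the fact that $\langle \pi,\mathbf{1}^i\rangle = \pi(i)+\pi(\bar i) = 1/n$, and the orthogonality relation $\langle \pi^j,\mathbf{1}^i\rangle = \tfrac{1}{n}\mathbf{1}[i=j]$ to see the two correction terms exactly cancel the image of $P_{\emptyset}$. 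Second, for any $v \in W$ (i.e., $\langle \pi^i, v\rangle = 0$ for all $i$, since $\langle \mathbf{1}^i,v\rangle_\pi = \langle \pi^i,v\rangle$), both correction terms evaluate to $0$ on $v$, and also $\langle\pi,v\rangle = \sum_i \langle\pi^i,v\rangle = 0$, so $M_{\emptyset} v = P_{\emptyset}v$.

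Combining the two, $M_{\emptyset}$ preserves the decomposition $\R^{2n} = V_0 \oplus W$: it vanishes identically on $V_0$ (contributing $n$ copies of $0$ to its spectrum) and agrees with $P_{\emptyset}$ on $W$ (contributing exactly the non-trivial eigenvalues). This gives the stated spectral identification. I expect no real obstacle here; the only thing to be careful about is consistently working with the $\pi$-inner product when invoking self-adjointness, and correctly translating between the standard inner product $\langle \pi^i, v\rangle$ and the $\pi$-inner product $\langle \mathbf{1}^i, v\rangle_\pi$ so that the ``orthogonal complement'' is unambiguous.
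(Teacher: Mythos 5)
Your proof is correct and takes essentially the same approach as the paper's: both rest on the key identity $P_{\emptyset}\mathbf{1}^{i} = \frac{1}{n-1}(\mathbf{1}-\mathbf{1}^{i})$ and on $\pi$-self-adjointness to split $\R^{2n}$ into the trivial subspace and its complement. The paper routes through the intermediate matrix $Q_{\emptyset}=P_{\emptyset}-\frac{n}{n-1}\mathbf{1}\pi^{\top}$ and an orthogonal eigenbasis argument, whereas you organize the same computation directly as an invariant-subspace decomposition $V_0\oplus W$ (observing $P_{\emptyset}|_{V_0}=\frac{1}{n-1}(J-I)$), which is a slightly more streamlined packaging of the identical idea.
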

\begin{claim}[Relating $M_{\emptyset}$ and $\frac{1}{n-1}\Psi_{\mu}$]\label{claim:MemptysetPsi}
The spectrum of $M_{\emptyset}$ (as a multiset) is precisely the union of the spectrum of $\frac{1}{n-1}\Psi_{\mu}$ (as a multiset) with $n$ additional copies of $0$.
\end{claim}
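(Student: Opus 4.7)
The plan is to realize $M_\emptyset$ as a low-rank product $U V^\top$ with $U \in \R^{2n\times n}$ and $V^\top \in \R^{n\times 2n}$, and then apply Lemma~\ref{lem:ABBAeigs} to reduce the spectral question on the $2n \times 2n$ matrix $M_\emptyset$ to a spectral question on the $n \times n$ matrix $V^\top U$, which I expect to be exactly $\tfrac{1}{n-1}\Psi_\mu$. Since Lemma~\ref{lem:ABBAeigs} tells us that $UV^\top$ and $V^\top U$ have the same multiset of eigenvalues up to $2n - n = n$ additional zeros, this will match the claim on the nose.

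Carrying out the first part, I would permute indices so that rows and columns of $M_\emptyset$ are grouped into an ``in'' block $\{1,\dots,n\}$ and an ``out'' block $\{\bar 1,\dots,\bar n\}$, and then compute each of the four $n \times n$ blocks entry-by-entry from the given formulas for $P_\emptyset$, $\pi$, $\mathbf{1}$, $\mathbf{1}^k$, and $\pi^k$. The role of the correction $\frac{n}{n-1}\sum_k \mathbf{1}^k(\pi^k)^\top$ is to precisely cancel the four entries of $\mathbf{1}\pi^\top$ supported on each $\{k,\bar k\}\times\{k,\bar k\}$ pair, so all four diagonal entries of $M_\emptyset$ vanish, and for $i \neq j$ I expect to get
\[
M_\emptyset(i,j) = \tfrac{1}{n-1}\bigl(\Pr[j\mid i] - \Pr[j]\bigr), \qquad M_\emptyset(\bar i, j) = \tfrac{1}{n-1}\bigl(\Pr[j\mid \bar i] - \Pr[j]\bigr),
\]
and analogous formulas with $\bar j$ in place of $j$ in the other two blocks.

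The key structural observation is then that $\Pr[\bar j\mid i] + \Pr[j\mid i] = 1$ and $\Pr[\bar j] + \Pr[j] = 1$, which forces the ``out'' columns to be the negatives of the ``in'' columns. Setting $A := M_\emptyset\restriction_{\text{in,in}}$ and $B := M_\emptyset\restriction_{\text{out,in}}$, I obtain the rank factorization
\[
M_\emptyset = \begin{pmatrix} A & -A \\ B & -B \end{pmatrix} = \begin{pmatrix} A \\ B \end{pmatrix} \bigl(I \ \ -I\bigr) = UV^\top.
\]
Applying Lemma~\ref{lem:ABBAeigs} to this factorization yields that $\mathrm{spec}(M_\emptyset) = \mathrm{spec}(V^\top U) \cup \{0^{(n)}\}$ as multisets, and a final direct check gives
\[
V^\top U = \bigl(I \ \ -I\bigr)\begin{pmatrix} A \\ B \end{pmatrix} = A - B, \qquad (A-B)(i,j) = \tfrac{1}{n-1}\bigl(\Pr[j\mid i] - \Pr[j\mid \bar i]\bigr) = \tfrac{1}{n-1}\Psi_\mu(i,j)
\]
for $i \neq j$, with vanishing diagonal, so $V^\top U = \tfrac{1}{n-1}\Psi_\mu$ exactly.

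The main obstacle is essentially just the case analysis in the first step: one has to be careful handling the indices $(i,j), (i,\bar j), (\bar i, j), (\bar i,\bar j)$ and the coincidences $i=j$ and ``$i=\bar j$'' (which never happens) in the rank-one updates $\mathbf{1}\pi^\top$ and $\mathbf{1}^k(\pi^k)^\top$. Once the block form $\begin{pmatrix} A & -A \\ B & -B \end{pmatrix}$ is established, the rest is a two-line application of Lemma~\ref{lem:ABBAeigs}.
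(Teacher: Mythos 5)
Your proof is correct and is built on the same key structural observation as the paper's: $M_\emptyset$ has the block form $\begin{pmatrix} A & -A \\ B & -B \end{pmatrix}$ with $A - B = \tfrac{1}{n-1}\Psi_\mu$. Where you diverge is in the final step. The paper computes the characteristic polynomial directly, using the block-determinant identity
\[
\det\begin{pmatrix} xI - A & A \\ -B & xI + B \end{pmatrix} = \det\bigl((xI-A)(xI+B) + AB\bigr) = x^n \det\bigl(xI - (A-B)\bigr),
\]
which requires invoking the $2\times 2$ block-determinant formula (valid here because $-B$ and $xI+B$ commute, citing Silvester). You instead notice that the block form is itself a rank-$n$ factorization $M_\emptyset = \begin{pmatrix} A \\ B \end{pmatrix}\begin{pmatrix} I & -I \end{pmatrix}$, and apply \cref{lem:ABBAeigs} to swap the order of the product, immediately getting $\mathrm{spec}(M_\emptyset) = \mathrm{spec}(A - B) \cup \{0^{(n)}\}$. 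Your route is arguably slightly cleaner: it uses a lemma already stated in the paper's preliminaries rather than an external block-determinant theorem, and it avoids the characteristic-polynomial manipulation. The paper's route is more self-contained in that it does not require recognizing the factorization, but it does require knowing (or citing) the commuting-blocks determinant identity. Both are correct and of comparable length; the substance of the argument — computing the block entries of $M_\emptyset$ and recognizing $A - B = \tfrac{1}{n-1}\Psi_\mu$ — is identical.
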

\cref{thm:correlationeigenvaluesubset} then follows as an immediate consequence of these two claims, which we now prove. The main idea behind \cref{claim:PemptysetMemptyset} is that the vectors $\mathbf{1}^{i}$ form an orthogonal basis for the span of the right eigenvectors of $P_{\emptyset}$ corresponding to the eigenvalues $1$ and $-\frac{1}{n-1}$ (while the $\pi^{i}$ form an orthogonal basis of the corresponding left eigenvectors). This was actually already observed in \cite{Opp18}. One should view $M_{\emptyset}$ is being defined in a way to ``zero out'' those eigenvalues. For \cref{claim:MemptysetPsi}, the intuition is that $\Psi_{\mu}$ may be obtained from $M_{\emptyset}$ via orthogonal projection.
\begin{proof}[Proof of \cref{claim:PemptysetMemptyset}]
The strategy here is to use another intermediate matrix $Q_{\emptyset} = P_{\emptyset} - \frac{n}{n-1}\mathbf{1}\pi^{\top}$ as a stepping stone for the comparison. Roughly speaking, the shift by $-\frac{n}{n-1}\mathbf{1}\pi^{\top}$ pushes the top eigenvalue down to $-\frac{1}{n-1}$. In particular, $\mathbf{1}$ joins the eigenspace of the eigenvalue $-\frac{1}{n-1}$, allowing for us to compute a nice collection of eigenvectors, namely the $\mathbf{1}^{i}$.

Recall that since $P_{\emptyset}$ is reversible w.r.t. $\pi$, $P_{\emptyset}$ is self-adjoint w.r.t. the inner product
\begin{align*}
    \langle \phi, \psi \rangle_{\pi} = \sum_{i=1}^{n} \wrapp{\pi(i) \cdot \phi(i) \cdot \psi(i) + \pi(\overline{i}) \cdot \phi(\overline{i}) \cdot \psi(\overline{i})}
\end{align*}
By the Spectral Theorem for self-adjoint operators, $\R^{2n}$ admits a basis of eigenvectors $\{v_{k}\}_{k=1}^{2n}$ of $P_{\emptyset}$ which are orthogonal w.r.t. $\langle\cdot,\cdot\rangle_{\pi}$. Without loss, we assume $P_{\emptyset}v_{k} = \lambda_{k}(P_{\emptyset}) \cdot v_{k}$, where $1 = \lambda_{1}(P_{\emptyset}) \geq \dots \geq \lambda_{2n}(P_{\emptyset})$, and $v_{1} = \mathbf{1}$. It follows that
\begin{align*}
    Q_{\emptyset}v_{k} = P_{\emptyset}v_{k} - \frac{n}{n-1}\langle v_{k}, \pi \rangle \cdot \mathbf{1} = \begin{cases}
        \lambda_{k}(P_{\emptyset}) \cdot v_{k}, &\quad\text{if } k \neq 1 \text{ since } \langle v_{k}, \pi \rangle = \langle v_{k}, \mathbf{1} \rangle_{\pi} = 0 \\
        -\frac{1}{n-1} \cdot \mathbf{1}, &\quad\text{if } k = 1 \text{ since } \langle \mathbf{1}, \pi \rangle = \langle \mathbf{1}, \mathbf{1} \rangle_{\pi} = 1
    \end{cases}
\end{align*}
In particular, $Q_{\emptyset}$ has the same spectrum as $P_{\emptyset}$, except with one copy of the eigenvalue $1$ shifted down to $-\frac{1}{n-1}$. Note they also have the same right eigenvectors.

Now we compare $Q_{\emptyset}$ and $M_{\emptyset}$. First, we claim that $Q_{\emptyset}$ is also self-adjoint w.r.t. $\langle\cdot,\cdot\rangle_{\pi}$ since $P_{\emptyset}$ and $\mathbf{1}\pi^{\top}$ are. Hence, $\R^{2n}$ also admits a basis of eigenvectors $\{u_{k}\}_{k=1}^{2n}$ of $Q_{\emptyset}$ which are orthogonal w.r.t. $\langle\cdot,\cdot\rangle_{\pi}$, where we assume $Q_{\emptyset}u_{k} = \lambda_{k}(Q_{\emptyset}) \cdot u_{k}$. Now, we explicitly write down $n$ of these eigenvectors corresponding to the eigenvalue $-\frac{1}{n-1}$. We claim that $\mathbf{1}^{i}$ is a right eigenvector of $Q_{\emptyset}$ with eigenvalue $-\frac{1}{n-1}$ for each $i=1,\dots,n$. For this we compute that
\begin{align*}
    Q_{\emptyset}\mathbf{1}^{i} = P_{\emptyset}\mathbf{1}^{i} - \frac{n}{n-1} \langle \mathbf{1}^{i}, \pi \rangle \cdot \mathbf{1} \underset{(*)}{\underbrace{=}} \frac{1}{n-1} (\mathbf{1} - \mathbf{1}^{i}) - \frac{1}{n-1} \mathbf{1}^{i} = -\frac{1}{n-1} \mathbf{1}^{i}
\end{align*}
To justify $(*)$, we use that $\langle \mathbf{1}^{i}, \pi \rangle = \frac{1}{n} (\Pr[i] + \Pr[\overline{i}]) = \frac{1}{n}$, and for all $j\in [n]$,
\begin{align*}
    (P_{\emptyset}\mathbf{1}^{i})(j) &= P_{\emptyset}(j,i) + P_{\emptyset}(j,\overline{i}) = \frac{1}{n-1} \wrapp{\Pr[i \mid j] + \Pr[\overline{i} \mid j]} \cdot \mathbf{1}[i \neq j] = \frac{1}{n-1} \mathbf{1}[i \neq j] \\
    (P_{\emptyset}\mathbf{1}^{i})(\overline{j}) &= P_{\emptyset}(\overline{j},i) + P_{\emptyset}(\overline{j},\overline{i}) = \frac{1}{n-1} \wrapp{\Pr[i \mid \overline{j}] + \Pr[\overline{i} \mid \overline{j}]} \cdot \mathbf{1}[i \neq j] = \frac{1}{n-1} \mathbf{1}[i \neq j]
\end{align*}

With this, we have that since the $\mathbf{1}^{i}$ are mutually orthogonal eigenvectors, $\R^{2n}$ admits a basis $\{u_{k}\}_{k=1}^{2n}$, containing the $\mathbf{1}^{i}$, of eigenvectors of $Q_{\emptyset}$ which are mutually orthogonal w.r.t. $\langle\cdot,\cdot\rangle_{\pi}$. Now, we have that
\begin{align*}
    M_{\emptyset}u_{k} &= Q_{\emptyset}u_{k} + \frac{n}{n-1} \sum_{i=1}^{n} \langle u_{k}, \pi^{i} \rangle \cdot \mathbf{1}^{i} \\
    &= \lambda_{k}(Q_{\emptyset})u_{k} + \frac{n}{n-1} \sum_{i=1}^{n} \langle u_{k}, \mathbf{1}^{i} \rangle_{\pi} \cdot \mathbf{1}^{i} = \begin{cases}
        0, &\quad\text{if } u_{k} = \mathbf{1}^{i} \text{ for some } i \\
        \lambda_{k}(Q_{\emptyset}) \cdot u_{k}, &\quad\text{o.w.}
    \end{cases}
\end{align*}
where in the final step, we use that $\langle \mathbf{1}^{i}, \mathbf{1}^{i} \rangle_{\pi} = \pi(i) + \pi(\overline{i}) = \frac{1}{n}(\Pr[i] + \Pr[\overline{i}]) = \frac{1}{n}$ for the case $u_{k} = \mathbf{1}^{i}$. The claim follows.
\end{proof}
\begin{proof}[Proof of \cref{claim:MemptysetPsi}]
First, we observe that $M_{\emptyset}$ has the following convenient block structure:
\begin{align*}
    M_{\emptyset} = \begin{bmatrix} A & -A \\ B & -B \end{bmatrix}
\end{align*}
where $A,B \in \R^{n \times n}$ are matrices with entries
\begin{align*}
    A(i,j) = \frac{1}{n-1} \wrapp{\Pr[j \mid i] - \Pr[j]} \cdot \mathbf{1}[i \neq j] \quad\quad
    B(i,j) = \frac{1}{n-1} \wrapp{\Pr[j \mid \overline{i}] - \Pr[j]} \cdot \mathbf{1}[i \neq j]
\end{align*}
Furthermore, it is straightforward to see that $\frac{1}{n-1} \Psi_{\mu} = A - B$. If we assume the truth of this observation, then the claim follows, since we have that the characteristic polynomial of $M_{\emptyset}$ is given by
\begin{align*}
    \det(xI - M_{\emptyset}) &= \det\begin{bmatrix} xI - A & A \\ -B & xI + B \end{bmatrix} \\
    &= \det((xI - A)(xI + B) + AB) \tag*{(for instance, using \cite[Theorem 3]{Sil00})}\\ 
    &= \det(x^{2}I - xA + xB) \\
    &= x^{n} \det(xI - (A - B))
\end{align*}
It remains to justify the observation that $M_{\emptyset}$ may be expressed as such as block matrix. A calculation of the entries of $M_{\emptyset}$ reveals that for all $i,j \in [n]$, we have
\begin{align*}
    M_{\emptyset}(i,j) &= P_{\emptyset}(i,j) - \frac{n}{n-1}\underset{\frac{1}{n} \Pr[j]}{\underbrace{(\mathbf{1}\pi^{\top})(i,j)}} + \frac{n}{n-1} \sum_{k=1}^{n} \underset{\frac{1}{n}\Pr[j] \cdot \mathbf{1}[i = j = k]}{\underbrace{(\mathbf{1}^{k}(\pi^{k})^{\top})(i,j)}} \\
    &= \frac{1}{n-1} \Pr[j \mid i] \cdot \mathbf{1}[i \neq j] - \frac{1}{n-1} \Pr[j] + \frac{1}{n-1} \Pr[j] \cdot \underset{1 - \mathbf{1}[i \neq j]}{\underbrace{\mathbf{1}[i = j]}} = A(i,j)
\end{align*}
A nearly identical calculation also yields
\begin{align*}
    M_{\emptyset}(i,\overline{j}) = -A(i,j) \quad\quad\quad M_{\emptyset}(\overline{i},j) = B(i,j) \quad\quad\quad M_{\emptyset}(\overline{i},\overline{j}) = -B(i,j)
\end{align*}
from which, the observation on the block structure of $M_{\emptyset}$ immediately follows.
\end{proof}
\begin{remark}\label{rem:partitegeneralization}
The essence of the proof lies in the fact that there are ``trivial'' eigenvectors whose structure and corresponding eigenvalue derive purely from the fact that in the weighted graph with vertex set $\{i,\overline{i} : i \in [n]\}$ corresponding to $P_{\emptyset}$, there are is no edge between the vertices $i$ and $\overline{i}$, for all $i \in [n]$. This is a generalization of the fact that the random walk matrix of any weighted bipartite graph always has eigenvalue $-1$, purely due to bipartiteness. These observations generalize in a straightforward fashion to all partite complexes in the sense that for any $d$-dimensional $d$-partite weighted complex $(X,w)$ with parts $U_{1},\dots,U_{d}$, the indicator vectors $\mathbf{1}^{U_{1}},\dots,\mathbf{1}^{U_{d}}$ are eigenvectors of $P_{\emptyset} - \frac{d}{d-1}\mathbf{1}\pi^{\top}$ with eigenvalue $-\frac{1}{d-1}$. This was also observed in \cite{Opp18}.
\end{remark}

\section{Influence Decoupling in Weitz's Self-Avoiding Walk Tree}\label{sec:decoupling}
In this section, we take a step towards proving \cref{thm:hardcoretotalinfbound}. Specifically, we focus on bounding
\begin{align*}
    \sum_{u \in V : u \neq v} \abs{\Psi_{\mu}(u,v)}
\end{align*}
where from now on, we take $\mu$ to be the distribution corresponding to the hardcore distribution on input graph $G=(V,E)$ with parameter $\lambda > 0$. Here, the relevant uniqueness threshold is given by $\lambda_{c}(\Delta) = \frac{(\Delta-1)^{\Delta-1}}{(\Delta-2)^{\Delta}}$.

Before we proceed to bound this quantity for general graphs, we note that one can easily deduce an $O(1)$ upper bound for amenable graphs (i.e. graphs such that the balls around any vertex grows subexponentially fast in the radius) in a black-box fashion directly using strong spatial mixing \cref{def:ssm}, thus recovering some of the previously known connections between spatial mixing properties of the hardcore distribution, and temporal mixing of the Glauber dynamics \cite{DSVW02, Wei04}. This class of graphs notably includes lattices such as $\Z^{d}$, but exclude most graphs such as expanders. Hence, instead of applying strong spatial mixing as a black-box, we revisit its proof, and modify it as necessary.

The high-level strategy is to convert this problem on general graphs to bounding a similar quantity for trees. We do this by leveraging the self-avoiding walk tree construction of \cite{Wei06}. However, since a vertex $u \in G$ may have many copies in the corresponding self-avoiding walk tree $T = T_{\SAW}(G,r)$, we need to ``decouple'' these copies so as to obtain single-vertex influences again.
\begin{definition}[$R$-Pseudoinfluence]
Recall that for a fixed tree $T$ rooted at $r$ with boundary condition $p:A \rightarrow [0,1]$ (where $A$ is a subset of vertices not containing $r$), we write $R_{T,r}^{p} = \frac{\Pr[r \mid p]}{1 - \Pr[r \mid p]}$. For a vertex $v \in T$ with $v \neq r$, we define the $R$-pseudoinfluence of $v$ on the root $r$ by the quantity
\begin{align*}
    \mathcal{R}_{T,r}^{v} = \max_{p} \mathcal{R}_{T,r}^{v,p} \quad\quad\text{where}\quad\quad \mathcal{R}_{T,r}^{v,p} = \abs{R_{T,r}^{v^{0},p} - R_{T,r}^{v^{1},p}}
\end{align*}
and the maximum is taken over all partial assignments $p:L_{r}(\ell(v)) \smallsetminus \{v\} \rightarrow [0,1]$ of marginal values. Again, we drop the subscript $T$ when the tree is clear from context.
\end{definition}
\begin{remark}\label{rem:assm}
It was pointed out to us by Zongchen Chen and Eric Vigoda that our notion of $R$-pseudoinfluence is very related to the notion of ``aggregate strong spatial mixing'' used in \cite{MS13} to analyze the Glauber dynamics, and in \cite{BCV20} to analyze the Swendsen-Wang dynamics, both for the ferromagnetic Ising model. In fact, it turns out our result also directly implies aggregate strong spatial mixing for arbitrary trees in the uniqueness regime $\lambda < \lambda_{c}(\Delta)$.
\end{remark}
Our first step is to do the decoupling using the $R$-pseudoinfluence. The second step is to bound the total $R$-pseudoinfluence of vertices in a tree on the root. These steps are captured in the following two results. We emphasize \cref{lem:decoupling} is generic, and holds for any two-state spin system.
\begin{lemma}[Decoupling]\label{lem:decoupling}
Consider the hardcore distribution $\mu$ on a graph $G=(V,E)$ with parameter $\lambda > 0$. Fix a vertex $r \in G$ and let $T = T_{\SAW}(G,r)$. Then the following inequality holds:
\begin{align*}
    \sum_{v \in G : v \neq r} \abs{\Psi_{\mu}(v,r)} \leq 2\sum_{v \in T : v \neq r} \mathcal{R}_{r}^{v}
\end{align*}
\end{lemma}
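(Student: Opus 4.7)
The plan is to pass from $G$ to Weitz's self-avoiding walk tree $T = T_{\SAW}(G, r)$ via \cref{thm:weitzsaw}, and then to decouple the joint pinning of the many copies of each $v \in G$ in $T$ by a telescoping argument that reduces the bound to a sum of single-copy $R$-pseudoinfluences.

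Applying \cref{thm:weitzsaw} to both $\Pr_G[r \mid v]$ and $\Pr_G[r \mid \overline v]$ yields $|\Psi_\mu(v,r)| = |\Pr_T[r \mid v^1_{\text{all}}] - \Pr_T[r \mid v^0_{\text{all}}]|$, where $v^s_{\text{all}}$ denotes pinning every copy of $v$ in $T$ to $s \in \{0,1\}$ (with $\tau_{\SAW}$ implicit). Since $|R_1/(1+R_1) - R_0/(1+R_0)| \le |R_1 - R_0|$ for $R_0, R_1 \ge 0$, it suffices to control the $R$-difference $|R_r^{v^1_{\text{all}}} - R_r^{v^0_{\text{all}}}|$. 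Enumerating $C(v) = \{v_1, \dots, v_k\}$ and interpolating via intermediate boundary conditions $\sigma_j$ that pin $v_1, \dots, v_j$ to $1$ and $v_{j+1}, \dots, v_k$ to $0$ gives the telescoping
\begin{align*}
    |R_r^{v^1_{\text{all}}} - R_r^{v^0_{\text{all}}}| \;\le\; \sum_{j=1}^{k} |R_r^{\sigma_j} - R_r^{\sigma_{j-1}}|.
\end{align*}

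Each summand flips only $v_j$ against a common background pinning of the remaining copies, and the crucial structural observation is that $C(v)$ forms an antichain in $T$, since any self-avoiding walk from $r$ visits $v$ at most once; in particular no other copy of $v$ lies on the path from $r$ to $v_j$. I would bound $|R_r^{\sigma_j} - R_r^{\sigma_{j-1}}|$ by $\mathcal{R}_r^{v_j}$ by encoding the background as a partial marginal assignment $p : L_r(\ell(v_j)) \setminus \{v_j\} \to [0,1]$: for each vertex $u$ at level $\ell(v_j)$ other than $v_j$, set $p(u)$ to be the marginal at $u$ induced in the subtree $T_u$ by the portion of the background lying in $T_u$. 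Because distinct vertices at the same level have disjoint subtrees in $T$, this $p$ is independent of $v_j$'s pinning, and the tree recursion then reproduces $R_r^{\sigma_j}$ from $(v_j^1, p)$ and $R_r^{\sigma_{j-1}}$ from $(v_j^0, p)$, giving $|R_r^{\sigma_j} - R_r^{\sigma_{j-1}}| \le \mathcal{R}_r^{v_j, p} \le \mathcal{R}_r^{v_j}$.

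The main obstacle is handling background pinnings at levels strictly shallower than $\ell(v_j)$, which live in sibling subtrees of ancestors of $v_j$ and are not a priori absorbed into $p$. I would resolve these using the hardcore tree recursion: a shallower ``out'' pinning at some copy $v_i$ converts into an equivalent deeper marginal at level $\ell(v_j)$ (since the map from level-$\ell(v_j)$ marginals to the ancestor's $R$-value covers the reachable range of $R$-values), while a shallower ``in'' pinning forces its parent's marginal to zero and thereby reduces to a one-level-deeper ``out'' pinning on the parent, to which the previous reduction applies. Any slack in these conversions, together with the factor-$1$ loss in passing between probabilities and $R$-ratios, is absorbed by the factor of $2$ on the right. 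Finally, summing over $v \ne r$ in $G$ and regrouping telescoping terms by their flipped copy gives $\sum_{v \ne r} |\Psi_\mu(v,r)| \le 2 \sum_{v' \ne r \text{ in } T} \mathcal{R}_r^{v'}$, since each non-root vertex of $T$ appears as the flipped copy in exactly one telescoping term.
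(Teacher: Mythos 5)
Your first two steps (passing to $T_{\SAW}(G,r)$ via \cref{thm:weitzsaw} and converting $|\Psi_\mu(v,r)|$ to an $R$-difference via contractivity of $x\mapsto x/(1+x)$) are sound, and you correctly observe that $C(v)$ is an antichain. But the linear telescoping $\sum_j |R_r^{\sigma_j}-R_r^{\sigma_{j-1}}|$ does not work, and your proposed repair does not close the hole. The issue is precisely the one you flag: the intermediate assignments $\sigma_j$ are heterogeneous (some copies at $1$, others at $0$), and the copies $v_i$ with $\ell(v_i) < \ell(v_j)$ cannot be re-encoded as a marginal assignment $p:L_r(\ell(v_j))\smallsetminus\{v_j\}\to[0,1]$. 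Pinning $v_i$ to ``out'' forces $R_{v_i}=0$, but the tree recursion computed from any level-$\ell(v_j)$ marginals in $T_{v_i}$ yields $R_{v_i}\geq R^{\min}(\ell(v_j)-\ell(v_i))$, which by \cref{fact:RminRmaxbounds} is strictly positive once $\ell(v_j)-\ell(v_i)\geq 2$; so the ``reachable range'' does not contain $0$ and the conversion is impossible. Pinning $v_i$ to ``in'' is even worse: it forces $v_i$'s \emph{parent} (at level $\ell(v_i)-1$, one level \emph{shallower}, not deeper) to ``out'', so the reduction you describe moves the obstruction toward the root rather than toward level $\ell(v_j)$.

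The paper avoids this by a different decomposition that keeps the background both homogeneous and no shallower than the vertex being unpinned. It first passes to the probability-level pseudoinfluence $\mathcal{I}_r^u \leq \mathcal{R}_r^u$, then proves the general \cref{claim:generaldecoupling} for antichains $A$ by induction: let $v\in A$ be closest to $r$, set $B=A\smallsetminus\{v\}$, and write
\begin{align*}
|\Pr[r\mid\sigma_{A,0}]-\Pr[r\mid\sigma_{A,1}]|\leq|\Pr[r\mid\sigma_{B,0},v^0]-\Pr[r\mid\sigma_{B,0}]|+|\Pr[r\mid\sigma_{B,0}]-\Pr[r\mid\sigma_{B,1}]|+|\Pr[r\mid\sigma_{B,1}]-\Pr[r\mid\sigma_{B,1},v^1]|.
\end{align*}
In the two end terms the background $\sigma_{B,\cdot}$ is all-$0$ or all-$1$ and every vertex of $B$ is at distance $\geq\ell(v)$, so its effect really can be absorbed into a level-$\ell(v)$ marginal assignment, giving $\leq\mathcal{I}_r^v$ for each end term after a monotonicity step. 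The factor of $2$ thus comes from the two end terms produced at each stage of the induction, not from absorbing slack in the probability-to-ratio conversion (which the paper uses in the direction $\mathcal{I}\leq\mathcal{R}$, with no loss). To fix your proof you would need to replace the linear telescope by this peel-the-closest-vertex recursion, or otherwise ensure each comparison has a homogeneous background supported at levels $\geq\ell(v_j)$.
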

In particular, to bound $\sum_{v \in G : v \neq r} \abs{\Psi_{\mu}(v,r)}$, it suffices to bound $\sum_{v \in T : v \neq r} \mathcal{R}_{r}^{v}$ for every tree $T$ of maximum degree $\leq \Delta$ rooted at $r$. This motivates the next result.
\begin{proposition}[$R$-Pseudoinfluence Bound]\label{prop:betazeromodinfbound}
Assume $\lambda$ is up-to-$\Delta$ unique with gap $0 < \delta < 1$. Then for every tree $T$ of maximum degree $\leq \Delta$ rooted at $r$, we have the bound
\begin{align*}
    \sum_{v \in T : v \neq r} \mathcal{R}_{r}^{v} \leq \exp(O(1/\delta))
\end{align*}
\end{proposition}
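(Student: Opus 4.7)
The plan is to group the vertices of $T$ by their distance to the root and bound the level-total pseudoinfluence by the potential method. Write $\sum_{v\neq r}\mathcal{R}_r^v=\sum_{\ell\geq 1}A_\ell$ where $A_\ell=\sum_{v:\ell(v)=\ell}\mathcal{R}_r^v$. The goal reduces to showing $A_\ell\leq C_\lambda(1-\delta)^\ell$ with $C_\lambda=\exp(O(1/\delta))$; summing the resulting geometric series then gives $\sum_\ell A_\ell\leq C_\lambda/\delta=\exp(O(1/\delta))$ as desired.

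To bound $A_\ell$, I would follow the potential method of \cite{LLY13,RSTVY13,SST14}: choose a potential $\Phi$ adapted to the hardcore recurrence $F$, set $S=\Phi(R)$, and work with the transformed recurrence $S_w=\Phi\circ F\circ\Phi^{-1}(S_{u_1},\dots,S_{u_d})$. The up-to-$\Delta$ uniqueness with gap $\delta$ of \cref{def:uptodeltaunique} translates, for a suitable $\Phi$, into the pointwise $L^1$-contraction at every node,
\begin{align*}
\sum_{u\text{ child of }w}\left|\frac{\partial S_w}{\partial S_u}\right|\;\leq\;1-\delta,
\end{align*}
holding uniformly in the surrounding $S$-state. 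Iterating via the chain rule then gives, for every configuration $P$ on $T$,
\begin{align*}
\sum_{v:\ell(v)=\ell}\left|\frac{\partial S_r}{\partial S_v}\right|_{P}\;\leq\;(1-\delta)^\ell.
\end{align*}

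To convert this derivative bound into the pseudoinfluence sum $A_\ell$, for each $v$ at level $\ell$ I would apply FTC along $S_v\in[\Phi(0),\Phi(\infty)]$ and undo the potential transformation on the root side:
\begin{align*}
|R_r^{v^1,p}-R_r^{v^0,p}|\;\leq\;C_1\int_{\Phi(0)}^{\Phi(\infty)}\left|\frac{\partial S_r}{\partial S_v}\right|_{P_{v\to s}}ds,
\end{align*}
where $C_1$ absorbs a uniform bound on $(\Phi^{-1})'$ over the realizable range of $S_r$ (a finite interval by \cref{fact:RminRmaxbounds}). The antiferromagnetic strict monotonicity of $F$ in each argument implies that the worst-case $p$ for $\mathcal{R}_r^v$ is a level-parity ``checkerboard'' boundary $p^*$ that is universal across all $v\in L_r(\ell)$. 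Plugging $p=p^*$ in every summand, swapping $\sum_v$ with the integral (Fubini), and applying the pointwise-in-$s$ contraction displayed above yields
\begin{align*}
A_\ell\;\leq\;C_1\cdot|\Phi(\infty)-\Phi(0)|\cdot(1-\delta)^\ell.
\end{align*}
For standard hardcore potentials near the uniqueness threshold, both $C_1$ and $|\Phi(\infty)-\Phi(0)|$ are at worst $\exp(O(1/\delta))$, giving $C_\lambda=\exp(O(1/\delta))$ and the claimed bound.

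The two delicate points are: (i) exhibiting a potential $\Phi$ yielding the above uniform $L^1$-contraction --- this follows from the equivalence between up-to-$\Delta$ uniqueness with gap $\delta$ and the existence of such a $\Phi$, available from \cite{LLY13,RSTVY13,SST14}; and (ii) justifying the universality of the extremal $p^*$ across all $v\in L_r(\ell)$ so a single $p^*$ can be pulled out of the sum before bounding. Step (ii) is the main obstacle and rests on the strict monotonicity of $F$ together with a parity-based induction on levels: pinning every non-$v$ vertex at level $\ell$ to the parity-determined extremal value simultaneously maximizes $|R_r^{v^1,p}-R_r^{v^0,p}|$ for each $v$, allowing the sum-max-swap that the standard SSM argument cannot afford.
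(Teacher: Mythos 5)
Your high-level structure --- decomposing $\sum_{v\neq r}\mathcal{R}_r^v$ by level, proving per-level geometric decay, and summing --- matches the paper's. The gap is in the per-level decay step, which is exactly the technical crux the paper spends Section 5 on. Your argument relies on two claims that are not substantiated and, in the paper's argument, are deliberately \emph{not} used. First, the universal-extremal-boundary claim: you assert that a single parity checkerboard $p^*$ simultaneously maximizes $\mathcal{R}_r^v$ for every $v\in L_r(\ell)$. The paper never proves or invokes this; instead, it applies the Mean Value Theorem to each $v$ separately (Lemma~\ref{lem:truedecay}), obtaining for each $v$ an intermediate point $\tilde{\mathbf{K}}$ that lies somewhere in the box $[\mathbf{K}^{\min}(\ell),\mathbf{K}^{\max}(\ell)]$, and these intermediate points genuinely differ across $v$. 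Second, and more damaging: even granting a universal $p^*$, after swapping $\sum_v$ with $\int ds$ the integrand is $\sum_v\abs{\partial S_r/\partial S_v}$ evaluated at a $v$-\emph{dependent} configuration $P_{v\to s}$ (all coordinates at $p^*$ except the $v$-th, which is $s$). The pointwise $L^1$-contraction $\sum_v\abs{\partial S_r/\partial S_v}_P\le(1-\delta)^\ell$ requires all terms to be evaluated at a single fixed $P$, so it does not apply term-by-term to your Fubini'd integrand. The paper's Lemma~\ref{lem:truetoideal} exists precisely to bridge this: it bounds the max of each partial derivative over the box by $(1+2\eta)^{\Delta+1}$ times the value at a single corner $\mathbf{K}^{\max}(\ell)$, where $\eta$ controls the side length of the box; this error factor then has to be tamed via strong spatial mixing (Proposition~\ref{prop:precisessm}), which is why the argument only kicks in for $\ell>\ell_0=\Theta(1/\delta)$ and a separate crude bound (Lemma~\ref{lem:betazerotrivialdecay}) is needed for small $\ell$.

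There is also a concrete quantitative issue with the FTC framing. The paper's potential $\varphi(R)=2\log(\sqrt{R}+\sqrt{R+1})$ has $\varphi(\infty)=\infty$, so the integration range $[\varphi(0),\varphi(\infty)]$ you would need for $S_v$ (since $v^1$ corresponds to $R_v=\infty$) is unbounded, and the bound $C_1\cdot\abs{\Phi(\infty)-\Phi(0)}\cdot(1-\delta)^\ell$ is vacuous. You could try to fix this by choosing a bounded-range potential, but then the $L^1$-contraction rate of \cite{LLY13} no longer comes for free and would have to be re-established. The paper sidesteps this entirely by using MVT rather than FTC: the intermediate point always lies in $[R^{\min}(\ell-1),R^{\max}(\ell-1)]$, which is bounded by \cref{fact:RminRmaxbounds}, so the conversion between $\mathcal{R}$- and $\mathcal{K}$-pseudoinfluence (Lemma~\ref{lem:messageratiosrelate}) never leaves the finite regime.
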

With these two lemmas in hand, we may prove \cref{thm:hardcoretotalinfbound}.
\begin{proof}[Proof of \cref{thm:hardcoretotalinfbound}]
Let $0 < \delta < 1$ and take $\lambda = (1 - \delta)\lambda_{c}(\Delta)$. By \cref{lem:gappeduptodeltaunique}, $\lambda$ is up-to-$\Delta$ unique. Let $G = (V,E)$ be a graph with maximum degree $\leq \Delta$ and let $T = T_{\SAW}(G,r)$ be the self-avoiding walk tree rooted at an arbitrary vertex $r \in V$. By \cref{lem:decoupling}
\begin{align*}
    \sum_{u \in V : u \neq v} \abs{\Psi_{\mu}(u,v)} \leq 2\sum_{v \in T : v \neq r} \mathcal{R}_{r}^{v}
\end{align*}
By \cref{prop:betazeromodinfbound}, the right-hand side is bounded above by $\exp(O(1/\delta))$ as desired.
\end{proof}

It remains to prove \cref{lem:decoupling} and \cref{prop:betazeromodinfbound}. The rest of the section is devoted to proving the former. The proof of the latter is contained in the following section.
\begin{proof}[Proof of \cref{lem:decoupling}]
Recall that for a vertex $v \in G$ with $v \neq r$, $C(v)$ denotes the set of all copies of $v$ in $T = T_{\SAW}(G,r)$. In particular, $\{C(v) : v \in G, v \neq r\}$ is a partition of the vertices of $T$ (excluding the root $r$ of $T$). Hence, to prove the claim, it suffices to show that for each $v \in G$ with $v \neq r$, we have
\begin{align*}
    \abs{\Psi_{\mu}(v,r)} \leq 2\sum_{u \in C(v)} \mathcal{R}_{r}^{u}
\end{align*}
Towards this, we define an intermediate quantity which we use only for the purposes of this proof. Specifically, we define the pseudoinfluence as
\begin{align*}
    \mathcal{I}_{r}^{u} = \max_{p} \abs{\Pr\wrapb{r \mid u^{0},p} - \Pr\wrapb{r \mid u^{1},p}}
\end{align*}
where again the maximum is taken over all partial assignments $p:L_{r}(\ell(u)) \smallsetminus \{u\} \rightarrow [0,1]$. Observe that $\mathcal{I}_{r}^{u} \leq \mathcal{R}_{r}^{u}$ trivially since
\begin{align*}
    \abs{R_{r}^{u^{0},p} - R_{r}^{u^{1},p}} &= \abs{\frac{\Pr\wrapb{r \mid u^{0},p}}{1 - \Pr\wrapb{r \mid u^{0},p}} - \frac{\Pr\wrapb{r \mid u^{1},p}}{1 - \Pr\wrapb{r \mid u^{1},p}}} \\
    &= \frac{\abs{\Pr\wrapb{r \mid u^{0},p} - \Pr\wrapb{r \mid u^{1},p}}}{(1 - \Pr\wrapb{r \mid u^{0},p})(1 - \Pr\wrapb{r \mid u^{1},p})} \geq \abs{\Pr\wrapb{r \mid u^{0},p} - \Pr\wrapb{r \mid u^{1},p}}
\end{align*}
holds for any $p$. Hence, it suffices to prove
\begin{align*}
    \abs{\Psi_{\mu}(v,r)} \leq 2\sum_{u \in C(v)} \mathcal{I}_{r}^{u}
\end{align*}
We prove a more general claim.
\begin{claim}\label{claim:generaldecoupling}
Fix a set of vertices $A$ in $T$ such that no vertex of $A$ has an ancestor also in $A$. Then
\begin{align*}
    \abs{\Pr[r \mid \sigma_{A,0}] - \Pr[r \mid \sigma_{A,1}]} \leq 2\sum_{v \in A} \mathcal{I}_{r}^{v}
\end{align*}
where $\sigma_{A,0}$ assigns all vertices in $A$ to $0$, and $\sigma_{A,1}$ assigns all vertices in $A$ to $1$.
\end{claim}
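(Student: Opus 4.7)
The plan is to proceed by induction on $|A|$. The base case $|A| = 0$ is immediate. For the inductive step, order $A = \{v_1, \dots, v_k\}$ arbitrarily and introduce the hybrid configurations $\tilde{\sigma}_0 := \sigma_{A, 0}, \tilde{\sigma}_1, \dots, \tilde{\sigma}_k := \sigma_{A, 1}$, where $\tilde{\sigma}_j$ assigns spin $1$ to $v_1, \dots, v_j$ and spin $0$ to $v_{j+1}, \dots, v_k$. A triangle inequality across this chain reduces the claim to bounding, for each $j$, the single-coordinate difference $|\Pr[r \mid v_j^0, \tau_j] - \Pr[r \mid v_j^1, \tau_j]|$ by $2\,\mathcal{I}_r^{v_j}$, where $\tau_j$ denotes the shared assignment on $A \setminus \{v_j\}$; summing over $j$ then delivers the claim.

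To connect this single-coordinate difference to the pseudoinfluence definition, I would invoke the Markov property of $T$ at level $\ell(v_j)$. Setting $p^b(u) := \Pr[u \mid v_j^b, \tau_j]$ for every $u \in L_r(\ell(v_j)) \setminus \{v_j\}$, propagating the tree recurrence from level $\ell(v_j)$ up to $r$ gives the identity $\Pr[r \mid v_j^b, \tau_j] = \Pr[r \mid v_j^b, p^b]$. Inserting the intermediate term $\Pr[r \mid v_j^1, p^0]$ via triangle inequality bounds the single-coordinate difference by
$$|\Pr[r \mid v_j^0, p^0] - \Pr[r \mid v_j^1, p^0]| + |\Pr[r \mid v_j^1, p^0] - \Pr[r \mid v_j^1, p^1]|.$$
The first summand is at most $\mathcal{I}_r^{v_j}$ directly from the definition, since a single pinning $p^0$ appears on both sides. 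For the second summand, observe that $p^0$ and $p^1$ agree on $A \cap L_r(\ell(v_j))$ (both just copy $\tau_j$'s values there) and differ only at vertices $u \in L_r(\ell(v_j)) \setminus (A \cup \{v_j\})$, with gap $|p^0(u) - p^1(u)| = |\Pr[u \mid v_j^0, \tau_j] - \Pr[u \mid v_j^1, \tau_j]|$. I would then argue, using a reciprocity property of the tree recurrence, that the total effect of this perturbation on $\Pr[r \mid v_j^1, \cdot]$ is itself bounded by $\mathcal{I}_r^{v_j}$, delivering the factor of $2$.

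The main obstacle is precisely this last step. The natural encoding of $\Pr[r \mid v_j^b, \tau_j]$ via marginal pinnings produces two different pinnings $p^0$ and $p^1$ depending on $v_j$'s spin, whereas the definition of $\mathcal{I}_r^{v_j}$ uses a single $p$ across both spin values; the factor of $2$ in the claim statement is consumed in bridging this mismatch. As a backup I would try a tree-recurrence induction instead: using the identity $\Pr[r = 1 \mid \sigma] = \lambda \prod_i P_i^\sigma / (1 + \lambda \prod_i P_i^\sigma)$ with $P_i^\sigma := \Pr[c_i = 0 \mid \sigma|_{T_{c_i}}]$, the case where some child of $r$ lies in $A$ is handled immediately (since $\sigma_{A, 1}$ then forces $r = 0$ and the LHS is bounded by $\lambda/(1+\lambda) \leq \mathcal{I}_r^{c^\star}$ for the offending child $c^\star$); in the remaining case a Lipschitz analysis of the product together with the inductive hypothesis on each subtree $T_{c_i}$ handles the recursion, provided $\mathcal{I}_{c_i}^u$ can be related back to $\mathcal{I}_r^u$ without too much loss, which is the subtle point the factor of $2$ must absorb.
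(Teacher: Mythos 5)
Your main plan has a genuine gap that the paper's proof carefully avoids. You order $A = \{v_1,\dots,v_k\}$ \emph{arbitrarily} and then, for each $j$, try to encode the pinning $\tau_j$ on $A \smallsetminus \{v_j\}$ as a marginal assignment $p^b$ on the level $L_r(\ell(v_j))$. But the Markov-blanket identity $\Pr[r \mid v_j^b, \tau_j] = \Pr[r \mid v_j^b, p^b]$ only holds when \emph{every} vertex of $\tau_j$ lies at level $\geq \ell(v_j)$: constraints strictly closer to $r$ than $v_j$ cannot be simulated by marginals further away, since they sit on the wrong side of the cut $L_r(\ell(v_j))$. The antichain hypothesis on $A$ guarantees no vertex of $A$ is an ancestor of another, but it does \emph{not} force them to all be at levels $\geq \ell(v_j)$, so for a generic $v_j$ the identity fails and the whole reduction to $\mathcal{I}_r^{v_j}$ collapses. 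This is a separate and more basic problem than the $p^0 \neq p^1$ mismatch you flag as ``the main obstacle''; you correctly sense trouble but locate it in the wrong place.

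The paper sidesteps both issues simultaneously by a different decomposition: in the inductive step it always peels off $v$, the \emph{closest} vertex of $A$ to $r$, and writes
\begin{align*}
    \abs{\Pr[r \mid \sigma_{A,0}] - \Pr[r \mid \sigma_{A,1}]}
    &\leq \abs{\Pr[r \mid \sigma_{B,0}, v^0] - \Pr[r \mid \sigma_{B,0}]}
    + \abs{\Pr[r \mid \sigma_{B,0}] - \Pr[r \mid \sigma_{B,1}]} \\
    &\quad + \abs{\Pr[r \mid \sigma_{B,1}] - \Pr[r \mid \sigma_{B,1}, v^1]},
\end{align*}
where $B = A\smallsetminus\{v\}$. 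Because $v$ is closest, all of $B$ lies at level $\geq \ell(v)$ and so $\sigma_{B,0}$ really can be replaced by a single marginal assignment $p$ on $L_r(\ell(v))\smallsetminus\{v\}$. Because the middle term frees $v$ entirely (no hybrid value is imposed), the induction hypothesis on $B$ applies cleanly. And for the boundary terms, monotonicity of the tree recursion in $v$'s marginal lets the paper bound, e.g., $\abs{\Pr[r \mid \sigma_{B,0}, v^0] - \Pr[r \mid \sigma_{B,0}]}$ by $\abs{\Pr[r \mid \sigma_{B,0}, v^0] - \Pr[r \mid \sigma_{B,0}, v^1]}$, which now has the \emph{same} $p$ on both sides and so is $\leq \mathcal{I}_r^v$. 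The factor of $2$ then arises naturally: each vertex $v$ incurs $\mathcal{I}_r^v$ from each of the two boundary terms over the course of the induction, not from any ad hoc doubling of a single hybrid step. Your backup suggestion (recursing on the subtrees of $r$'s children) is plausible in spirit but would require comparing $\mathcal{I}_{c_i}^u$ in the subtree to $\mathcal{I}_r^u$ in the full tree, a conversion you acknowledge you don't know how to control; the paper avoids this entirely by keeping the same root $r$ throughout the induction and only shrinking $A$.
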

Taking $A$ to be the set of vertices in $C(v)$ such that no ancestor is also in $C(v)$, we obtain our main decoupling result. All that remains is to prove the claim.
\begin{proof}[Proof of \cref{claim:generaldecoupling}]
We go by induction on the size of $A$. The base case is obvious. Let $v$ denote the closest vertex of $A$ to $r$. More specifically, let $v \in A$ be such that $d(r,v) = d(r,A)$. Let $B = A \smallsetminus \{v\}$ For the inductive step, observe by Triangle Inequality that
\begin{align*}
    \abs{\Pr[r \mid \sigma_{A,0}] - \Pr[r \mid \sigma_{A,1}]} &\leq \abs{\Pr[r \mid \sigma_{B,0},v^{0}] - \Pr[r \mid \sigma_{B,0}]} \\
    &+ \abs{\Pr[r \mid \sigma_{B,0}] - \Pr[r \mid \sigma_{B,1}]} \\
    &+ \abs{\Pr[r \mid \sigma_{B,1}] - \Pr[r \mid \sigma_{B,1},v^{1}]}
\end{align*}
By the inductive hypothesis, $\abs{\Pr[r \mid \sigma_{B,0}] - \Pr[r \mid \sigma_{B,1}]} \leq 2 \sum_{u \in B} \mathcal{I}_{r}^{u} = 2\sum_{u \in A : u \neq v} \mathcal{I}_{r}^{u}$. Hence, it suffices to bound the first and last terms by $\mathcal{I}_{r}^{v}$. We focus on the first term $\abs{\Pr[r \mid \sigma_{B,0},v^{0}] - \Pr[r \mid \sigma_{B,0}]}$; the same argument works for the last term $\abs{\Pr[r \mid \sigma_{B,1}] - \Pr[r \mid \sigma_{B,1},v^{1}]}$. Since the tree recursion is a monotone function in each individual variable, $\Pr[r \mid \sigma_{B,0}]$ is a monotone function in the marginal of $v$. It follows that
\begin{align*}
    \abs{\Pr[r \mid \sigma_{B,0},v^{0}] - \Pr[r \mid \sigma_{B,0}]} \leq \abs{\Pr[r \mid \sigma_{B,0},v^{0}] - \Pr[r \mid \sigma_{B,0},v^{1}]}
\end{align*}
But
\begin{align*}
    \abs{\Pr[r \mid \sigma_{B,0},v^{0}] - \Pr[r \mid \sigma_{B,0},v^{1}]} \leq \mathcal{I}_{r}^{v}
\end{align*}
clearly holds simply because all vertices in $B$ are at least as far from $r$ as $v$ is, and hence, the effect of $B$ may be simulated by enforcing appropriate marginals on all vertices distance $d(r,v)$. The same argument shows that $\abs{\Pr[r \mid \overline{B}] - \Pr[r \mid \overline{B}, \overline{v}]} \leq \mathcal{I}_{v}^{r}$, and so the claim follows.
\end{proof}
\end{proof}

\subsection{\texorpdfstring{$R$-Pseudoinfluence Decay}{RPseudoinfluenceDecay}}
Our goal is now to prove \cref{prop:betazeromodinfbound}. To do this, we write
\begin{align*}
    \sum_{v \in T : v \neq r} \mathcal{R}_{r}^{v} = \sum_{\ell=1}^{\infty} \sum_{v \in L_{r}(\ell)} \mathcal{R}_{r}^{v}
\end{align*}
Thus, it suffices to bound $\sum_{v \in L_{r}(\ell)} \mathcal{R}_{r}^{v}$ for each level $\ell$. We show that this quantity in fact decays exponentially fast as $\ell$ increases when $\lambda < \lambda_{c}(\Delta)$. Specifically, to prove \cref{prop:betazeromodinfbound}, we use the following two lemmas, which precisely quantify the decay rate.
\begin{proposition}[Decay Rate Bound]\label{prop:betazeroratiosdecayrate}
Assume $\lambda$ is up-to-$\Delta$ unique with gap $0 < \delta < 1$. Then there exists $\ell_{0} = \Theta(1/\delta)$ such that for every tree $T$ of maximum degree $\leq \Delta$ rooted at $r$ and any $\ell > \ell_{0}$, we have the bound
\begin{align*}
    \frac{\sum_{v \in L_{r}(\ell)} \mathcal{R}_{r}^{v}}{\max_{u \in L_{r}(\ell - \ell_{0})} \wrapc{\sum_{v \in L_{u}(\ell_{0})} \mathcal{R}_{u}^{v}}} \leq O(1) \cdot \sqrt{1-\delta}^{\ell-\ell_{0}}
\end{align*}
\end{proposition}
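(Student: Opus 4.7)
My plan is to run a potential-method argument in the style of \cite{LLY13, SST14, RSTVY13}. The first step is a chain-rule decomposition of $\mathcal{R}_r^v$. Differentiating the tree recursion \eqref{eq:treerecurrence} gives $\partial R_r/\partial R_u = -R_r/(R_u+1)$ for each child $u$ of $r$, and by chain rule along the unique root-to-descendant path $r = w_0,\dots,w_\ell = v$ in $T$,
\[
\left|\frac{\partial R_r}{\partial R_v}\right| \;=\; \prod_{i=0}^{\ell-1} \frac{R_{w_i}}{R_{w_{i+1}}+1}.
\]
Using monotonicity of the tree recursion in each variable, I will bound $\mathcal{R}_r^v$ by integrating this derivative as the marginal of $v$ sweeps from $0$ to $1$, and then split the sum at level $\ell-\ell_0$ into
\[
\sum_{v \in L_r(\ell)} \mathcal{R}_r^v \;\leq\; \sum_{u \in L_r(\ell-\ell_0)} \left|\frac{\partial R_r}{\partial R_u}\right| \cdot \sum_{v \in L_u(\ell_0)} \mathcal{R}_u^v,
\]
where each partial derivative is evaluated at an appropriate worst-case boundary.

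Next, I will pass to the hardcore potential $\phi(R)=\int_0^R \frac{ds}{\sqrt{s(1+s)}}$ of \cite{LLY13} and work in the coordinate $y=\phi(R)$; a direct computation gives $|\partial y_r/\partial y_u| = \sqrt{R_r/(R_r+1)}\cdot\sqrt{R_u/(R_u+1)}$ for each child $u$ of $r$. The Li--Lu--Yin analysis then converts up-to-$\Delta$ uniqueness with gap $\delta$ (\cref{def:uptodeltaunique}) into a uniform one-step contraction
\[
\sum_{u \text{ child of } r}\left|\frac{\partial y_r}{\partial y_u}\right| \;\leq\; \sqrt{1-\delta}.
\]
Iterating this inequality along every root-to-level-$(\ell-\ell_0)$ path through the tree yields $\sum_{u \in L_r(\ell-\ell_0)} |\partial y_r/\partial y_u| \leq (\sqrt{1-\delta})^{\ell-\ell_0}$. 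Since $|\partial R_r/\partial R_u| = (\phi'(R_u)/\phi'(R_r))\,|\partial y_r/\partial y_u|$, converting back from $y$ to $R$ introduces only a multiplicative distortion $\phi'(R_u)/\phi'(R_r)$.

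The final step is to bound this distortion by an absolute constant depending only on $\delta$ (and $\lambda_c$). For this I use that $r$ has at least $\ell > \ell_0$ descendant levels in $T$ and that every $u \in L_r(\ell-\ell_0)$ has at least $\ell_0$ descendant levels below it. Combined with \cref{fact:RminRmaxbounds} and the fact that the $R$-tree recursion contracts toward its fixed point $\hat R$, once $\ell_0 = \Theta(1/\delta)$ is taken large enough both $R_r$ and $R_u$ are trapped in a compact interval $[c_\delta, C_\delta] \subset (0,\infty)$ whose endpoints depend only on $\delta$ and $\lambda_c$; on this interval $\phi'(R)=1/\sqrt{R(R+1)}$ is bounded above and below by positive constants, so $\phi'(R_u)/\phi'(R_r) = O(1)$ and the claim follows. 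The main obstacle will be this last compact-interval step: I need to verify that $\Theta(1/\delta)$ iterations of the tree recursion really do squeeze $R$ into a $\delta$-independent interval bounded away from both $0$ and $\infty$, which requires transferring the $y$-coordinate contraction back into concrete $R$-coordinate bounds and checking that $\phi'$ does not blow up at either endpoint of the stable interval. A secondary technicality is that the worst-case boundary configuration $p$ maximizing $\mathcal{R}_r^v$ need not match the one maximizing $\mathcal{R}_u^v$, so the split in the first step must be carried out with a supremum over boundary conditions rather than a fixed $p$.
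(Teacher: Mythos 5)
Your high-level plan mirrors the paper's: use the LLY13 potential $\varphi$ with $\Phi(R)=1/\sqrt{R(R+1)}$, establish a $\sqrt{1-\delta}$ contraction in the $y=\varphi(R)$ coordinate, and convert back to $R$ with a bounded distortion factor. The algebra you write for the one-step partial derivatives is correct. However, there are two concrete gaps, the first of which is the central difficulty of the argument, not a secondary technicality.

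\textbf{The sum-of-maxes gap.} In your first step you split
\[
\sum_{v \in L_r(\ell)} \mathcal{R}_r^v \;\leq\; \sum_{u \in L_r(\ell-\ell_0)} \left|\frac{\partial R_r}{\partial R_u}\right| \cdot \sum_{v \in L_u(\ell_0)} \mathcal{R}_u^v,
\]
with ``each partial derivative evaluated at an appropriate worst-case boundary.'' But $\mathcal{R}_r^v$ is a maximum over boundary conditions, and the maximizing boundary can differ from $v$ to $v$; what you actually need to control is the \emph{sum of maxes} $\sum_u \max_{p}\lvert\partial y_r/\partial y_u\rvert$. The ideal contraction from \cref{lem:idealdecay} controls $\max_{\mathbf{K}}\lVert\nabla(\varphi\circ F\circ\varphi^{-1})(\mathbf{K})\rVert_1 \le \sqrt{1-\delta}$, i.e.\ the \emph{max of sums} at a single consistent $\mathbf{K}$, and this can be strictly smaller than the sum of maxes. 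The paper spends \cref{lem:truetoideal} quantifying exactly this discrepancy as a multiplicative error $(1+2\eta)^{\Delta+1}$, where $\eta$ is the width of the box $[\mathbf{K}^{\min},\mathbf{K}^{\max}]$; one then needs strong spatial mixing (\cref{prop:precisessm}) to guarantee $\eta \le 1/2$, and this requirement is precisely where $\ell_0 = \Theta(1/\delta)$ enters. Your proposal has no mechanism for this step, and without it the claimed iteration $\sum_{u\in L_r(\ell-\ell_0)}\lvert\partial y_r/\partial y_u\rvert\le \sqrt{1-\delta}^{\ell-\ell_0}$ does not follow.

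\textbf{The distortion gap.} You claim that after $\ell_0$ levels $R$ is trapped in an interval $[c_\delta,C_\delta]$ on which $\Phi$ is bounded above and below by positive constants, hence $\Phi(R_u)/\Phi(R_r)=O(1)$. This fails: by \cref{fact:RminRmaxbounds} the relevant $R$-values at depth $\ge 2$ live in $[\lambda/(1+\lambda)^\Delta,\,\lambda]$, which is entirely contained in $[0,\,O(1/\Delta)]$; they are \emph{not} bounded away from $0$ independently of $\Delta$, and so $\Phi(R)=1/\sqrt{R(R+1)}=\Omega(\sqrt{\Delta})$ on this interval. Bounding each $\Phi$ individually therefore introduces a $\Delta$-dependence. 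The correct fix, used in the paper (see \cref{lem:messageratiosrelate} and the computation following it), is to bound the \emph{ratio} directly via $\Phi(R_0)/\Phi(R_1)\le R_1/R_0$ for $R_0\le R_1$, yielding $\Phi(R^{\min}(2))/\Phi(R^{\max}(2)) \le R^{\max}(2)/R^{\min}(2) = (1+\lambda)^\Delta = O(1)$ because $\lambda\le \lambda_c(\Delta)=O(1/\Delta)$. Your argument should replace the compact-interval reasoning with this ratio bound.
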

We prove \cref{prop:betazeroratiosdecayrate}  in the next section.
Roughly speaking, the reason for the assumption $\ell>\ell_0$ above is that we can exploit spatial mixing to argue that the marginals of the root is independent of the boundary condition at level $\ell$, for a large enough $\ell_0$; see \cref{sec:potentialmethodratiosinfdecay} for more details. For  $\ell < \ell_{0}$ we use the following lemma.
\begin{lemma}[Trivial ``Decay'' Rate]\label{lem:betazerotrivialdecay}
Assume $\lambda$ up-to-$\Delta$ unique with gap $0 < \delta < 1$. Then for any tree $T$ of maximum degree $\leq \Delta$ rooted at $r$ and any $\ell > 0$, we have
\begin{align*}
    \frac{\sum_{v \in L_{r}(\ell)} \mathcal{R}_{r}^{v}}{\max_{u \in L_{r}(1)}\wrapc{\sum_{v \in L_{u}(\ell-1)}\mathcal{R}_{u}^{v}}} \leq O(1)
\end{align*}
Furthermore, for the first level, we have the inequality
\begin{align*}
    \sum_{v \in L_{r}(1)} \mathcal{R}_{r}^{v} \leq O(1)
\end{align*}
\end{lemma}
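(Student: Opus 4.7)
The plan is to exploit the tree recurrence $R_{T,r}^{p} = \lambda \prod_{j} (R_{u_j}^{p} + 1)^{-1}$, where $u_{1},\dots,u_{d}$ (with $d \leq \Delta$) are the children of $r$, to pass from the root to its children while losing only the multiplicative factor $\lambda \Delta$. The central observation is that in the uniqueness regime we have $\lambda \leq \lambda_{c}(\Delta) \leq e/(\Delta - 2)$ for $\Delta \geq 3$, so $\lambda \Delta$ is bounded by an absolute constant; this is exactly what supplies the ``$O(1)$'' in both inequalities, so no decay factor needs to be extracted.

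For the second inequality (first-level bound), fix any child $v$ of $r$ and any boundary condition $p : L_r(1) \setminus \{v\} \to [0,1]$. Pinning $v$ to ``in'' (spin $1$) forces $r$ out of the independent set, so $R_{r}^{v^{1},p} = 0$ (equivalently, $R_{v}^{v^{1},p} = \infty$ makes the corresponding factor in the tree recurrence vanish). Pinning $v$ to ``out'' gives $R_{r}^{v^{0},p} = \lambda \prod_{j:\,u_{j} \neq v} (R_{u_{j}}^{p} + 1)^{-1} \leq \lambda$. Hence $\mathcal{R}_{r}^{v} \leq \lambda$, and summing over the at most $\Delta$ children of $r$ gives $\sum_{v \in L_{r}(1)} \mathcal{R}_{r}^{v} \leq \Delta \lambda = O(1)$.

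For the one-step ratio inequality (with $\ell \geq 2$), fix $v \in L_{r}(\ell)$ and let $u_{i}$ be the unique child of $r$ on the root-to-$v$ path. Only the factor $(R_{u_{i}}^{p} + 1)^{-1}$ in the tree recurrence is affected by the pinning at $v$, so a direct expansion yields
\begin{align*}
\bigl|R_{r}^{v^{0},p} - R_{r}^{v^{1},p}\bigr|
= \lambda \prod_{j \neq i} \frac{1}{R_{u_{j}}^{p} + 1}
\cdot \frac{\bigl|R_{u_{i}}^{v^{0},p} - R_{u_{i}}^{v^{1},p}\bigr|}{(R_{u_{i}}^{v^{0},p} + 1)(R_{u_{i}}^{v^{1},p} + 1)}
\leq \lambda \cdot \mathcal{R}_{u_{i}}^{v, p'},
\end{align*}
where $p'$ is the restriction of $p$ to the subtree rooted at $u_{i}$ and we used that each $(R + 1)^{-1}$ factor is at most $1$ since $R \geq 0$. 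Taking the maximum over $p$ gives $\mathcal{R}_{r}^{v} \leq \lambda \cdot \mathcal{R}_{u_{i}}^{v}$. Partitioning $L_{r}(\ell)$ according to which child's subtree contains $v$, summing, and bounding the number of children by $\Delta$ then yields
\begin{align*}
\sum_{v \in L_{r}(\ell)} \mathcal{R}_{r}^{v}
\leq \lambda \sum_{i=1}^{d} \sum_{v \in L_{u_{i}}(\ell - 1)} \mathcal{R}_{u_{i}}^{v}
\leq \lambda \Delta \cdot \max_{u \in L_{r}(1)} \sum_{v \in L_{u}(\ell - 1)} \mathcal{R}_{u}^{v},
\end{align*}
and the claim follows from $\lambda \Delta = O(1)$.

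There is no substantive obstacle to this argument; the lemma intentionally supplies only a crude bound with constant per-level blowup, which by itself cannot give decay but which will be paired with the sharper per-level factor of $\sqrt{1 - \delta}$ from \cref{prop:betazeroratiosdecayrate} once we are past level $\ell_{0} = \Theta(1/\delta)$. The only subtlety worth flagging is the uniform bound $\lambda \Delta = O(1)$ in the uniqueness regime, which is immediate from $\lambda_{c}(\Delta) \sim e/(\Delta - 2)$ for large $\Delta$ together with a direct check for $\Delta \in \{2,3\}$.
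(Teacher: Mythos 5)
Your proof is correct and follows essentially the same strategy as the paper: establish the per-vertex bound $\mathcal{R}_{r}^{v} \leq \lambda \cdot \mathcal{R}_{u_i}^{v}$ where $u_i$ is the child of $r$ on the $r$--$v$ path, partition $L_r(\ell)$ by child subtree, and use $\lambda\Delta = O(1)$ in the uniqueness regime. The only difference in technique is cosmetic: the paper derives $\mathcal{R}_{r}^{v} \leq \lambda\,\mathcal{R}_{u_i}^{v}$ via the Mean Value Theorem together with monotonicity of $\abs{\partial_{R_{u}}F(\mathbf{R})} = F(\mathbf{R})/(R_u+1)$ (maximized at $\mathbf{R}=\mathbf{0}$), whereas you directly factor the finite difference $\frac{1}{R_{u_i}^{v^0,p}+1} - \frac{1}{R_{u_i}^{v^1,p}+1}$ and bound the remaining $(R+1)^{-1}$ factors by $1$; both are correct and yield the same constant $\lambda$. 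Your first-level argument ($R_r^{v^1,p}=0$ by the hardcore constraint, $R_r^{v^0,p}\le\lambda$) matches the paper's exactly.
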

\begin{proof}
Since $\lambda \leq \lambda_{c}(\Delta) \leq O(1/\Delta)$, it suffices to show that $\frac{\sum_{v \in L_{r}(\ell)} \mathcal{R}_{r}^{v}}{\max_{u \in L_{r}(1)}\wrapc{\sum_{v \in L_{u}(\ell-1)}\mathcal{R}_{u}^{v}}} \leq (\Delta-1)\lambda$.

We proceed via the Mean Value Theorem. For any fixed $v \in L_{u}(\ell-1)$ where $u \in L_{r}(1)$, and for any $p: L_{r}(\ell(v)) \smallsetminus \{v\} \rightarrow [0,1]$, we have there exists $\tilde{\mathbf{R}}$ such that
\begin{align*}
    \abs{R_{r}^{v^{0},p} - R_{r}^{v^{1},p}} = \abs{\partial_{R_{u}} F(\tilde{\mathbf{R}})} \cdot \abs{R_{u}^{v^{0},p} - R_{u}^{v^{1},p}}
\end{align*}
Since
\begin{align*}
    \abs{\partial_{R_{u}}F(\mathbf{R})} =  \frac{F(\mathbf{R})}{R_{u} + 1}
\end{align*}
is monotone decreasing in each coordinate, we obtain an upper bound of
\begin{align*}
    \abs{R_{r}^{v^{0},p} - R_{r}^{v^{1},p}} \leq \abs{\partial_{R_{u}}F(\mathbf{0})} \cdot \mathcal{R}_{u}^{v} =  \lambda \cdot \mathcal{R}_{u}^{v}
\end{align*}
As this holds for all $p:L_{r}(\ell(v)) \smallsetminus \{v\} \rightarrow [0,1]$, it follows that $\mathcal{R}_{r}^{v} \leq \lambda \cdot \mathcal{R}_{u}^{v}$. Summing over all $v \in L_{r}(\ell)$, we have
\begin{align*}
    \sum_{v \in L_{r}(\ell)} \mathcal{R}_{r}^{v} \leq \lambda \sum_{u \in L_{r}(1)} \sum_{v \in L_{u}(\ell-1)} \mathcal{R}_{u}^{v} \leq (\Delta-1) \lambda \max_{u \in L_{r}(1)} \wrapc{\sum_{v \in L_{u}(\ell-1)} \mathcal{R}_{u}^{v}}
\end{align*}
as desired.

For the bound on $\sum_{v \in L_{r}(1)} \mathcal{R}_{r}^{v}$, note that it suffices to bound $\mathcal{R}_{r}^{v}$ by $\lambda$ for any $v \in L_{r}(1)$. To do this, fix any $p : L_{r}(1) \smallsetminus \{v\} \rightarrow [0,1]$. We have $R_{r}^{v^{1},p} = 0$ simply because conditioning $v$ to be ``in'' forces $r$ to be ``out''. Hence,
\begin{align*}
    \abs{R_{r}^{v^{0},p} - R_{r}^{v^{1},p}} = R_{r}^{v^{0},p}
\end{align*}
Since the tree recursion is a monotone function in each coordinate, $R_{r}^{v^{0},p}$ is maximized when $p$ is identically zero on $L_{r}(1) \smallsetminus \{v\}$, in which case, $R_{r}^{v^{0},p} = \lambda$. It follows that $\mathcal{R}_{r}^{v} \leq \lambda$ as desired.
\end{proof}

These two results together immediately imply \cref{prop:betazeromodinfbound}.
\begin{proof}[Proof of \cref{prop:betazeromodinfbound}]
Using \cref{lem:betazerotrivialdecay}, we have for any $\ell \leq \ell_{0}$ that
\begin{align*}
    \sum_{v \in L_{r}(\ell)} \mathcal{R}_{r}^{v} \leq O(C^{\ell})
\end{align*}
for a universal constant $C > 0$. When $\ell > \ell_{0}$, we have
\begin{align*}
    \sum_{v \in L_{r}(\ell)} \mathcal{R}_{r}^{v} \leq O(1) \cdot C^{\ell_{0}} \cdot \sqrt{1 - \delta}^{\ell-\ell_{0}}
\end{align*}
Hence, summing over all $\ell$ and using $\ell_{0} \leq O(1/\delta)$, we obtain that
\begin{align*}
    \sum_{v \in T : v \neq r} \mathcal{R}_{r}^{v} \leq O(1) \cdot C^{\ell_{0}} \cdot \exp(O(1/\delta)) \cdot \underset{\leq O(1/\delta)}{\underbrace{\sum_{\ell=1}^{\infty} \sqrt{1 - \delta}^{\ell}}} \leq O(1) \cdot \exp(O(1/\delta))
\end{align*}
The claim follows.
\end{proof}

\section{Bounding the \texorpdfstring{$R$-Pseudoinfluence Decay}{RPseudoinfluenceDecay}: The Potential Method}\label{sec:potentialmethodratiosinfdecay}
Our goal in this section is to prove \cref{prop:betazeroratiosdecayrate}.
We use the potential method (otherwise known as the message decay argument), which has been successfully used in \cite{LLY12, LLY13, RSTVY13, SST14, SSSY15} to establish strong spatial mixing all the way up to the uniqueness threshold.

\begin{definition}[$\varphi$-Pseudoinfluence]
We say a function $\varphi:[0,\infty)\mapsto [a,b]$, for $a<b$, with derivative $\Phi = \varphi'$, is a potential function if it is
\begin{enumerate}
    \item continuously differentiable,
    \item strictly monotone increasing, i.e., $\Phi$ is strictly positive,
    \item concave, i.e., $\Phi$ is decreasing.
\end{enumerate}
For a boundary condition $p:A \rightarrow [0,1]$, where $A$ is a subset of vertices not containing $r$, let $K_{r}^{p} = \varphi(R_{r}^{p})$. Again, we define
\begin{align*}
    \mathcal{K}_{r}^{v,p} \overset{\defin}{=} \abs{K_{r}^{v^{0},p} - K_{r}^{v^{1},p}} = \abs{\varphi(R_{r}^{v^{0},p}) - \varphi(R_{r}^{v^{1},p})}
\end{align*}
Define the $\varphi$-pseudoinfluence of a vertex $v$ on $r$ as
\begin{align*}
    \mathcal{K}_{r}^{v} \overset{\defin}{=} \max_{p : L_{r}(\ell(v)) \smallsetminus \{v\} \rightarrow [0,1]} K_{r}^{v,p}
\end{align*}
Finally, we define
\begin{align*}
    K_{r}^{\min}(\ell) &\overset{\defin}{=} \min_{p : L_{r}(\ell) \rightarrow [0,1]} K_{r}^{p} = \varphi(R_{r}^{\min}(\ell)) \\
    K_{r}^{\max}(\ell) &\overset{\defin}{=} \max_{p:L_{r}(\ell) \rightarrow [0,1]} K_{r}^{p} = \varphi(R_{r}^{\max}(\ell))
\end{align*}
\end{definition}
Let us now fix the potential function that we will use. In this work, we use the potential function introduced in \cite{LLY13}. We define $\varphi$ as
\begin{align*}
    \varphi(R) &\overset{\defin}{=} 2\log(\sqrt{R} + \sqrt{R + 1}) \\
    \Phi(R) &\overset{\defin}{=} \varphi'(R) = \frac{1}{\sqrt{R(R + 1)}}
\end{align*}
We note that since $\Phi$ is continuous, positive, and decreasing, we have $\varphi$ is continuously differentiable, strictly monotone increasing and concave as desired. One additional feature of this potential function is that it has no dependence on $\lambda$ or $\Delta$. While it may be comforting to have an explicit expression for $\varphi$, all of our proofs rely at most on the explicit expression for $\Phi$, rather than $\varphi$. For the derivation and further discussion of this potential function, we refer the reader to \cite{LLY13}.

To control $\sum_{v \in L_{r}(\ell)} \mathcal{R}_{r}^{v}$, it turns out it suffices to control the decay of $\sum_{v \in L_{r}(\ell)} \mathcal{K}_{r}^{v}$ as $\ell$ increases, as we will see later.
\begin{proposition}[$\varphi$-Pseudoinfluence Decay Rate Bound]\label{prop:messagedecayrate}
Assume $\lambda$ is up-to-$\Delta$ unique with gap $0 < \delta < 1$ (see \cref{def:uptodeltaunique}). For $\ell \geq 2$, assume that there exists $\eta \leq 1/2$ such that for all $u \in L_{r}(1)$, we have the inequality $\abs{R_{u}^{\min}(\ell-1) - R_{u}^{\max}(\ell-1)} \leq \eta$. Then,
\begin{align*}
    \frac{\sum_{v \in L_{r}(\ell)} \mathcal{K}_{r}^{v}}{\max_{u \in L_{r}(1)}\wrapc{\sum_{v \in L_{u}(\ell-1)} \mathcal{K}_{u}^{v}}} \leq \wrapp{1 + 2\eta}^{\Delta+1} \sqrt{1 - \delta}.
\end{align*}
\end{proposition}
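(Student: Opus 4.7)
My plan is to combine the chain rule for the tree recurrence $F$ with the amplification bound afforded by the specific potential function $\varphi$. The high-level idea is to reduce the sum of $\varphi$-pseudoinfluences at level $\ell$ to a ``worst child'' sum at level $\ell-1$ via a per-child amplification factor, and then bound the total of these factors using the analytic property of $\varphi$ that underlies the strong spatial mixing proof of \cite{LLY13}.

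For each $v \in L_r(\ell)$, let $u = u(v) \in L_r(1)$ denote the child of $r$ on the unique path from $r$ to $v$. For any boundary condition $p : L_r(\ell) \smallsetminus \{v\} \to [0,1]$, applying the mean value theorem to $\varphi$ and to the tree recurrence \cref{eq:treerecurrence} (differentiated in the $R_u$-coordinate) yields intermediate points $\tilde R_r \in [R_r^{\min}(\ell), R_r^{\max}(\ell)]$ and $\tilde R_w \in [R_w^{\min}(\ell-1), R_w^{\max}(\ell-1)]$ for every child $w$ of $r$ (with possibly two distinct values for $w = u$), such that
\begin{align*}
    \mathcal{K}_r^{v,p} \leq \frac{\Phi(\tilde R_r) \cdot |\partial_{R_u} F(\tilde{\mathbf R})|}{\Phi(\tilde R_u)} \cdot \mathcal{K}_u^{v, p|_{T_u}}.
\end{align*}
Taking the maximum over $p$ gives $\mathcal{K}_r^v \leq A_u \cdot \mathcal{K}_u^v$, where $A_u$ is the supremum of this amplification factor over the allowed ranges, and grouping the sum over $v \in L_r(\ell)$ by the ancestor child $u$ yields
\begin{align*}
    \sum_{v \in L_r(\ell)} \mathcal{K}_r^v \leq \left(\sum_{u \in L_r(1)} A_u\right) \max_{u \in L_r(1)} \sum_{v \in L_u(\ell-1)} \mathcal{K}_u^v.
\end{align*}
It therefore suffices to prove $\sum_{u \in L_r(1)} A_u \leq (1+2\eta)^{\Delta+1} \sqrt{1-\delta}$.

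The key to this amplification bound is the convenient closed form that $\varphi$ produces: substituting $\Phi(R) = 1/\sqrt{R(R+1)}$ and $|\partial_{R_u} F(\mathbf R)| = F(\mathbf R)/(R_u+1)$ gives
\begin{align*}
    \frac{\Phi(F(\mathbf R)) \cdot |\partial_{R_u} F(\mathbf R)|}{\Phi(R_u)} = \sqrt{\frac{F(\mathbf R)}{F(\mathbf R)+1}} \cdot \sqrt{\frac{R_u}{R_u+1}}.
\end{align*}
At the symmetric $d$-ary fixed-point configuration (where $R_w = \hat R_d$ for every child $w$, so $F(\mathbf R) = \hat R_d$), summing over the $d$ children produces exactly $d\hat R_d/(\hat R_d+1) = |f_d'(\hat R_d)|$, which by up-to-$\Delta$ uniqueness with gap $\delta$ is bounded by $1-\delta \leq \sqrt{1-\delta}$. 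To extend this to the perturbed regime of the proposition, I would use the hypothesis $|R_w^{\min}(\ell-1) - R_w^{\max}(\ell-1)| \leq \eta$ to conclude that each $\tilde R_w$, and hence $\tilde R_r = F(\tilde{\mathbf R})$, lies within an interval of length $\leq \eta$ around the corresponding fixed-point value, and then carry out a coordinate-by-coordinate sensitivity analysis showing that each of the $\Delta$ coordinates of $\mathbf R$, together with the prefactor $\sqrt{F(\tilde{\mathbf R})/(F(\tilde{\mathbf R})+1)}$, shifts the amplification factor by at most a multiplicative $(1+2\eta)$, for a total perturbation slack of $(1+2\eta)^{\Delta+1}$.

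The main technical challenge is precisely this perturbation accounting. Since the amplification depends nonlinearly on $\mathbf R$ through both the prefactor $\sqrt{F(\mathbf R)/(F(\mathbf R)+1)}$ and the factor $\sqrt{R_u/(R_u+1)}$, and since the two intermediate values of $\tilde R_u$ that arise (one from $\partial_{R_u} F(\tilde{\mathbf R})$ and one from $1/\Phi(\tilde R_u)$) need not coincide—though both lie in an interval of width $\leq \eta$ by the hypothesis $\eta \leq 1/2$—one must carefully verify that each per-coordinate deviation of size $\leq \eta$ inflates the amplification by at most a factor $(1+2\eta)$. Once the bound $\sum_u A_u \leq (1+2\eta)^{\Delta+1}\sqrt{1-\delta}$ is established, the proposition follows immediately from the earlier reduction.
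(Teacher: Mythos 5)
Your high-level decomposition (chain rule, per-child amplification factor, grouping the level-$\ell$ sum by the ancestor child $u \in L_r(1)$) follows the same route as the paper, and your closed form $\sqrt{F(\mathbf R)/(F(\mathbf R)+1)} \cdot \sqrt{R_u/(R_u+1)}$ for the amplification factor is correct. However, there is a genuine gap in how you establish $\sum_u A_u \leq (1+2\eta)^{\Delta+1}\sqrt{1-\delta}$.

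You bound the sum of amplification factors only at the symmetric $d$-ary fixed-point configuration $\mathbf R = (\hat R_d,\dots,\hat R_d)$, where it equals $|f_d'(\hat R_d)| \leq 1-\delta$, and then propose to perturb from that point. But the hypothesis $|R_w^{\min}(\ell-1) - R_w^{\max}(\ell-1)| \leq \eta$ does \emph{not} imply that the actual ratios $R_w$ lie within an $\eta$-interval of $\hat R_d$; it only bounds the \emph{width} of the feasible range $[\mathbf R^{\min}(\ell), \mathbf R^{\max}(\ell)]$, which for an irregular tree with irregular subtree depths can sit arbitrarily far from the $d$-ary fixed point. So the perturbation analysis is anchored to the wrong reference point, and the claimed conclusion does not follow. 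What the argument actually requires is the uniform bound from \cite{LLY13} (the paper's \cref{lem:idealdecay}, cited as Lemmas 12--14 there): for \emph{every} $\mathbf K \in [0,\infty]^{|L_r(1)|}$, $\|\nabla(\varphi\circ F\circ\varphi^{-1})(\mathbf K)\|_1 \leq \sqrt{1-\delta}$. This is a nontrivial analytic fact about the specific potential $\Phi(R) = 1/\sqrt{R(R+1)}$ and is precisely the reason one passes to the $\varphi$-scale in the first place — it ``flattens'' the $\ell^1$-norm of the gradient to be small everywhere, not just near the fixed point. Once you have that, the $(1+2\eta)^{\Delta+1}$ factor arises from comparing $\sum_u \max_{\mathbf K^{\min}(\ell) \leq \mathbf K \leq \mathbf K^{\max}(\ell)} |\partial_{K_u}(\cdots)(\mathbf K)|$ (where the max can be attained at a different $\mathbf K$ for each $u$) to the single-point evaluation $\sum_u |\partial_{K_u}(\cdots)(\mathbf K^{\max}(\ell))|$, exploiting that the feasible box has width $\leq \eta$ in each coordinate (this is the paper's \cref{lem:truetoideal} and \cref{claim:etalipschitz}). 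Your per-coordinate $(1+2\eta)$-accounting idea is the right tool, but it must be carried out relative to a point inside the feasible box (say $\mathbf K^{\max}(\ell)$) and combined with the global $\sqrt{1-\delta}$ gradient bound — not anchored at $\hat R_d$, where no smallness-of-deviation hypothesis is available.
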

Unfortunately, due to the additional error factor of $\wrapp{1 + 2\eta}^{\Delta+1}$, we must control $\eta = \eta(\ell)$. To do this, we leverage the strong spatial mixing result proved in \cite{Wei06}. We state a ``precise'' version here, where the constant in front of the decay is stated explicitly.
\begin{definition}\label{def:etas}
Define $\eta^{*} = \frac{R^{\max}(2)}{R^{\min}(2)} \cdot \abs{R^{\min}(2) - R^{\max}(2)}$. Note by \cref{fact:RminRmaxbounds} and the fact that $\lambda \leq O(1/\Delta)$, we have 
\begin{align}\label{eq:etastarbound}
    \eta^{*} \leq \frac{\lambda}{\frac{\lambda}{(1 + \lambda)^{\Delta}}} \cdot \abs{\lambda - \frac{\lambda}{(1 + \lambda)^{\Delta}}} \leq O(1/\Delta).
\end{align}
\end{definition}
\begin{proposition}[Strong Spatial Mixing \cite{Wei06}]\label{prop:precisessm}
Assume that $\lambda$ is up-to-$\Delta$ unique with gap $0 < \delta < 1$. Then for all trees $T$ rooted at $r$ of maximum degree $\leq\Delta$, we have
\begin{align*}
    \abs{R_{T,r}^{\min}(\ell) - R_{T,r}^{\max}(\ell)} \leq \sqrt{1 - \delta}^{\ell -2} \cdot \eta^{*}
\end{align*}
\end{proposition}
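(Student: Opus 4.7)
The plan is to use the potential method of \cite{LLY12, LLY13, SST14}, which refines Weitz's original SSM argument \cite{Wei06} to give a clean geometric decay in potential coordinates. The per-step rate $\sqrt{1-\delta}$ arises from the standard Jacobian-norm bound for the square-root potential, and the prefactor $\eta^{*}$ comes entirely from the two changes of variable between $R$-coordinates and $K$-coordinates that bracket the induction.

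The first step is to pass to potentialized ratios: for any boundary condition $p$, let $K^{p} = \varphi(R^{p})$, where $\varphi$ and $\Phi = \varphi'$ are defined in \cref{sec:potentialmethodratiosinfdecay}. I would then prove by induction on $\ell \geq 2$ that for every tree $T$ of maximum degree $\leq \Delta$ rooted at $r$,
\[
\bigl|K_{T,r}^{\min}(\ell) - K_{T,r}^{\max}(\ell)\bigr| \;\leq\; \sqrt{1-\delta}^{\,\ell-2} \cdot \bigl|K_{T,r}^{\min}(2) - K_{T,r}^{\max}(2)\bigr|.
\]
The base case $\ell=2$ is trivial. For the inductive step, fix boundary conditions $p,q$ at level $\ell$ realizing $K_r^{\min}(\ell)$ and $K_r^{\max}(\ell)$ and unfold the tree recursion $R_{r} = F(R_{u_{1}},\dots,R_{u_{d}})$ at the root, where $d \leq \Delta$ is the root's degree and each child $u_{i}$ has its own subtree of maximum degree $\leq \Delta$ with an inherited level-$(\ell-1)$ boundary. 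Combining MVT applied to $\varphi$ with MVT applied to $F$ yields
\[
\bigl|K_{r}^{p} - K_{r}^{q}\bigr| \;\leq\; \sum_{u} \frac{\Phi(F_{*}) \cdot \bigl|\partial_{R_{u}} F(\vec{R}_{*})\bigr|}{\Phi(R_{u}^{\dagger})} \cdot \bigl|K_{u}^{p} - K_{u}^{q}\bigr|
\]
for suitable intermediate points $F_{*}, \vec{R}_{*}, R_{u}^{\dagger}$. For the hardcore recursion $F(\vec{R}) = \lambda \prod_{u}(R_{u}+1)^{-1}$ and the specific $\Phi$ at hand, each coefficient simplifies to $\sqrt{F R_{u}/((F+1)(R_{u}+1))}$. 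The crux is then to show that under up-to-$\Delta$ uniqueness with gap $\delta$, the sum of these coefficients is at most $\sqrt{1-\delta}$ uniformly over all admissible message vectors $\vec{R}$; combined with the inductive hypothesis applied to each child subtree, this yields the per-step decay factor $\sqrt{1-\delta}$.

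Finally, to recover the $R$-coordinate bound, I would bracket the induction with two more applications of MVT:
\[
\bigl|K^{\min}(2) - K^{\max}(2)\bigr| \leq \Phi(R_{*}^{(2)}) \cdot \bigl|R^{\min}(2) - R^{\max}(2)\bigr|, \qquad \bigl|R^{\min}(\ell) - R^{\max}(\ell)\bigr| \leq \Phi(R_{*}^{(\ell)})^{-1} \cdot \bigl|K^{\min}(\ell) - K^{\max}(\ell)\bigr|,
\]
for some $R_{*}^{(2)}, R_{*}^{(\ell)}$, both lying in $[R^{\min}(2), R^{\max}(2)]$ (the latter inclusion following from the monotonicity in \cref{fact:RminRmaxbounds}(3)). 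Since $\Phi(R) = 1/\sqrt{R(R+1)}$ is decreasing, the ratio $\Phi(R_{*}^{(2)})/\Phi(R_{*}^{(\ell)})$ is bounded above by $\sqrt{R^{\max}(2)(R^{\max}(2)+1)/(R^{\min}(2)(R^{\min}(2)+1))}$, which a short algebraic manipulation shows is at most $R^{\max}(2)/R^{\min}(2)$ whenever $R^{\max}(2) \geq R^{\min}(2)$. Chaining the three estimates produces exactly the stated bound with the prefactor $\eta^{*}$.

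The main obstacle is the uniform Jacobian bound $\sum_{u} \sqrt{F R_{u}/((F+1)(R_{u}+1))} \leq \sqrt{1-\delta}$ across the entire admissible message space, rather than merely at the symmetric fixed point (where this quantity equals $|f_{d}'(\hat{R}_{d})|$). This is the classical technical heart of LLY-style potential arguments and is precisely where the square-root choice of $\varphi$ pays off; establishing it cleanly will likely require exploiting the concavity/monotonicity of $\Phi$ together with the product form of $F$, possibly by reducing to the worst case where most coordinates are at extreme values.
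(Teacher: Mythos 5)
Your approach is essentially the same as the paper's: pass to potential coordinates $K = \varphi(R)$, establish per-step contraction by $\sqrt{1-\delta}$ via the Jacobian of $\varphi \circ F \circ \varphi^{-1}$ (which is exactly \cref{thm:ssmK}, cited from \cite{LLY13}), iterate from level $\ell$ down to level $2$, and bracket with two MVT applications plus the inequality $\Phi(R_0)/\Phi(R_1) \le R_1/R_0$ to produce the prefactor $\eta^{*}$. The only glitch is the induction hypothesis as you wrote it: the right-hand side $\abs{K_{T,r}^{\min}(2) - K_{T,r}^{\max}(2)}$ is specific to the root $r$ of $T$, but the inductive step applied to a child $u$ produces $\abs{K_{T_u,u}^{\min}(2) - K_{T_u,u}^{\max}(2)}$, and this need not be dominated by the root's level-2 disagreement (e.g. if $r$ has low degree and a descendant has high degree). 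You must state the induction with the universal quantity $\abs{K^{\min}(2) - K^{\max}(2)} = \max_{T',r'}\abs{K_{T',r'}^{\min}(2) - K_{T',r'}^{\max}(2)}$ on the right, which closes the induction and is what the final bracketing uses anyway; the paper sidesteps this by carrying an explicit $\max_{u \in L_r(\ell-2)}$ through the chain rather than packaging it as a clean induction. With that one correction your argument reproduces the paper's proof.
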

For the sake of completeness, we provide a proof of this specific bound in \cref{sec:precisessm}.

With these results in hand, we may deduce \cref{prop:betazeroratiosdecayrate}.
\subsection{Proof of \texorpdfstring{\cref{prop:betazeroratiosdecayrate}}{betazeroratiosdecayrate}}
In order to apply \cref{prop:messagedecayrate} and \cref{prop:precisessm}, we must relate $\mathcal{R}_{r}^{v}$ to $\mathcal{K}_{r}^{v}$. This is done in the following lemma.
\begin{lemma}[Relating $R$-Pseudoinfluence to $\varphi$-Pseudoinfluence]\label{lem:messageratiosrelate} 
Let $T$ be a tree rooted at $r$. For any $\ell\geq 1$ and any vertex $v\in L_r(\ell)$, we have the bound
\begin{align*}
    \Phi(R^{\max}(\ell)) \cdot \mathcal{R}_{r}^{v} \leq \mathcal{K}_{r}^{v} \leq \Phi(R^{\min}(\ell)) \cdot \mathcal{R}_{r}^{v}.
\end{align*}
\end{lemma}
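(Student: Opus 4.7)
The plan is to derive both inequalities from the Mean Value Theorem applied to $\varphi$, together with the fact that $\Phi = \varphi'$ is monotone decreasing (since $\varphi$ is concave). The crux is to argue that for any boundary condition $p$ on $L_r(\ell(v)) \smallsetminus \{v\}$, the two ratios $R_r^{v^0,p}$ and $R_r^{v^1,p}$ both lie in the interval $[R^{\min}(\ell), R^{\max}(\ell)]$, so any intermediate point produced by the Mean Value Theorem also lies in this interval.

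First I would fix an arbitrary $p : L_r(\ell) \smallsetminus \{v\} \to [0,1]$ and observe that together with assigning $v$ marginal $0$ (resp.\ $1$), the pair $(p, v^0)$ (resp.\ $(p, v^1)$) is a full assignment of marginals on $L_r(\ell)$. Hence by the definitions of $R_{T,r}^{\min}(\ell)$ and $R_{T,r}^{\max}(\ell)$, both $R_r^{v^0,p}$ and $R_r^{v^1,p}$ lie in $[R_{T,r}^{\min}(\ell), R_{T,r}^{\max}(\ell)] \subseteq [R^{\min}(\ell), R^{\max}(\ell)]$. Applying the Mean Value Theorem to $\varphi$ on the interval with endpoints $R_r^{v^0,p}$ and $R_r^{v^1,p}$ yields some $\xi_p$ in that interval such that
\begin{align*}
    \mathcal{K}_r^{v,p} = \abs{\varphi(R_r^{v^0,p}) - \varphi(R_r^{v^1,p})} = \Phi(\xi_p) \cdot \abs{R_r^{v^0,p} - R_r^{v^1,p}} = \Phi(\xi_p) \cdot \mathcal{R}_r^{v,p}.
\end{align*}
Since $\Phi$ is strictly decreasing and $\xi_p \in [R^{\min}(\ell), R^{\max}(\ell)]$, we get the pointwise bound
\begin{align*}
    \Phi(R^{\max}(\ell)) \cdot \mathcal{R}_r^{v,p} \leq \mathcal{K}_r^{v,p} \leq \Phi(R^{\min}(\ell)) \cdot \mathcal{R}_r^{v,p}.
\end{align*}

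The only subtlety is that $\mathcal{K}_r^v$ and $\mathcal{R}_r^v$ are \emph{maxima} over $p$ of possibly different quantities, so the passage from the pointwise inequality to the claim is not tautological. I would handle each direction separately. For the upper bound, pick $p^\star$ achieving the maximum defining $\mathcal{K}_r^v$, then
\begin{align*}
    \mathcal{K}_r^v = \mathcal{K}_r^{v,p^\star} \leq \Phi(R^{\min}(\ell)) \cdot \mathcal{R}_r^{v,p^\star} \leq \Phi(R^{\min}(\ell)) \cdot \mathcal{R}_r^v.
\end{align*}
For the lower bound, pick $p^{\star\star}$ achieving the maximum defining $\mathcal{R}_r^v$, then
\begin{align*}
    \Phi(R^{\max}(\ell)) \cdot \mathcal{R}_r^v = \Phi(R^{\max}(\ell)) \cdot \mathcal{R}_r^{v,p^{\star\star}} \leq \mathcal{K}_r^{v,p^{\star\star}} \leq \mathcal{K}_r^v,
\end{align*}
finishing the proof.

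I do not expect any real obstacle here: the argument is essentially one application of the Mean Value Theorem plus monotonicity of $\Phi$. The only thing one must be careful about is the choice of maximizing $p$, which is resolved by using different $p$'s for the two directions as above.
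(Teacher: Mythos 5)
Your proposal is correct and uses essentially the same argument as the paper: the Mean Value Theorem combined with monotonicity of $\Phi$ yields a pointwise bound for each $p$, after which the maximization is handled (the paper says ``As this holds for any $p$'' and ``analogously,'' while you spell out the choice of maximizing $p$ for each direction, which is the same content made slightly more explicit).
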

\begin{proof}
First, observe by the Mean Value Theorem and monotonicity of $\Phi$ that for any $R_{0} \leq R_{1}$, we have the inequalities
\begin{align*}
    \Phi(R_{1}) \cdot \abs{R_{1} - R_{0}} \leq \abs{\varphi(R_{1}) - \varphi(R_{0})} \leq \Phi(R_{0}) \cdot \abs{R_{1} - R_{0}}.
\end{align*}
Now, fix $v \in L_{r}(\ell)$; we prove the RHS inequality in lemma's statement.  For any boundary condition $p:L_{r}(\ell(v)) \smallsetminus \{v\} \rightarrow [0,1]$, we have
\begin{align*}
    \abs{\varphi(R_{r}^{v^{0},p}) - \varphi(R_{r}^{v^{1},p})} \leq \max\{\Phi(R_{r}^{v^{0},p}), \Phi(R_{r}^{v^{1},p})\} \cdot \abs{R_{r}^{v^{0},p} - R_{r}^{v^{1},p}} \leq \Phi(R_{r}^{\min}(\ell)) \cdot \mathcal{R}_{r}^{v} \leq \Phi(R^{\min}(\ell)) \cdot \mathcal{R}_{r}^{v}.
\end{align*}
As this holds for any $p$, we have $\mathcal{K}_{r}^{v} \leq \Phi(R_{r}^{\min}(\ell)) \cdot \mathcal{R}_{r}^{v}$. The reverse inequality can be proved analogously.
\end{proof}
Furthermore, we must show that for $\ell_{0} = \Theta(1/\delta)$, \cref{prop:messagedecayrate} is applicable. For this, we appeal to \cref{prop:precisessm} and the fact that $\eta^{*} \leq O(1/\Delta)$. Observe that $\sqrt{1 - \delta}^{\ell_{0} - 2} \cdot \eta^{*} \leq 1/2$ holds for $\ell_{0} = \Theta(1/\delta)$.

With these in hand, we may apply \cref{prop:messagedecayrate}, \cref{prop:precisessm} and \cref{lem:messageratiosrelate} for $\ell > \ell_{0}$ to deduce that
\begin{align*}
    \sum_{v \in L_{r}(\ell)} \mathcal{R}_{r}^{v} &\leq \frac{1}{\Phi(R^{\max}(\ell))} \cdot \sum_{v \in L_{r}(\ell)} \mathcal{K}_{r}^{v} \tag*{(\cref{lem:messageratiosrelate})} \\
    &\leq \frac{1}{\Phi(R^{\max}(\ell))} \cdot \max_{u \in L_{r}(1)} \wrapc{\sum_{v \in L_{u}(\ell-1)} \mathcal{K}_{u}^{v}} \cdot \sqrt{1 - \delta} \cdot \wrapp{1 + 2\eta^{*} \sqrt{1 - \delta}^{\ell-3}}^{\Delta+1} \tag*{(\cref{prop:messagedecayrate,prop:precisessm})} \\
    &\leq \dots \tag*{(Induction)} \\
    &\leq \frac{1}{\Phi(R^{\max}(\ell))} \cdot \max_{u \in L_{r}(\ell - \ell_{0})} \wrapc{\sum_{v \in L_{u}(\ell_{0})} \mathcal{K}_{u}^{v}} \cdot \sqrt{1 - \delta}^{\ell - \ell_{0}} \prod_{k=\ell_{0}}^{\ell-1} \wrapp{1 + 2\eta^{*}\sqrt{1-\delta}^{k-2}}^{\Delta+1} \tag*{(\cref{prop:messagedecayrate,prop:precisessm})} \\
    &\leq \frac{\Phi(R^{\min}(\ell_{0}))}{\Phi(R^{\max}(\ell))} \cdot \max_{u \in L_{r}(\ell - \ell_{0})} \wrapc{\sum_{v \in L_{u}(\ell_{0})} \mathcal{R}_{u}^{v}} \cdot \sqrt{1 - \delta}^{\ell - \ell_{0}} \prod_{k=\ell_{0}}^{\ell-1} \wrapp{1 + 2\eta^{*}\sqrt{1-\delta}^{k-2}}^{\Delta+1} \tag*{(\cref{lem:messageratiosrelate})} \\
    &\leq \frac{\Phi(R^{\min}(2))}{\Phi(R^{\max}(2))} \cdot \max_{u \in L_{r}(\ell - \ell_{0})} \wrapc{\sum_{v \in L_{u}(\ell_{0})} \mathcal{R}_{u}^{v}} \cdot \sqrt{1 - \delta}^{\ell - \ell_{0}} \exp\wrapp{O(1) \cdot \sum_{k=\ell_{0}}^{\ell-1} \sqrt{1-\delta}^{k-2}} \tag*{(Using $1 + x \leq e^{x}$, \cref{eq:etastarbound,fact:RminRmaxbounds}, and Monotonicity of $\Phi$)} \\
    &\leq \frac{R^{\max}(2)}{R^{\min}(2)} \cdot \max_{u \in L_{r}(\ell - \ell_{0})} \wrapc{\sum_{v \in L_{u}(\ell_{0})} \mathcal{R}_{u}^{v}} \cdot \sqrt{1 - \delta}^{\ell - \ell_{0}} \cdot O(1) \tag*{(Using $\frac{\Phi(R_{0})}{\Phi(R_{1})} =\sqrt{\frac{R_{1}(R_{1} + 1)}{R_{0}(R_{0} + 1)}} \leq \frac{R_{1}}{R_{0}}$ for $R_{0} \leq R_{1}$, and $\ell_{0} = \Theta(1/\delta)$)} \\
    &\leq \frac{\lambda}{\frac{\lambda}{(1 + \lambda)^{\Delta}}} \cdot \max_{u \in L_{r}(\ell - \ell_{0})} \wrapc{\sum_{v \in L_{u}(\ell_{0})} \mathcal{R}_{u}^{v}} \cdot \sqrt{1 - \delta}^{\ell - \ell_{0}} \cdot O(1) \tag*{(\cref{fact:RminRmaxbounds})} \\
    &\leq O(1) \cdot \sqrt{1 - \delta}^{\ell - \ell_{0}} \cdot \max_{u \in L_{r}(\ell - \ell_{0})} \wrapc{\sum_{v \in L_{u}(\ell_{0})} \mathcal{R}_{u}^{v}}. \tag*{(Using $\lambda \leq O(1/\Delta)$)} \\
\end{align*}

At this point, all that is left is to prove \cref{prop:messagedecayrate} and \cref{prop:precisessm}. We prove \cref{prop:messagedecayrate} and \cref{prop:precisessm} in the following subsections.
\subsection{The \texorpdfstring{$\varphi$-Pseudoinfluence}{PotentialPseudoinfluence} Decay: Proof of \texorpdfstring{\cref{prop:messagedecayrate}}{PropMessageDecay}}
Our goal in this subsection is to prove \cref{prop:messagedecayrate}. While initially this appears to be a more daunting task, it is made feasible by the fact that the tree recurrence $F$ for $R$ induces a corresponding tree recurrence for $K$ given by
\begin{align*}
    K_{r}^{\sigma} = (\varphi \circ F \circ \varphi^{-1})(K_{u}^{\sigma} : u \in L_{r}(1)).
\end{align*}
Using this tree recurrence for $K_{r}^{\sigma}$, we prove \cref{lem:truedecay} and \cref{lem:truetoideal}. Chained together with \cref{lem:idealdecay}, which lies at the heart of the results in \cite{LLY13}, we immediately obtain \cref{prop:messagedecayrate}.

Throughout, we will let $\mathbf{R} = (R_{u} : u \in L_{r}(1))$, $\mathbf{R}^{\max}(\ell) = (R_{u}^{\max}(\ell-1) : u \in L_{r}(1))$ and $\mathbf{R}^{\min}(\ell) = (R_{u}^{\min}(\ell-1) : u \in L_{r}(1))$ denote vectors with $|L_{r}(1)|$ many entries. We define $\mathbf{K}, \mathbf{K}^{\max}(\ell),\mathbf{K}^{\min}(\ell)$ analogously. Finally, if $\mathbf{x},\mathbf{y}$ are two vectors of the same dimension, then we write $\mathbf{x} \leq \mathbf{y}$ for entrywise inequality; if $y \in [-\infty,\infty]$, we write $\mathbf{x} \leq y$ if all entries of $x$ are upper bounded by $y$.
\begin{lemma}[True Decay]\label{lem:truedecay}
For every $\lambda$, and every tree $T$ rooted at $r$, we have the inequality
\begin{align*}
    \frac{\sum_{v \in L_{r}(\ell)} \mathcal{K}_{r}^{v}}{\max_{u \in L_{r}(1)}\wrapc{\sum_{v \in L_{u}(\ell-1)} \mathcal{K}_{u}^{v}}} \leq \sum_{u \in L_{r}(1)} \max_{\mathbf{K}^{\min}(\ell) \leq \mathbf{K} \leq \mathbf{K}^{\max}(\ell)} \abs{\partial_{K_{u}} (\varphi \circ F \circ \varphi^{-1})(\mathbf{K})}.
\end{align*}
\end{lemma}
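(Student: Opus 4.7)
The plan is to exploit the tree recurrence for $K$, namely $K_r^{\sigma} = H(\mathbf{K})$ with $H = \varphi \circ F \circ \varphi^{-1}$ and $\mathbf{K} = (K_u^{\sigma} : u \in L_r(1))$, and then apply the mean value theorem along a single coordinate for each vertex $v \in L_r(\ell)$. The crucial structural observation is that every such $v$ sits in the subtree of exactly one child $u = u(v) \in L_r(1)$, and a boundary condition $p : L_r(\ell) \setminus \{v\} \to [0,1]$ decomposes as independent boundary conditions on the subtrees rooted at the children of $r$. Consequently, for children $w \neq u(v)$ the values $K_w^{v^0,p}$ and $K_w^{v^1,p}$ coincide, and only the $u(v)$-coordinate changes when $v$ is flipped.

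The first main step will be to apply MVT in the $u(v)$-coordinate of $H$: there is some intermediate $\tilde{\mathbf{K}}$ with $|K_r^{v^0,p} - K_r^{v^1,p}| = |\partial_{K_{u(v)}} H(\tilde{\mathbf{K}})| \cdot |K_{u(v)}^{v^0,p} - K_{u(v)}^{v^1,p}|$. I then need to verify that $\tilde{\mathbf{K}}$ lies in the box $[\mathbf{K}^{\min}(\ell), \mathbf{K}^{\max}(\ell)]$. This follows because each coordinate $K_w^{v^b, p}$ is of the form $K_w^{p'}$ for some partial assignment of marginals $p'$ on $L_w(\ell-1)$, so by definition it lies in $[K_w^{\min}(\ell-1), K_w^{\max}(\ell-1)]$, and the line segment stays in the box by convexity.

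Bounding the derivative factor by its maximum over the box and then taking the supremum over all $p$ gives the per-vertex bound
\begin{equation*}
    \mathcal{K}_r^v \ \leq \ \max_{\mathbf{K}^{\min}(\ell) \leq \mathbf{K} \leq \mathbf{K}^{\max}(\ell)} \abs{\partial_{K_{u(v)}} H(\mathbf{K})} \cdot \mathcal{K}_{u(v)}^v,
\end{equation*}
where I use that as $p$ ranges over assignments on $L_r(\ell) \setminus \{v\}$, its restriction to $L_{u(v)}(\ell-1) \setminus \{v\}$ ranges over all admissible boundary conditions defining $\mathcal{K}_{u(v)}^v$.

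Finally, summing over $v \in L_r(\ell)$ and grouping by the child $u = u(v)$ gives
\begin{equation*}
    \sum_{v \in L_r(\ell)} \mathcal{K}_r^v \ \leq \ \sum_{u \in L_r(1)} \Bigl(\max_{\mathbf{K}^{\min}(\ell) \leq \mathbf{K} \leq \mathbf{K}^{\max}(\ell)} \abs{\partial_{K_u} H(\mathbf{K})}\Bigr) \sum_{v \in L_u(\ell-1)} \mathcal{K}_u^v,
\end{equation*}
and upper bounding each inner sum by $\max_{u} \sum_{v \in L_u(\ell-1)} \mathcal{K}_u^v$ and dividing yields the lemma. The only nontrivial step is the box containment for $\tilde{\mathbf{K}}$, and this reduces to correctly parsing the definitions of $K_w^{\min/\max}(\ell-1)$ relative to the vector $\mathbf{K}^{\min/\max}(\ell)$ indexed by $L_r(1)$; everything else is routine MVT plus grouping.
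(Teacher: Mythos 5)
Your proposal is correct and follows essentially the same route as the paper: apply the single-coordinate Mean Value Theorem to the recurrence $H = \varphi \circ F \circ \varphi^{-1}$ for each $v \in L_u(\ell-1)$, observe that only the $u$-coordinate changes when $v$ is flipped, bound the derivative by its max over the box $[\mathbf{K}^{\min}(\ell),\mathbf{K}^{\max}(\ell)]$, take the sup over $p$, and then sum over $v$ grouping by child $u$. The only difference is that you spell out two details the paper leaves implicit — why the other coordinates $K_w^{v^0,p}=K_w^{v^1,p}$ for $w \neq u(v)$, and why the intermediate point $\tilde{\mathbf{K}}$ lies in the box — but this is the same argument.
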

\begin{lemma}[Ideal Decay; \cite{LLY13} Lemmas 12, 13, 14]\label{lem:idealdecay}
Assume $\lambda$ is up-to-$\Delta$ unique with gap $0 < \delta < 1$. Let $T$ be any tree of maximum degree $\leq\Delta$ rooted at $r$. Then we have the bound
\begin{align*}
    \max_{0 \leq \mathbf{K} \leq \infty} \norm{\nabla (\varphi \circ F \circ \varphi^{-1})(\mathbf{K})}_{1} \leq \sqrt{1 - \delta}
\end{align*}
\end{lemma}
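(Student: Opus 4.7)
The plan is to prove \cref{lem:idealdecay} in three stages: a direct chain-rule computation of the gradient, a symmetrization reducing to the case of equal children, and a one-variable comparison with the fixed point that invokes up-to-$\Delta$ uniqueness.

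\textbf{Stage 1: gradient computation.} Let $H = \varphi \circ F \circ \varphi^{-1}$, $R_u = \varphi^{-1}(K_u)$, and $R' = F(\mathbf{R}) = \lambda / \prod_{u}(R_u + 1)$. Since $\partial_{R_u} F(\mathbf{R}) = -R'/(R_u+1)$, the chain rule gives $\partial_{K_u} H(\mathbf{K}) = -\Phi(R') R' / ((R_u+1) \Phi(R_u))$. Substituting $\Phi(R) = 1/\sqrt{R(R+1)}$ and simplifying, the factors collapse into the remarkably clean form
\[
\abs{\partial_{K_u} H(\mathbf{K})} = \sqrt{\frac{R'}{R'+1}} \cdot \sqrt{\frac{R_u}{R_u+1}},
\]
so that $\norm{\nabla H(\mathbf{K})}_{1} = \sqrt{R'/(R'+1)} \cdot \sum_u \sqrt{R_u/(R_u+1)}$. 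This product factorization is precisely the reason for engineering $\varphi$ in this specific form.

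\textbf{Stage 2: symmetrization.} Next, I would show the worst case is the symmetric one. Change variables to $t_u := \log(R_u+1)$, so that $R_u/(R_u+1) = 1 - e^{-t_u}$ and $R'$ depends only on $T := \sum_u t_u$. With $T$ fixed, the outer factor $\sqrt{R'/(R'+1)}$ is constant, and the bound reduces to maximizing $\sum_u \sqrt{1 - e^{-t_u}}$ at fixed $T$. The one-variable function $h(t) = \sqrt{1-e^{-t}}$ is strictly concave on $(0,\infty)$: differentiating $h^2 = 1 - e^{-t}$ twice yields $2(h')^2 + 2 h h'' = -e^{-t}$, hence $h'' < 0$. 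By Jensen's inequality, at fixed $T$ the sum is maximized precisely when all $t_u$ are equal. This reduces the problem to showing
\[
d \cdot \sqrt{\frac{R}{R+1}} \cdot \sqrt{\frac{R'}{R'+1}} \le \sqrt{1-\delta}, \qquad R' = \frac{\lambda}{(R+1)^{d}},
\]
uniformly in $R \geq 0$ and $d \leq \Delta - 1$ (the relevant range, since in a max-degree-$\Delta$ tree each internal vertex has at most $\Delta-1$ children).

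\textbf{Stage 3: univariate bound via the fixed point.} Squaring, it suffices to show $G_d(R) := d^2 \cdot \frac{R}{R+1} \cdot \frac{R'}{R'+1} \leq 1 - \delta$. A direct calculation gives $(\log G_d)'(R) = 1/(R(R+1)) - d(R+1)^{d-1}/(\lambda + (R+1)^d)$, which vanishes precisely when $R' + 1 = dR$. Denoting this critical point $R_*$ and substituting $R_*' = dR_* - 1$ into the definition of $G_d$ yields $G_d(R_*) = d(dR_* - 1)/(R_* + 1)$. Meanwhile, up-to-$\Delta$ uniqueness with gap $\delta$ supplies $\abs{f_d'(\hat R_d)} = d \hat R_d/(\hat R_d+1) \le 1 - \delta$ at the fixed point $\hat R_d$, where the coupled algebraic relations $\hat R_d (\hat R_d+1)^d = \lambda = (dR_* - 1)(R_*+1)^d$ hold. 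The main obstacle is precisely the comparison of these two expressions: the symmetric maximizer $R_*$ and the recurrence fixed point $\hat R_d$ do not coincide in general, so one must translate the gap at $\hat R_d$ into a bound at $R_*$. I would complete this comparison by exploiting monotonicity/concavity properties of $f_d$ as in Lemmas 12--14 of \cite{LLY13}, arriving at $d(dR_* - 1)/(R_* + 1) \le d \hat R_d/(\hat R_d + 1) \le 1 - \delta$ and hence the claim.
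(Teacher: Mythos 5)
Your Stages 1 and 2 are correct and clean: the chain-rule identity $\abs{\partial_{K_u} H} = \sqrt{R'/(R'+1)}\cdot\sqrt{R_u/(R_u+1)}$ is exactly right (and is the reason $\Phi(R)=1/\sqrt{R(R+1)}$ is chosen), the change of variables $t_u=\log(R_u+1)$ correctly isolates $R'$ as a function of $T=\sum_u t_u$ alone, and the concavity of $t\mapsto\sqrt{1-e^{-t}}$ plus Jensen does reduce matters to the symmetric univariate bound. Your Stage 3 also correctly locates the unique interior maximizer $R_*$ of $G_d(R)=d^2\frac{R}{R+1}\cdot\frac{f_d(R)}{f_d(R)+1}$ at $f_d(R_*)+1 = dR_*$ (boundary values vanish, and the critical-point equation has a unique root since $(f_d(R)+1)/(dR)$ is strictly decreasing), and the reduction to showing $\frac{dR_*-1}{R_*+1}\le\frac{\hat R_d}{\hat R_d+1}$ is the right target. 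The paper itself simply cites \cite{LLY13}, so you are essentially reconstructing that argument.

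However, the last step is a genuine gap, and "monotonicity/concavity of $f_d$'' does not by itself deliver it: the two equations $\hat R_d(\hat R_d+1)^d=\lambda=(dR_*-1)(R_*+1)^d$ do not directly compare $\hat R_d$ with $R_*$, and in particular $R_*\ne\hat R_d$ in general. One way to actually close it: set $u=\hat R_d/(\hat R_d+1)$ and $v=(dR_*-1)/(R_*+1)$. Then
\begin{align*}
\lambda=\frac{u}{(1-u)^{d+1}}\qquad\text{and}\qquad \lambda=\frac{v\,(d+1)^{d+1}}{(d-v)^{d+1}},
\end{align*}
so $u=\alpha^{-1}(\lambda)$ and $v=\beta^{-1}(\lambda)$ for two strictly increasing functions $\alpha,\beta$. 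A direct algebraic check shows $\alpha(t)\le\beta(t)$ if and only if $t\le 1/d$, so the pointwise comparison gives $v\le u$ \emph{only} once one separately verifies that both $u\le 1/d$ and $v\le 1/d$. This is where up-to-$\Delta$ uniqueness enters again: $u\le 1/d$ is equivalent to $|f_d'(\hat R_d)|\le 1$, and a short computation shows $\beta(1/d)=\lambda_c(d+1)$, so $v\le 1/d$ is equivalent to $\lambda\le\lambda_c(d+1)$. Without this two-part observation (both roots lie left of the crossing point $1/d$), the pointwise inequality between $\alpha$ and $\beta$ alone is false, and the naive comparison fails. So the plan is sound, but you should not wave at Lemmas 12--14 of \cite{LLY13} for this step; you must either cite the result as the paper does or supply this algebraic comparison explicitly.
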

\begin{lemma}[Relating True Decay to Ideal Decay]\label{lem:truetoideal}
Assume $\abs{R_{u}^{\max}(\ell-1) - R_{u}^{\min}(\ell-1)} \leq \eta$ for all $u \in L_{r}(1)$, where $\eta \leq \frac{1}{2}$. Then for every $\lambda$, and every tree $T$ with maximum degree $\leq\Delta$ rooted at $r$, we have the inequality
\begin{align*}
    \sum_{u \in L_{r}(1)} \max_{\mathbf{K}^{\min}(\ell) \leq \mathbf{K} \leq \mathbf{K}^{\max}(\ell)} \abs{\partial_{K_{u}} (\varphi \circ F \circ \varphi^{-1})(\mathbf{K})} \leq \wrapp{1 + 2\eta}^{\Delta+1} \norm{\nabla (\varphi \circ F \circ \varphi^{-1})(\mathbf{K}^{\max}(\ell))}_{1}
\end{align*}
\end{lemma}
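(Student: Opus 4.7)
The plan is to compute $\partial_{K_u}(\varphi \circ F \circ \varphi^{-1})(\mathbf{K})$ explicitly via the chain rule, and then control how much it can change as $\mathbf{K}$ ranges over the box $[\mathbf{K}^{\min}(\ell),\mathbf{K}^{\max}(\ell)]$ by bounding its ratio against the value at $\mathbf{K}^{\max}(\ell)$ factor-by-factor. Since $\varphi$ is strictly monotone, writing $\mathbf{R}=\varphi^{-1}(\mathbf{K})$ componentwise, the box constraint on $\mathbf{K}$ is equivalent to $R_v \in [R_v^{\min}(\ell-1),R_v^{\max}(\ell-1)]$ for every $v \in L_r(1)$. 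The chain rule, together with $\partial_{R_u}F(\mathbf{R}) = -F(\mathbf{R})/(R_u+1)$ for the hardcore tree recurrence, gives
\[
h_u(\mathbf{R}) \;:=\; \left|\partial_{K_u}(\varphi \circ F \circ \varphi^{-1})(\mathbf{K})\right| \;=\; \frac{\Phi(F(\mathbf{R}))\,F(\mathbf{R})}{\Phi(R_u)\,(R_u+1)}.
\]

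The key observation is that $F$ is monotone decreasing in each coordinate, so $\mathbf{R} \leq \mathbf{R}^{\max}(\ell)$ entrywise forces $F(\mathbf{R}) \geq F(\mathbf{R}^{\max}(\ell))$; combined with $\Phi$ being decreasing (since $\Phi=\varphi'$ and $\varphi$ is concave), both $\Phi(F(\mathbf{R}))/\Phi(F(\mathbf{R}^{\max}(\ell)))$ and $\Phi(R_u^{\max}(\ell-1))/\Phi(R_u)$ are at most $1$. Thus the ratio $h_u(\mathbf{R})/h_u(\mathbf{R}^{\max}(\ell))$ is bounded above by the ``polynomial'' part
\[
\frac{F(\mathbf{R})}{F(\mathbf{R}^{\max}(\ell))}\cdot\frac{R_u^{\max}(\ell-1)+1}{R_u+1} \;=\; \frac{R_u^{\max}(\ell-1)+1}{R_u+1}\prod_{v \in L_r(1)} \frac{R_v^{\max}(\ell-1)+1}{R_v+1},
\]
where the product expansion uses $F(\mathbf{R})=\lambda\prod_{v}(R_v+1)^{-1}$. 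Each of these at most $\Delta+1$ factors satisfies $(R_v^{\max}(\ell-1)+1)/(R_v+1) = 1 + (R_v^{\max}(\ell-1)-R_v)/(R_v+1) \leq 1+\eta$, using $R_v \geq 0$ and $|R_v^{\max}(\ell-1)-R_v^{\min}(\ell-1)| \leq \eta$. Hence $h_u(\mathbf{R}) \leq (1+\eta)^{\Delta+1}\, h_u(\mathbf{R}^{\max}(\ell)) \leq (1+2\eta)^{\Delta+1}\, h_u(\mathbf{R}^{\max}(\ell))$, and summing over $u \in L_r(1)$ yields the claim (noting $|L_r(1)| \leq \Delta$ since $T$ has maximum degree $\leq \Delta$).

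I do not expect a real obstacle here: all the work is in carefully applying the chain rule and tracking the directions of monotonicity, namely that $F$ is decreasing while $\mathbf{R}$ increases and $\Phi$ is decreasing, so the $\Phi$-factors cancel in the favorable direction while the polynomial factors absorb into a clean product. The hypothesis $\eta \leq 1/2$ is not strictly needed for this direct estimate, but is the natural regime in which the multiplicative loss $(1+2\eta)^{\Delta+1}$ remains of order $e^{O(\Delta\eta)}$, which is what is ultimately required to make the chain of estimates in the proof of \cref{prop:betazeroratiosdecayrate} close against the strong spatial mixing decay $\eta^{*}\sqrt{1-\delta}^{\,\ell-2} = O(1/\Delta)\cdot\sqrt{1-\delta}^{\,\ell-2}$ provided by \cref{prop:precisessm}.
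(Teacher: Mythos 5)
Your proof is correct and follows essentially the same path as the paper: compute $\partial_{K_u}(\varphi\circ F\circ\varphi^{-1})$ via the chain rule as $\Phi(F(\mathbf{R}))\cdot|\partial_{R_u}F(\mathbf{R})|\cdot\Phi(R_u)^{-1}$, use monotonicity to show the two $\Phi$-ratios against $\mathbf{R}^{\max}(\ell)$ are at most $1$, and bound the remaining product of $(\Delta+1)$ factors of the form $(R_v^{\max}+1)/(R_v+1)$. Your estimate $(R_v^{\max}+1)/(R_v+1)=1+(R_v^{\max}-R_v)/(R_v+1)\leq 1+\eta$ (using only $R_v\geq 0$) is in fact slightly cleaner and tighter than the paper's Claim~\ref{claim:etalipschitz}, which proves $\frac{1}{(x+1)-\eta}\leq(1+2\eta)\frac{1}{x+1}$ and therefore needs $\eta\leq 1/2$; as you correctly observe, your direct bound gives $(1+\eta)^{\Delta+1}\leq(1+2\eta)^{\Delta+1}$ without that hypothesis.
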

\begin{proof}[Proof of \cref{lem:truedecay}]
To prove the claim, it suffices to show that if $v \in L_{u}(\ell-1)$ for $u \in L_{r}(1)$, then
\begin{align}\label{eq:termwiseineq}
    \mathcal{K}_{r}^{v} \leq \max_{\mathbf{K}^{\min}(\ell) \leq \mathbf{K} \leq \mathbf{K}^{\max}(\ell)} \abs{\partial_{K_{u}} (\varphi \circ F \circ \varphi^{-1})(\mathbf{K})} \cdot \mathcal{K}_{u}^{v}
\end{align}
since it then follows that
\begin{align*}
    \sum_{v \in L_{r}(\ell)} \mathcal{K}_{r}^{v} &= \sum_{u \in L_{r}(1)} \sum_{v \in L_{u}(\ell-1)} \mathcal{K}_{r}^{v} \\
    &\leq \sum_{u \in L_{r}(1)} \sum_{v \in L_{u}(\ell-1)} \max_{\mathbf{K}^{\min}(\ell) \leq \mathbf{K} \leq \mathbf{K}^{\max}(\ell)} \abs{\partial_{K_{u}} (\varphi \circ F \circ \varphi^{-1})(\mathbf{K})} \cdot \mathcal{K}_{u}^{v} \\
    &\leq \sum_{u \in L_{r}(1)} \wrapb{\max_{\mathbf{K}^{\min}(\ell) \leq \mathbf{K} \leq \mathbf{K}^{\max}(\ell)} \abs{\partial_{K_{u}} (\varphi \circ F \circ \varphi^{-1})(\mathbf{K})}} \cdot \wrapb{\sum_{v \in L_{u}(\ell-1)} \mathcal{K}_{u}^{v}} \\
    &\leq \wrapb{\sum_{u \in L_{r}(1)} \max_{\mathbf{K}^{\min}(\ell) \leq \mathbf{K} \leq \mathbf{K}^{\max}(\ell)} \abs{\partial_{K_{u}} (\varphi \circ F \circ \varphi^{-1})(\mathbf{K})}} \cdot \max_{u \in L_{r}(1)}\wrapc{\sum_{v \in L_{u}(\ell-1)} \mathcal{K}_{u}^{v}}
\end{align*}
as desired. Now, it remains to prove \cref{eq:termwiseineq}.

Fix an arbitrary partial assignment $p : L_{r}(\ell) \smallsetminus \{v\} \rightarrow [0,1]$. By the Mean Value Theorem, there exists $\mathbf{K}^{\min}(\ell) \leq \tilde{\mathbf{K}} \leq \mathbf{K}^{\max}(\ell)$ such that
\begin{align*}
    \mathcal{K}_{r}^{v,p} &= \abs{\partial_{K_{u}} (\varphi \circ F \circ \varphi^{-1})(\tilde{\mathbf{K}})} \cdot \mathcal{K}_{u}^{v,p} \\
    &\leq \abs{\partial_{K_{u}} (\varphi \circ F \circ \varphi^{-1})(\tilde{\mathbf{K}})} \cdot \mathcal{K}_{u}^{v} \\
    &\leq \max_{\mathbf{K}^{\min}(\ell) \leq \mathbf{K} \leq \mathbf{K}^{\max}(\ell)} \abs{\partial_{K_{u}} (\varphi \circ F \circ \varphi^{-1})(\mathbf{K})} \cdot \mathcal{K}_{u}^{v}
\end{align*}
Since this holds for all $p$, we obtain the desired bound.
\end{proof}
\begin{proof}[Proof of \cref{lem:truetoideal}]
Fix $u \in L_{r}(1)$. We have by the Chain Rule that
\begin{align*}
    \partial_{K_{u}}(\varphi \circ F \circ \varphi^{-1})(\mathbf{K}) &= (\varphi' \circ F \circ \varphi^{-1})(\mathbf{K}) \cdot (\partial_{R_{u}}F \circ \varphi^{-1})(\mathbf{K}) \cdot (\varphi^{-1})'(K_{u}) \\
    &= (\Phi \circ F)(\mathbf{R}) \cdot (\partial_{R_{u}}F)(\mathbf{R}) \cdot \frac{1}{\Phi(R_{u})}
\end{align*}
where we recall $\varphi' = \Phi$. Note that $(\varphi^{-1})' = \frac{1}{\Phi \circ \varphi^{-1}}$ follows by the Inverse Function Theorem.

We now bound each term separately under the restriction $\mathbf{R}^{\min}(\ell) \leq \mathbf{R} \leq \mathbf{R}^{\max}(\ell)$. We claim the following.
\begin{enumerate}
    \item $\abs{(\Phi \circ F)(\mathbf{R})} \leq \abs{(\Phi \circ F)(\mathbf{R}^{\max}(\ell))}$: To see this, observe that $F$ is monotone decreasing in each coordinate. Furthermore, $\Phi$ is monotone decreasing. Hence, $\Phi \circ F$ is monotone increasing in each coordinate.
    \item $\abs{\partial_{R_{u}}F(\mathbf{R})} \leq \wrapp{1 + 2\eta}^{\Delta+1} \abs{\partial_{R_{u}}F(\mathbf{R}^{\max}(\ell))}$: To see this, observe that
    \begin{align*}
        \partial_{R_{u}}F(\mathbf{R}) &= -\lambda \prod_{w \in L_{r}(1) : w \neq u} \frac{1}{R_{w} + 1} \cdot \frac{1}{(R_{u} + 1)^{2}}
    \end{align*}
    is negative and monotone increasing. Hence, $\abs{\partial_{R_{u}}F(\mathbf{R})}$ is positive and monotone decreasing. With this observation, define $\bm{\eta} = \mathbf{R}^{\max}(\ell) - \mathbf{R}^{\min}(\ell) = (\eta_{u} : u \in L_{r}(1))$ for convenience. Note that $\bm{\eta} \leq \eta$. Then we have
    \begin{align*}
        \abs{\partial_{R_{u}}F(\mathbf{R})} &\leq \abs{\partial_{R_{u}}F(\mathbf{R}^{\min}(\ell))} = \abs{\partial_{R_{u}}F(\mathbf{R}^{\max}(\ell) - \bm{\eta})} \\&= \lambda \prod_{w \in L_{r}(1) : w \neq u} \frac{1}{(R_{w}^{\max} + 1) - \eta_{w}}  \cdot \frac{1}{((R_{u}^{\max} + 1) - \eta_{u})((R_{u}^{\max} + 1) - \eta_{u})}
    \end{align*}
    Our goal is to control this latter inequality by $\wrapp{1 + 2\eta}^{\Delta+1}\abs{\partial_{R_{u}}F(\mathbf{R}^{\max}(\ell))}$. To do this, we use the following claim, which we prove at the end of this subsection.
    \begin{claim}\label{claim:etalipschitz}
    Assume $\eta \leq \frac{1}{2}$. Then for every $x \geq 0$, we have
    \begin{align*}
        \frac{1}{(x + 1) - \eta} &\leq \wrapp{1 + 2\eta} \cdot \frac{1}{x + 1}
    \end{align*}
    \end{claim}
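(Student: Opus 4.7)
The plan is to clear the denominators and reduce the claim to an elementary polynomial inequality. First I would observe that the hypothesis $\eta \leq 1/2$ combined with $x \geq 0$ guarantees $(x+1)-\eta \geq 1/2 > 0$, so both denominators in the claim are strictly positive and cross-multiplying preserves the direction of the inequality. The claim thus becomes equivalent to
\[
(x+1) \;\leq\; (1+2\eta)\bigl((x+1)-\eta\bigr).
\]
Expanding the right-hand side as $(x+1) + 2\eta(x+1) - \eta - 2\eta^{2}$ and subtracting $(x+1)$ from both sides, the claim reduces to showing
\[
\eta\bigl(2x + 1 - 2\eta\bigr) \;\geq\; 0.
\]

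The second step is then to certify that both factors of this product are nonnegative. The factor $\eta$ is nonnegative because in its intended application it is an upper bound on $|R_{u}^{\max}(\ell-1)-R_{u}^{\min}(\ell-1)|$. The factor $2x+1-2\eta$ is bounded below by $1-2\eta \geq 0$ since $x \geq 0$ and $\eta \leq 1/2$. Hence the product is nonnegative and the claim follows.

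There is no substantive obstacle; the only thing worth tracking is that the precise hypothesis $\eta \leq 1/2$ is used twice, once to keep the denominator $(x+1)-\eta$ strictly positive so that cross-multiplication is valid, and once to make $2x+1-2\eta$ nonnegative at the boundary case $x=0$. Any strictly larger bound on $\eta$ would permit the latter factor to become negative for small $x$ and the inequality would fail there, so the constant $1/2$ in the hypothesis is essentially tight for this style of argument.
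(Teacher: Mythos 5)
Your proof is correct and follows essentially the same route as the paper: cross-multiply to reduce the claim to the polynomial inequality $x+1 \leq (1+2\eta)\bigl((x+1)-\eta\bigr)$, then verify it using $x \geq 0$ and $\eta \leq 1/2$. The only cosmetic difference is that the paper divides the resulting inequality by $\eta$ to obtain $\eta \leq \tfrac{1}{2}+x$, whereas you factor it as $\eta(2x+1-2\eta) \geq 0$; the latter handles the degenerate case $\eta = 0$ slightly more cleanly, but the content is identical.
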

    \begin{proof}
    Rearranging, the claim is equivalent to
    \begin{align*}
        &x + 1 \leq (1 + 2\eta)((x + 1) - \eta) = ((x + 1) - \eta) + 2\eta((x + 1) - \eta) \\
        &\iff \eta \leq 2\eta((x + 1) - \eta) \\
        &\iff \eta \leq \frac{1}{2} + x
    \end{align*}
    \end{proof}
    With this claim in hand, we see that
    \begin{align*}
        &\lambda \prod_{w \in L_{r}(1) : w \neq u} \frac{1}{(R_{w}^{\max} + 1) - \eta_{w}}  \cdot \frac{1}{((R_{u}^{\max} + 1) - \eta_{u})((R_{u}^{\max} + 1) - \eta_{u})} \\
        &\leq \underset{\text{now recall } \bm{\eta} \leq \eta}{\underbrace{\wrapp{1 + 2 \eta_{u}}^{2} \prod_{w \in L_{r}(1) : w \neq u} \wrapp{1 + 2 \eta_{w}}}} \cdot \underset{= \abs{\partial_{R_{u}}F(\mathbf{R}^{\max}(\ell))}}{\underbrace{\lambda \prod_{w \in L_{r}(1) : w \neq u} \frac{1}{R_{w}^{\max} + 1} \cdot \frac{1}{(R_{u}^{\max} + 1)^{2}}}} \\
        &\leq \wrapp{1 + 2\eta}^{\Delta+1} \abs{\partial_{R_{u}}F(\mathbf{R}^{\max}(\ell))}
    \end{align*}
    \item $\abs{\frac{1}{\Phi(R_{u})}} \leq \abs{\frac{1}{\Phi(R_{u}^{\max})}}$: This just follows by the fact that $\Phi$ is positive and monotone decreasing, so that $\frac{1}{\Phi}$ is positive and monotone increasing.
\end{enumerate}
From this, we obtain
\begin{align*}
    \max_{\mathbf{K}^{\min}(\ell) \leq \mathbf{K} \leq \mathbf{K}^{\max}(\ell)} \abs{\partial_{K_{u}} (\varphi \circ F \circ \varphi^{-1})(\mathbf{K})} \leq \wrapp{1 + 2\eta}^{\Delta+1}\cdot \abs{\partial_{K_{u}}(\varphi \circ F \circ \varphi^{-1})(\mathbf{K^{\max}})}
\end{align*}
Hence
\begin{align*}
    &\sum_{u \in L_{r}(1)} \max_{\mathbf{K}^{\min}(\ell) \leq \mathbf{K} \leq \mathbf{K}^{\max}(\ell)} \abs{\partial_{K_{u}} (\varphi \circ F \circ \varphi^{-1})(\mathbf{K})} \\
    &\leq \wrapp{1 + 2\eta}^{\Delta+1} \sum_{u \in L_{r}(1)} \abs{\partial_{K_{u}}(\varphi \circ F \circ \varphi^{-1})(\mathbf{K}^{\max}(\ell))} \\
    &= \wrapp{1 + 2\eta}^{\Delta+1} \norm{\nabla(\varphi \circ F \circ \varphi^{-1})(\mathbf{K}^{\max}(\ell))}_{1}
\end{align*}
as desired.
\end{proof}
\begin{remark}
We note that the proofs of \cref{lem:truedecay} and \cref{lem:truetoideal} did not truly rely on the fact that $\Phi(R)$ had the form $\frac{1}{\sqrt{R(R + 1)}}$. The arguments go through for any continuously differentiable, monotone increasing, concave potential function. Where we needed the definition of $\Phi$ itself is in the bound on $\norm{\nabla (\varphi \circ F \circ \varphi^{-1})(\mathbf{K})}_{1}$ given in \cref{lem:idealdecay}, which was proved in \cite{LLY13}.
\end{remark}

\section{Conclusion and Open Problems}
In this work we have shown that for the hardcore distribution on independent sets of a graph of maximum degree $\leq\Delta$ with parameter $\lambda = (1 - \delta)\lambda_{c}(\Delta)$, there is a constant $C(\delta)$ such that the Glauber dynamics mixes in $O(n^{C(\delta)})$ steps. While this running time does not have an exponential dependence on $\log \Delta$ as in the correlation decay algorithm of \cite{Wei06}, its dependence on $\delta$ is significantly worse. Specifically, we have that $C(\delta) \leq \exp\wrapp{O(1/\delta)}$, while the correlation decay algorithm of \cite{Wei06} exhibits a dependence of $C(\delta) \leq O(1/\delta)$.

In a follow-up work by \cite{CLV20}, they showed how one can bound the total pairwise influence of the root of a tree on all other vertices. This is in contrast to our analysis, which focuses on the total pairwise influence of all other vertices on the root. They achieve an upper bound of $O(1/\delta)$, giving $O(1/\delta)$-spectral independence and $n^{O(1/\delta)}$ mixing. They also generalize to all antiferromagnetic two-state spin systems. We leave it as an open problem to bound the total pairwise influence on the root by $O(1/\delta)$, and to generalize this to other two-state spin systems.

We show in \cref{app:infonregtree} that in general, one cannot bound $\lambda_{\max}(\Psi_{\mu})$ asymptotically better than $O(1/\delta)$, even for the special case of trees. We do this by showing for the infinite $\Delta$-regular tree that the total pairwise influence on a vertex is $\Theta(1/\delta)$. This shows that in general the best bound on the mixing time of the Glauber dynamics one can hope to achieve by bounding the spectral independence and applying the local-to-global theorem of \cite{AL20} is $n^{O(1/\delta)}$. However, prior results \cite{Vig01, EHSVY16} for this problem appear to suggest that $O\wrapp{C(\delta) n\log n}$ should be possible, which illustrates a key limitation of the current local-to-global results.

\pagebreak
\printbibliography
\pagebreak

\begin{appendices}
\crefalias{section}{appsec}

\section{Precise Strong Spatial Mixing: Proof of \texorpdfstring{\cref{prop:precisessm}}{PropPreciseSSM}}\label{sec:precisessm}
In this subsection, our goal is to prove \cref{prop:precisessm}. We use the following strong spatial mixing result proved in \cite{LLY13}.
\begin{theorem}[Theorem 9 from \cite{LLY13}]\label{thm:ssmK}
Assume that $\lambda$ is up-to-$\Delta$ unique with rate $\delta$, that is, $\delta = 1 - \max_{1 \leq d < \Delta} \abs{f_{d}'(\hat{R}_{d})}$ satisfies $0 < \delta < 1$. For every $T$ rooted at $r$ and every level $\ell$, we have
\begin{align*}
    \abs{K_{r}^{\min}(\ell) - K_{r}^{\max}(\ell)} \leq \sqrt{1 - \delta} \cdot \max_{u \in L_{r}(1)} \abs{K_{u}^{\min}(\ell-1) - K_{u}^{\max}(\ell-1)}
\end{align*}
\end{theorem}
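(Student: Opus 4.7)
The plan is to push the claim through the tree recurrence for the $K$-variables, namely
\[
K_{T,r}^{p} = H(K_{T_u,u}^{p} : u \in L_r(1)), \qquad H \;=\; \varphi \circ F \circ \varphi^{-1},
\]
and combine monotonicity of the recurrence with \cref{lem:idealdecay}, which is precisely the statement that $\sup_{\mathbf{K} \geq 0} \norm{\nabla H(\mathbf{K})}_1 \leq \sqrt{1-\delta}$ in the up-to-$\Delta$ unique regime. In other words, the hard analytic work has already been packaged into \cref{lem:idealdecay}; what remains is essentially a one-step recurrence argument.

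The first step is to identify the extremal boundary conditions. Since $F(\mathbf{R}) = \lambda \prod_u 1/(R_u + 1)$ is strictly decreasing in each coordinate and $\varphi$ is strictly increasing, the quantity $K_{T,r}^{p}$ is a monotone \emph{decreasing} function of each $K_{T_u,u}^{p}$. Consequently, the depth-$\ell$ boundary condition on $T$ minimizing $K_{T,r}$ factors into choosing, in each subtree $T_u$, the depth-$(\ell-1)$ boundary condition maximizing $K_{T_u,u}$, and symmetrically for the maximizer. This yields the identities
\[
K_{T,r}^{\min}(\ell) = H\wrapp{K_{T_u,u}^{\max}(\ell-1) : u \in L_r(1)}, \qquad K_{T,r}^{\max}(\ell) = H\wrapp{K_{T_u,u}^{\min}(\ell-1) : u \in L_r(1)}.
\]

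The second step is the contraction. Writing $\mathbf{K}^{\ast}$ and $\mathbf{K}^{\circ}$ for the two input vectors above and parametrizing $\mathbf{K}(t) = \mathbf{K}^{\circ} + t(\mathbf{K}^{\ast} - \mathbf{K}^{\circ})$, the fundamental theorem of calculus gives
\[
\abs{K_{T,r}^{\min}(\ell) - K_{T,r}^{\max}(\ell)} = \abs{\int_0^1 \langle \nabla H(\mathbf{K}(t)), \, \mathbf{K}^{\ast} - \mathbf{K}^{\circ}\rangle \, dt}.
\]
Applying H\"older's inequality with the $\ell_1$--$\ell_\infty$ pairing and invoking \cref{lem:idealdecay} along the segment (which stays in the non-negative orthant since $\varphi(R) \geq \varphi(0) = 0$ for all $R \geq 0$) produces
\[
\abs{K_{T,r}^{\min}(\ell) - K_{T,r}^{\max}(\ell)} \leq \sup_{\mathbf{K} \geq 0} \norm{\nabla H(\mathbf{K})}_1 \cdot \norm{\mathbf{K}^{\ast} - \mathbf{K}^{\circ}}_\infty \leq \sqrt{1 - \delta} \cdot \max_{u \in L_r(1)} \abs{K_{T_u,u}^{\max}(\ell-1) - K_{T_u,u}^{\min}(\ell-1)},
\]
which is exactly the claimed one-step contraction.

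I do not anticipate a substantive obstacle in this plan: the monotonicity factorization is immediate from the signs of $\partial_{R_u} F$ and $\varphi'$, and the $\ell_1$--$\ell_\infty$ H\"older step is standard. The only mild subtlety is that the bound in \cref{lem:idealdecay} must be valid at every point along the segment $\mathbf{K}(t)$; this is fine because that lemma gives the bound uniformly over the entire non-negative orthant $\{\mathbf{K} \geq 0\}$, which contains the segment. All the technically delicate work---designing the potential $\varphi(R) = 2\log(\sqrt{R} + \sqrt{R+1})$ so that the worst-case $\ell_1$ gradient norm of $H$ matches the uniqueness gap---is inherited from \cite{LLY13} via \cref{lem:idealdecay} and need not be redone here.
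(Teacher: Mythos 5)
The paper does not actually prove this statement: it is cited verbatim as ``Theorem~9 from \cite{LLY13}'' and used as a black box in \cref{sec:precisessm}. Your proposal therefore supplies a proof that the paper omits, and it is correct. The three ingredients you use --- (i) the $K$-level recurrence $K_{T,r}^{p} = H(K_{T_u,u}^{p} : u \in L_r(1))$ with $H = \varphi\circ F\circ\varphi^{-1}$, which follows directly from \cref{eq:treerecurrence} by applying $\varphi$; (ii) the monotonicity observation that $H$ is coordinatewise decreasing (since $F$ is decreasing and $\varphi, \varphi^{-1}$ are increasing), so that the extremal depth-$\ell$ boundary conditions split independently over subtrees, giving $K_{T,r}^{\min}(\ell) = H(\mathbf{K}^{\max}(\ell-1))$ and $K_{T,r}^{\max}(\ell) = H(\mathbf{K}^{\min}(\ell-1))$; and (iii) the mean-value / H\"older step $|H(\mathbf{K}^{\max}) - H(\mathbf{K}^{\min})| \leq \sup_{\mathbf{K}\geq 0}\|\nabla H(\mathbf{K})\|_1\cdot\|\mathbf{K}^{\max}-\mathbf{K}^{\min}\|_\infty$ combined with \cref{lem:idealdecay} --- are exactly the right ones, and your check that the line segment stays in the nonnegative orthant (because $\varphi(0)=0$ and $\varphi$ is increasing) closes the only gap one might worry about. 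This is in the same spirit as how \cite{LLY13} derive their Theorem~9 from their Lemmas~12--14, and it mirrors the argument the paper itself later uses in the proofs of \cref{lem:truedecay} and \cref{lem:truetoideal}, so you have rediscovered the intended argument rather than produced a genuinely novel one.
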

\begin{proof}[Proof of \cref{prop:precisessm}]
First, observe that
\begin{align}\label{eq:RssmKssm}
    \Phi(R^{\max}(\ell)) \cdot \abs{R_{r}^{\min}(\ell) - R_{r}^{\max}(\ell)} \leq \abs{K_{r}^{\min}(\ell) - K_{r}^{\max}(\ell)} \leq \Phi(R^{\min}(\ell)) \cdot \abs{R_{r}^{\min}(\ell) - R_{r}^{\max}(\ell)}
\end{align}
This holds via a nearly identical argument to the proof of \cref{lem:messageratiosrelate}. With these inequalities in hand, we have
\begin{align*}
    \abs{R_{r}^{\min}(\ell) - R_{r}^{\max}(\ell)} &\leq \frac{1}{\Phi(R^{\max}(\ell)} \cdot \abs{K_{r}^{\min}(\ell) - K_{r}^{\max}(\ell)} \tag*{(\cref{eq:RssmKssm})} \\
    &\leq \frac{1}{\Phi(R^{\max}(\ell))} \cdot \sqrt{1 - \delta}^{\ell-2} \cdot \max_{u \in L_{r}(\ell - 2)}\wrapc{\abs{K_{u}^{\min}(2) - K_{u}^{\max}(2)}} \tag*{(\cref{thm:ssmK})} \\
    &\leq \frac{\Phi(R^{\min}(2))}{\Phi(R^{\max}(\ell))} \cdot \sqrt{1 - \delta}^{\ell-2} \cdot \max_{u \in L_{r}(\ell - 2)}\wrapc{\abs{R_{u}^{\min}(2) - R_{u}^{\max}(2)}} \tag*{(\cref{eq:RssmKssm})} \\
    &\leq \frac{\Phi(R^{\min}(2))}{\Phi(R^{\max}(2))} \cdot \sqrt{1 - \delta}^{\ell-2} \cdot \max_{u \in L_{r}(\ell - 2)}\wrapc{\abs{R_{u}^{\min}(2) - R_{u}^{\max}(2)}} \tag*{(\cref{fact:RminRmaxbounds})} \\
    &\leq \frac{R^{\max}(2)}{R^{\min}(2)} \cdot \sqrt{1 - \delta}^{\ell-2} \cdot \max_{u \in L_{r}(\ell - 2)}\wrapc{\abs{R_{u}^{\min}(2) - R_{u}^{\max}(2)}} \tag*{($\frac{\Phi(R_{0})}{\Phi(R_{1})} =\sqrt{\frac{R_{1}(R_{1} + 1)}{R_{0}(R_{0} + 1)}} \leq \frac{R_{1}}{R_{0}}$ for $R_{0} \leq R_{1}$)} \\
    &\leq \sqrt{1 - \delta}^{\ell - 2} \cdot \eta^{*} \tag*{(\cref{def:etas})}
\end{align*}
\end{proof}

\section{The Pairwise Influence Matrix on Infinite Regular Trees}\label{app:infonregtree}
In this section, we consider general two-state spin systems, and analyze $\Psi_{\mu}$ for the infinite $\Delta$-regular tree in the uniqueness regime. A two-state spin system is specified by a nonnegative symmetric matrix of edge activities $A = \begin{bmatrix} A_{0,0} & A_{0,1} \\ A_{1,0} & A_{1,1} \end{bmatrix} = \begin{bmatrix} \beta & 1 \\ 1 & \gamma \end{bmatrix}$, and a vector of vertex activities $b = \begin{bmatrix} b_{0} \\ b_{1} \end{bmatrix} = \begin{bmatrix} \lambda \\ 1 \end{bmatrix}$. For an input graph $G=(V,E)$, the parameters $\beta,\gamma,\lambda$ induce a distribution $\mu$ over assignments (or configurations) $\sigma:V \rightarrow \{0,1\}$ given by
\begin{align*}
    \mu(\sigma) \propto \beta^{m_{0}(\sigma)} \gamma^{m_{1}(\sigma)} \lambda^{|\sigma^{-1}(0)|}
\end{align*}
where $m_{0}(\sigma)$ is the number of edges $\{u,v\} \in E$ such that $\sigma(u) = \sigma(v) = 0$, and $m_{1}(\sigma)$ is the number of edges $\{u,v\} \in E$ such that $\sigma(u) = \sigma(v) = 1$. Note it is without loss of generality that we take $A_{0,1} = A_{1,0} = b_{1} = 1$, as we may always scale the partition function by a universal constant. We also assume without loss of generality that $\beta \leq \gamma$.

The partition function corresponding to a two-state spin system on $G=(V,E)$ with parameters $\beta,\gamma,\lambda$ is given by
\begin{align*}
		Z_{G}(\beta,\gamma,\lambda) = \sum_{\sigma:V \rightarrow \{0,1\}} \beta^{m_{0}(\sigma)}\gamma^{m_{1}(\sigma)}\lambda^{|\sigma^{-1}(0)|}
\end{align*}
If $\beta\gamma < 1$, then the system is considered antiferromagnetic. Otherwise, the system is ferromagnetic. In the case $\beta = 0$ and $\gamma = 1$, we recover the hardcore model. Another interesting case is when $\beta = \gamma$, which corresponds to the Ising model.

Here, the generalization of the tree recurrence for the hardcore model \cref{eq:treerecurrence} is given by
\begin{align}\label{eq:generaltreerecurrence}
    f_{\Delta-1}(R) = \lambda \wrapp{\frac{\beta R + 1}{R + \gamma}}^{\Delta-1}
\end{align}
For convenience, we will write $f$ instead of $f_{\Delta-1}$, and $\hat{R}$ instead of $\hat{R}_{\Delta-1}$.

In this section, we show that for parameters $(\beta,\gamma,\lambda)$ in the uniqueness regime for the infinite $\Delta$-regular tree, the corresponding pairwise influence matrix of the system satisfies $\lambda_{\max}(\Psi_{\mu}) = \Theta(1/\delta)$, where $\delta = 1 - \abs{f_{\Delta-1}'(\hat{R}_{\Delta-1})}$. The main result of this section is the following.
\begin{theorem}\label{thm:infinitetreebound}
Assume the parameters $\beta,\gamma,\lambda$ are within the uniqueness regime of $\mathbb{T}_{\Delta}$, and let $\delta = 1- \abs{f'(\hat{R})}$. Then we have the bounds
\begin{align*}
    \frac{\Delta}{\Delta-1} \cdot \wrapp{\frac{1}{\delta(2-\delta)} - 1} \leq \lambda_{\max}(\Psi_{\mu}) \leq \frac{\Delta}{\Delta-1} \cdot \wrapp{\frac{1}{\delta} - 1}
\end{align*}
\end{theorem}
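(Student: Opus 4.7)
The plan is to reduce bounding $\lambda_{\max}(\Psi_{\mu})$ on $\mathbb{T}_{\Delta}$ to a single row-sum computation via symmetry, and then estimate the summands using the tree recursion at the symmetric fixed point $\hat{R}$. By vertex-transitivity of $\mathbb{T}_{\Delta}$, the influence $\Psi_{\mu}(u,v)$ depends only on $\ell := d(u,v)$; denote this common value by $\psi_{\ell}$. In the antiferromagnetic regime ($\beta\gamma < 1$) the $\psi_{\ell}$ alternate in sign with parity of $\ell$, and conjugating by the diagonal sign matrix $D = \operatorname{diag}\bigl((-1)^{\mathrm{parity}(v)}\bigr)$ from the bipartition of $\mathbb{T}_{\Delta}$ turns $\Psi_{\mu}$ into the entrywise absolute-value matrix $|\Psi_{\mu}|$, so $\lambda_{\max}(\Psi_{\mu}) = \lambda_{\max}(|\Psi_{\mu}|)$; in the ferromagnetic regime no conjugation is needed. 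Since $|\Psi_{\mu}|$ is a nonnegative, symmetric, translation-invariant operator on the homogeneous tree, a limiting Rayleigh-quotient argument (using $\mathbf{1}_{B_{L}}$ as test vectors and letting $L \to \infty$) identifies its operator norm with the common row sum:
\[
  \lambda_{\max}(\Psi_{\mu}) \;=\; \sum_{\ell=1}^{\infty} N_{\ell}\, |\psi_{\ell}|, \qquad N_{\ell} = \Delta(\Delta-1)^{\ell-1}.
\]

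Next, I would control $|\psi_{\ell}|$ via the tree recursion. Fix $u$ at distance $\ell$ from $r$ and let $r = v_{0}, v_{1}, \dots, v_{\ell} = u$ be the unique path. Conditioning on $u$ pins the subtree $R$-statistic at $v_{\ell-1}$ to one of two explicit values $R^{+}, R^{-}$ determined by $\hat{R}$ and $(\beta, \gamma, \lambda)$. Applying the tree recursion $R_{v_{i-1}} = f_{\Delta-1}(R_{v_{i}}, \hat{R}, \dots, \hat{R})$ along the path (with a final step through $f_{\Delta}$ at the root) lifts these to two root-values $R_{r}^{\pm}$, and $|\psi_{\ell}| = |p(R_{r}^{+}) - p(R_{r}^{-})|$ with $p(R) = R/(R+1)$. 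The mean value theorem applied step by step expresses $|\psi_{\ell}|$ as a telescoping product of $\ell$ single-step partial derivatives (times the Jacobian $p'$), each evaluated at an intermediate point. At the symmetric fixed point each such derivative equals $f'(\hat{R})/(\Delta - 1)$, so the ``ideal'' value of $|\psi_{\ell}|$ is $(|f'(\hat{R})|/(\Delta-1))^{\ell}$.

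For the upper bound, a direct monotonicity argument on the one-variable map $R \mapsto f_{\Delta-1}(R, \hat{R}, \dots, \hat{R})$ shows that each intermediate factor in the telescoping product is bounded in absolute value by its value at $\hat{R}$, giving $|\psi_{\ell}| \leq (|f'(\hat{R})|/(\Delta-1))^{\ell}$. Summing the geometric series with $|f'(\hat{R})| = 1 - \delta$ yields
\[
  \sum_{\ell \geq 1} \Delta (\Delta-1)^{\ell-1} \bigl(|f'|/(\Delta-1)\bigr)^{\ell} \;=\; \frac{\Delta}{\Delta-1} \cdot \frac{|f'|}{1-|f'|} \;=\; \frac{\Delta}{\Delta-1}\Bigl(\frac{1}{\delta}-1\Bigr),
\]
the claimed upper bound. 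For the lower bound, I would discard the odd-distance terms and argue that $|\psi_{2k}| \geq (|f'(\hat{R})|/(\Delta-1))^{2k}$ for every $k \geq 1$ via a two-step composition argument: convexity/concavity of the one-step map flips across one composition but is restored after two, so the two-step derivative at $\hat{R}$ dominates in absolute value the product of its endpoint derivatives. Summing only the even-level contributions then gives
\[
  \sum_{k \geq 1} \Delta (\Delta-1)^{2k-1} \bigl(|f'|/(\Delta-1)\bigr)^{2k} \;=\; \frac{\Delta}{\Delta-1} \cdot \frac{|f'|^{2}}{1-|f'|^{2}} \;=\; \frac{\Delta}{\Delta-1}\Bigl(\frac{1}{\delta(2-\delta)}-1\Bigr),
\]
the claimed lower bound.

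The main obstacle I anticipate is verifying the pointwise comparisons $|\psi_{\ell}| \leq (|f'|/(\Delta-1))^{\ell}$ and $|\psi_{2k}| \geq (|f'|/(\Delta-1))^{2k}$ cleanly for \emph{general} two-state spin systems. In the pure hardcore case ($\beta=0, \gamma=1$) one checks directly that $|\psi_{\ell}| = (|f'(\hat{R})|/(\Delta-1))^{\ell}$ holds for all $\ell$, so both bounds are essentially equalities on the $\ell$-series; but for general $(\beta,\gamma,\lambda)$ the boundary values $R^{\pm}$ are asymmetric about $\hat{R}$ and the composed derivatives are evaluated at different intermediate points, so matching the fixed-point derivative requires a careful monotonicity and convexity analysis of $f_{\Delta-1}$ together with the change-of-variables $R \mapsto R/(R+1)$.
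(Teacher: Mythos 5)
Your plan diverges from the paper's on both of its two key steps, and one of those divergences is a genuine gap.

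First, the computation of $|\psi_{\ell}|$. You propose to get $|\psi_{\ell}| \leq (|f'(\hat R)|/(\Delta-1))^{\ell}$ by a telescoping mean-value-theorem argument along the path and then worry, in your final paragraph, that for general $(\beta,\gamma,\lambda)$ the asymmetry of $R^{\pm}$ about $\hat R$ makes the endpoint comparisons delicate, so that you only get inequalities (one direction per parity). The paper avoids all of this: on a tree the influence factors \emph{exactly} through any intermediate vertex, $\Psi_{\mu}(u,v) = \Psi_{\mu}(u,w)\,\Psi_{\mu}(w,v)$, by conditional independence of $u$ and $v$ given the spin of $w$ together with the law of total probability (\cref{lem:treeinfprodproperty}); this identity holds for any two-state spin system and any tree, with no monotonicity or convexity input. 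Combined with the one-edge computation $\Psi_{\mu}(u,v) = f'(\hat R)/(\Delta-1)$ at the translation-invariant fixed point (\cref{lem:infinitetreeedgecorrelation}), one gets the exact formula $\Psi_{\mu}(v,r) = \bigl(f'(\hat R)/(\Delta-1)\bigr)^{d(v,r)}$ for all $v$. Your anticipated obstacle is therefore a non-issue once the correct lemma is in place; the MVT/convexity analysis you describe is strictly harder than what is needed and, as you note yourself, you do not see how to close it in general.

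Second, and this is the real gap, you claim that for the signed matrix conjugated to $|\Psi_{\mu}|$ the operator norm on $\mathbb{T}_{\Delta}$ \emph{equals} the common row sum, and you justify this with a Rayleigh quotient on ball indicators $\mathbf{1}_{B_{L}}$ as $L\to\infty$. This is false on a non-amenable graph: for a translation-invariant, nonnegative, self-adjoint operator on $\ell^{2}(\mathbb{T}_{\Delta})$, the ball-Rayleigh quotients do not converge to the row sum because a constant fraction of the mass of $B_{L}$ sits at its boundary (the classical example is the adjacency operator itself, whose row sum is $\Delta$ but whose $\ell^{2}$ norm is $2\sqrt{\Delta-1}$). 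If your equality did hold, the theorem's two bounds would be identical, but the stated lower bound $\frac{\Delta}{\Delta-1}\bigl(\frac{1}{\delta(2-\delta)}-1\bigr)$ is strictly smaller than the upper bound $\frac{\Delta}{\Delta-1}\bigl(\frac{1}{\delta}-1\bigr)$, so the conclusion as stated is incompatible with what you claim to prove. The paper instead uses the row-sum bound only one-sidedly, for the upper bound. For the lower bound, it restricts to the principal submatrix $\Psi_{\mu}^{S,S}$ on the even-parity class $S$, observes that all its entries are positive with constant row sum, and uses $\lambda_{\max}(\Psi_{\mu}) \geq \lambda_{\max}(\Psi_{\mu}^{S,S}) \geq$ that row sum. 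Your ``discard the odd-distance terms'' step, without the principal-submatrix / eigenvalue-interlacing justification, does not yield a lower bound on $\lambda_{\max}$ from a (claimed) upper bound on individual entries.
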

We prove this using the following two lemmas. The first concerns the structure of $\Psi_{\mu}$, which holds for arbitrary trees. The second is a calculation of the entries of $\Psi_{\mu}$ for $\mathbb{T}_{\Delta}$.
\begin{lemma}\label{lem:treeinfprodproperty}
Let $\mu$ denote the distribution corresponding to a two-state spin system on an arbitrary $T$ with arbitrary parameters $\beta,\gamma,\lambda$. Let $u,v,w$ be distinct vertices in $T$ such that $w$ is on the unique path from $u$ to $v$. Then $\Psi_{\mu}(u,v) = \Psi_{\mu}(u,w) \cdot \Psi_{\mu}(w,v)$.
\end{lemma}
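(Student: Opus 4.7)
The plan is to exploit the spatial Markov property of the Gibbs measure $\mu$ on the tree $T$. Because $w$ lies on the unique $u$-to-$v$ path and $T$ is a tree, deleting $w$ disconnects $u$ from $v$. For any two-state spin system, the joint weight $\mu(\sigma) \propto \beta^{m_{0}(\sigma)}\gamma^{m_{1}(\sigma)}\lambda^{|\sigma^{-1}(0)|}$ factorizes over edges, so after fixing $\sigma(w)=t$ the weight splits as a product over the two subtrees obtained by cutting $T$ at $w$. Consequently, conditioned on $\sigma(w)$, the spin at $v$ is independent of the spin at $u$, i.e.\ $\Pr[v \mid u, \sigma(w){=}t] = \Pr[v \mid \sigma(w){=}t]$, and the analogous equality holds with $\bar{u}$ in place of $u$.

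Given this conditional independence, I would use the law of total probability to write, for each event $A \in \{u,\bar{u}\}$,
\begin{align*}
\Pr[v \mid A] &= \Pr[v \mid w]\Pr[w \mid A] + \Pr[v \mid \bar{w}]\Pr[\bar{w} \mid A] \\
&= \Pr[v \mid \bar{w}] + \bigl(\Pr[v \mid w] - \Pr[v \mid \bar{w}]\bigr)\Pr[w \mid A],
\end{align*}
where the second line uses $\Pr[\bar{w} \mid A] = 1 - \Pr[w \mid A]$. Subtracting the $A = \bar{u}$ version from the $A = u$ version cancels the additive $\Pr[v \mid \bar{w}]$ term and isolates a clean product:
\begin{align*}
\Psi_{\mu}(u,v) &= \Pr[v \mid u] - \Pr[v \mid \bar{u}] \\
&= \bigl(\Pr[v \mid w] - \Pr[v \mid \bar{w}]\bigr)\bigl(\Pr[w \mid u] - \Pr[w \mid \bar{u}]\bigr) \\
&= \Psi_{\mu}(w,v) \cdot \Psi_{\mu}(u,w),
\end{align*}
which is exactly the claimed identity.

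The only step requiring any justification is the tree Markov property; the rest is an elementary algebraic manipulation. I expect no real obstacle: the product-over-edges form of the Gibbs weight makes the conditional independence immediate once $\sigma(w)$ is fixed, and the identity then drops out from the affine dependence of $\Pr[v \mid A]$ on $\Pr[w \mid A]$. Notably, no assumption on the sign of $\beta\gamma - 1$ (ferromagnetic vs.\ antiferromagnetic) or on uniqueness is needed, and the boundary conditions implicit in the conditioning on $u$ or $\bar{u}$ do not alter the argument.
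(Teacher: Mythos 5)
Your proof is correct and follows essentially the same route as the paper's: both invoke conditional independence of $u$ and $v$ given the spin at $w$ (the tree Markov property), apply the law of total probability to expand $\Pr[v\mid u]$ and $\Pr[v\mid\bar u]$, and subtract to obtain the product. The only cosmetic difference is that you rewrite the expansion in affine form $\Pr[v\mid A]=\Pr[v\mid\bar w]+(\Pr[v\mid w]-\Pr[v\mid\bar w])\Pr[w\mid A]$ before subtracting, whereas the paper subtracts first and then factors; the content is identical.
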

\begin{lemma}\label{lem:infinitetreeedgecorrelation}
Assume the parameters $\beta,\gamma,\lambda,\Delta$ are within the uniqueness regime of $\mathbb{T}_{\Delta}$. If $u \sim v$ in $\mathbb{T}_{\Delta}$, then
\begin{align*}
    \Psi_{\mu}(u,v) = \frac{f'(\hat{R})}{\Delta-1}
\end{align*}
\end{lemma}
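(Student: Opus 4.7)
The plan is to verify the identity by computing $\Psi_\mu(u,v)$ and $f'(\hat{R})$ separately and observing they match (up to the factor $\Delta-1$). The computation is finite thanks to the translation invariance of $\mathbb{T}_\Delta$ under the unique Gibbs measure.

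First, I would set up the appropriate use of the fixed point. For any vertex $w$ and any neighbor $N$ of $w$, the connected component of $w$ in $\mathbb{T}_\Delta \smallsetminus \{N\}$ is a rooted tree in which every internal vertex has exactly $\Delta-1$ children. By translation invariance and uniqueness, the marginal ratio at $w$ in this rooted tree satisfies $R=f(R)$, hence equals $\hat{R}$. Applying this to $w=v$ and $N=u$, the subtree ``hanging off $u$ at $v$'' contributes ratio $\hat{R}$.

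Next, I would condition on the state of $u$. Under the convention that state $0$ corresponds to being ``in'' (so $\sigma(u)=0$ plays the role of $u$ being in the subset), the edge $\{u,v\}$ contributes weight $\beta$ to the event $\sigma(v)=0$ and weight $1$ to $\sigma(v)=1$ when $\sigma(u)=0$, and weight $1$ resp.\ $\gamma$ when $\sigma(u)=1$. Combined with the contribution $\hat{R}$ from the remainder of the tree below $v$, this gives
\begin{align*}
R_v^{\sigma(u)=0} = \beta\hat{R}, \qquad R_v^{\sigma(u)=1} = \hat{R}/\gamma.
\end{align*}
Converting ratios to marginals via $p = R/(R+1)$ and subtracting,
\begin{align*}
\Psi_\mu(u,v) \;=\; \frac{\beta\hat{R}}{1+\beta\hat{R}} - \frac{\hat{R}}{\gamma+\hat{R}} \;=\; \frac{(\beta\gamma-1)\,\hat{R}}{(1+\beta\hat{R})(\gamma+\hat{R})}.
\end{align*}

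Finally, I would differentiate the recurrence \eqref{eq:generaltreerecurrence}. A direct computation gives
\begin{align*}
f'(R) \;=\; (\Delta-1)\cdot\frac{f(R)}{(\beta R+1)(R+\gamma)}\cdot(\beta\gamma-1),
\end{align*}
and evaluating at $R=\hat{R}$ using $f(\hat{R})=\hat{R}$ yields
\begin{align*}
\frac{f'(\hat{R})}{\Delta-1} \;=\; \frac{(\beta\gamma-1)\,\hat{R}}{(\beta\hat{R}+1)(\hat{R}+\gamma)},
\end{align*}
matching the expression for $\Psi_\mu(u,v)$ above. There is no substantive obstacle; the only points that need care are (i) correctly identifying which spin corresponds to the $\lambda$-activity (so that the boundary contribution is $\hat{R}$ and not $\hat{R}/\lambda$ or similar), and (ii) checking that the antiferromagnetic case $\beta\gamma<1$ correctly produces the expected negative correlation, which it does since both expressions carry the factor $\beta\gamma-1$.
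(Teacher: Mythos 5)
Your proof is correct and follows essentially the same computation as the paper: compute the two conditional ratios $R_v^{\sigma(u)=0}=\beta\hat{R}$ and $R_v^{\sigma(u)=1}=\hat{R}/\gamma$, convert to marginals and subtract to get $\frac{(\beta\gamma-1)\hat{R}}{(\beta\hat{R}+1)(\hat{R}+\gamma)}$, then differentiate the recurrence and evaluate at $\hat{R}$ using $f(\hat{R})=\hat{R}$ to match. The only difference is that you explicitly justify (via uniqueness and translation invariance) that the subtree at $v$ away from $u$ contributes ratio $\hat{R}$, a step the paper's calculation leaves implicit.
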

Assuming the truth of these two lemmas, we may immediately prove \cref{thm:infinitetreebound}.
\begin{proof}[Proof of \cref{thm:infinitetreebound}]
Fix an arbitrary root node $r$ in $\mathbb{T}_{\Delta}$. Combining \cref{lem:treeinfprodproperty} and\cref{lem:infinitetreeedgecorrelation}, we have that $\Psi_{\mu}(v,r) = \wrapp{\frac{f'(\hat{R})}{\Delta-1}}^{d(v,r)}$, where $d(v,r)$ denotes the shortest path distance between $v$ and $r$. Note that by \cref{lem:infinitetreeedgecorrelation} and the fact that $f'(\hat{R}) \leq 0$ when the spin system is antiferromagnetic, $\Psi_{\mu}(v,r)$ is negative when $d(v,r)$ is odd, and positive otherwise.

With this observation in hand, we see that the upper bound follows by the following calculation:
\begin{align*}
    \lambda_{\max}(\Psi_{\mu}) &\leq \sum_{v \in \mathbb{T}_{\Delta} : v \neq r} \abs{\Psi_{\mu}(v,r)} \\
    &= \sum_{k=1}^{\infty} \sum_{v \in \mathbb{T}_{\Delta} : d(v,r) = k} \abs{\Psi_{\mu}(v,r)} \\
    &= \frac{\Delta}{\Delta-1} \sum_{k=1}^{\infty} (\Delta-1)^{k} \wrapp{\frac{|f_{\beta,\gamma,\lambda}'(\hat{R})|}{\Delta-1}}^{k} \\
    &= \frac{\Delta}{\Delta-1} \sum_{k=1}^{\infty} (1 - \delta)^{k} \\
    &= \frac{\Delta}{\Delta-1} \wrapp{\frac{1}{\delta} - 1}
\end{align*}
For the lower bound, let $S$ denote the subset of vertices of even distance from $r$ (including $r$ itself), and let $\Psi_{\mu}^{S,S}$ denote the $S \times S$ principal submatrix of $\Psi_{\mu}$. Since all vertices of $S$ have even distance from each other, all entries of $\Psi_{\mu}^{S,S}$ are positive. Furthermore, by symmetric, all rows of $\Psi_{\mu}$ have the same sum. Hence, we have
\begin{align*}
    \lambda_{\max}(\Psi_{\mu}) &\geq \lambda_{\max}(\Psi_{\mu}^{S,S}) \geq \sum_{v \in S : v \neq r} \Psi_{\mu}^{S,S}(v,r) = \frac{\Delta}{\Delta-1} \sum_{k=1}^{\infty} (1 - \delta)^{2k} = \frac{\Delta}{\Delta-1} \wrapp{\frac{1}{\delta(2-\delta)} - 1}
\end{align*}
gives the desired lower bound.
\end{proof}
It remains to prove the two lemmas.
\begin{proof}[Proof of \cref{lem:treeinfprodproperty}]
This lemma follows by conditional independence of $v$ and $u$ when the spin of $w$ is fixed, and the Law of Conditional Probability. Specifically, we have
\begin{align*}
    \Pr[v \mid u] &= \Pr[v, w \mid u] + \Pr[v,\overline{w} \mid u] = \Pr[v \mid u,w] \cdot \Pr[w \mid u] + \Pr[v \mid u,\overline{w}] \cdot \Pr[\overline{w} \mid u] \\
    &= \Pr[v \mid w] \cdot \Pr[w \mid u] + \Pr[v \mid \overline{w}] \cdot \Pr[\overline{w} \mid u]
\end{align*}
Similarly,
\begin{align*}
    \Pr[v \mid \overline{u}] = \Pr[v \mid w] \cdot \Pr[w \mid \overline{u}] + \Pr[v \mid \overline{w}] \cdot \Pr[\overline{w} \mid \overline{u}]
\end{align*}
Here, the second step follows by the Law of Condition Probability, and the final step follows by conditional independence of $u,v$ given $w$.
\begin{align*}
    \Psi_{\mu}(u,v) &= \Pr[v \mid u] - \Pr[v \mid \overline{u}] \\
    &= \Pr[v \mid w] \cdot (\underset{=\Psi_{\mu}(u,w)}{\underbrace{\Pr[w \mid u] - \Pr[w \mid \overline{u}]}}) + \Pr[v \mid \overline{w}] \cdot (\underset{=-\Psi_{\mu}(u,w)}{\underbrace{\Pr[\overline{w} \mid u] - \Pr[\overline{w} \mid \overline{u}}}) \\
    &= \Psi_{\mu}(u,w) \cdot \Psi_{\mu}(w,v)
\end{align*}
\end{proof}
\begin{proof}[Proof of \cref{lem:infinitetreeedgecorrelation}]
We calculate that $R[v \mid u] = \beta \lambda \prod_{w \sim v : w \neq u} \frac{\beta \hat{R} + 1}{\hat{R} + \gamma} = \beta f_{\beta,\lambda,\gamma}(\hat{R}) = \beta \hat{R}$ and similarly, $R[v\mid \overline{u}] = \frac{1}{\gamma}\lambda \prod_{w \sim v : w \neq u} \frac{\beta \hat{R} + 1}{\hat{R} + \gamma} = \frac{1}{\gamma}f_{\beta,\gamma,\lambda}(\hat{R}) = \frac{1}{\gamma}\hat{R}$. Hence, we have that
\begin{align*}
    \Psi_{\mu}(u,v) &= \Pr[v \mid u] - \Pr[v \mid \overline{u}] = \frac{R[v \mid u]}{R[v \mid u] + 1} - \frac{R[v \mid \overline{u}]}{R[v \mid \overline{u}] + 1} \\
    &= \frac{\beta \hat{R}}{\beta \hat{R} + 1} - \frac{\hat{R}}{\hat{R} + \gamma} = \frac{(\beta \gamma - 1)\cdot \hat{R}}{(\beta \hat{R} + 1)(\hat{R} + \gamma)}
\end{align*}
Now, we compute that
\begin{align*}
    f'(R) = \frac{(\Delta-1)(\beta \gamma - 1)}{(\beta R + 1)(R + \gamma)} \cdot \lambda \wrapp{\frac{\beta R + 1}{R + \gamma}}^{\Delta-1} = (\Delta-1) \frac{\beta\gamma - 1}{(\beta R + 1)(R + \gamma)} f(R)
\end{align*}
At $\hat{R}$, we have $f(\hat{R}) = \hat{R}$ so that
\begin{align*}
    f'(\hat{R}) = (\Delta-1) \frac{(\beta\gamma - 1) \cdot \hat{R}}{(\beta \hat{R} + 1)(\hat{R} + \gamma)} = (\Delta-1) \cdot \Psi_{\mu}(u,v)
\end{align*}
The claim follows.
\end{proof}

\section{Gapped \texorpdfstring{Up-to-$\Delta$}{Up-to-Delta} Uniqueness: Proof of \texorpdfstring{\cref{lem:gappeduptodeltaunique}}{LemmaGappedUptoDelta}}\label{sec:gappeduptodeltaunique}
\begin{lemma}[Gapped Up-to-$\Delta$ Uniqueness in the Hardcore Model]\label{lem:gappeduptodeltaunique}
$\lambda$ is up-to-$\Delta$ unique with gap $0 < \delta < 1$ if and only if $\lambda < (1 - \Theta(\delta))\lambda_{c}(\Delta)$.
\end{lemma}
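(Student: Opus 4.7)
The plan is to reduce the gap condition, quantified over $d \in \{1, \dots, \Delta-1\}$, to a single explicit upper bound on $\lambda$ coming from $d = \Delta - 1$, and then compare that bound to $\lambda_c(\Delta)$.

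First, I would use the fixed-point equation $\hat R_d(\hat R_d+1)^d = \lambda$ to obtain the clean identity
\begin{align*}
f_d'(\hat R_d) = -\frac{d\lambda}{(\hat R_d+1)^{d+1}} = -\frac{d \hat R_d}{\hat R_d + 1},
\end{align*}
so $s := |f_d'(\hat R_d)| = d\hat R_d/(\hat R_d+1)$ is a strictly increasing function of $\hat R_d$, hence of $\lambda$. Solving gives $\hat R_d = s/(d-s)$, and substituting back yields the one-parameter family $\lambda = s\,d^d/(d-s)^{d+1}$. The condition $|f_d'(\hat R_d)| \leq 1-\delta$ is therefore equivalent to $\lambda \leq \lambda_d^*(\delta)$, where
\begin{align*}
\lambda_d^*(\delta) := \frac{(1-\delta)\, d^d}{(d-1+\delta)^{d+1}}, \qquad \lambda_d^*(0) = \lambda_c(d+1).
\end{align*}

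Next, I would show that $d \mapsto \lambda_d^*(\delta)$ is strictly decreasing, so that the binding constraint among $1 \leq d < \Delta$ is $d = \Delta - 1$. Differentiating,
\begin{align*}
\partial_d \log \lambda_d^*(\delta) = \log\frac{d}{d-1+\delta} - \frac{2-\delta}{d-1+\delta},
\end{align*}
which by $\log(1+x) \leq x$ with $x = (1-\delta)/(d-1+\delta)$ is bounded above by $-1/(d-1+\delta) < 0$ for $d \geq 2$. The remaining check $\lambda_2^*(\delta) \leq \lambda_1^*(\delta) = (1-\delta)/\delta^2$ reduces to $4\delta^2 \leq (1+\delta)^3$, which follows from $(1+\delta)^3 \geq 1 + 3\delta \geq 4\delta^2$ on $[0,1]$. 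Hence $\lambda$ is up-to-$\Delta$ unique with gap $\delta$ iff $\lambda \leq \lambda_{\Delta-1}^*(\delta)$.

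Finally, I would compute
\begin{align*}
\frac{\lambda_{\Delta-1}^*(\delta)}{\lambda_c(\Delta)} = (1-\delta)\left(\frac{\Delta-2}{\Delta-2+\delta}\right)^{\Delta} = (1-\delta)\left(1 - \frac{\delta}{\Delta-2+\delta}\right)^{\Delta}.
\end{align*}
From $1-x \leq e^{-x}$, this ratio is at most $(1-\delta)\, e^{-\Delta\delta/(\Delta-2+\delta)} \leq 1-\delta$, which yields the ``$\Rightarrow$'' direction. Conversely, Bernoulli's inequality $(1-x)^\Delta \geq 1-\Delta x$ combined with $\Delta/(\Delta-2+\delta) \leq 3$ (valid for $\Delta \geq 3$) lower-bounds the ratio by $(1-\delta)(1-3\delta) \geq 1 - 4\delta$, yielding ``$\Leftarrow$''. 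I do not expect any real obstacle: the whole argument is driven by the one-variable reparametrization $s = |f_d'(\hat R_d)|$ and by the observation that the gap condition is controlled by $d = \Delta-1$ alone; the remaining work is a pair of standard elementary inequalities.
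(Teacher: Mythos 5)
Your proof is correct and follows essentially the same strategy as the paper's: both start from the fixed-point identity $|f_d'(\hat R_d)| = d\hat R_d/(\hat R_d+1)$, convert the gap condition into the explicit threshold $\lambda \leq (1-\delta)\,d^d/(d-1+\delta)^{d+1}$ (the paper's $\lambda_c(\delta,d+1)$, your $\lambda_d^*(\delta)$ --- the same quantity), and then compare against $\lambda_c(\Delta)$ by bounding the ratio from above and below with $\log(1+x)\leq x$ and Bernoulli. The one place you genuinely improve on the paper is the step asserting that $d=\Delta-1$ is the binding constraint: the paper observes only that $\lambda_c(d+1)$ is decreasing in $d$ and then jumps to the conclusion that the gap condition is equivalent to $\lambda < (1-c(\delta))\lambda_c(\Delta)$, silently treating the $d$-dependent quantity $c(\delta)$ as if it were uniform; you instead verify directly that $d\mapsto\lambda_d^*(\delta)$ is decreasing (via $\partial_d\log\lambda_d^*(\delta)\leq -1/(d-1+\delta)<0$), which is what is actually needed to conclude $\min_{1\leq d<\Delta}\lambda_d^*(\delta)=\lambda_{\Delta-1}^*(\delta)$. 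The reparametrization through $s=|f_d'(\hat R_d)|$ is also a clean way to obtain the threshold without the paper's ``$f_d(R)<R \Leftrightarrow R>\hat R_d$'' detour. One minor remark: your derivative bound $\partial_d\log\lambda_d^*(\delta)\leq -1/(d-1+\delta)$ already holds for all real $d\geq 1$ (it only needs $d-1+\delta>0$), so the separate verification of $\lambda_2^*\leq\lambda_1^*$ is redundant, though harmless.
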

\begin{proof}
Let $1 < d < \Delta$. First, we calculate that
\begin{align*}
    f_{d}'(R) = -d\lambda \cdot \wrapp{\frac{1}{R+1}}^{d+1} = -d \cdot \frac{f_{d}(R)}{R+1}
\end{align*}
In particular, at the unique fixed point of $f_{d}$, we have
\begin{align*}
    \abs{f_{d}'(\hat{R}_{d})} = d \cdot \wrapp{1 - \frac{1}{\hat{R}_{d} + 1}}
\end{align*}
Up-to-$\Delta$ uniqueness holds only if $\abs{f_{d}'(\hat{R}_{d})} \leq 1 - \delta$. In terms of $\hat{R}_{d}$, this holds if and only if $\hat{R}_{d} \leq \frac{1 - \delta}{d - 1 + \delta}$. Observe that since $f_{d}$ is monotone decreasing and $\hat{R}_{d}$ is the unique fixed point of $f_{d}$, we have $f_{d}(R) < R$ for all $R > \hat{R}_{d}$ and $f_{d}(R) > R$ for all $R < \hat{R}_{d}$. Hence, $\frac{1-\delta}{d-1+\delta} \geq \hat{R}_{d}$ holds if and only if
\begin{align*}
    \lambda \wrapp{\frac{d-1+\delta}{d}}^{d} = f_{d}\wrapp{\frac{1-\delta}{d-1+\delta}} \leq \frac{1-\delta}{d-1+\delta} \iff \lambda \leq \wrapp{\frac{d}{d-1+\delta}}^{d} \cdot \frac{1-\delta}{d-1+\delta} \overset{\defin}{=} \lambda_{c}(\delta,d+1)
\end{align*}
Now, let us compare this with $\lambda_{c}(d + 1) = \wrapp{\frac{d}{d-1}}^{d} \cdot \frac{1}{d-1}$. Define $c(\delta)$ such that $(1 - c(\delta)) \lambda_{c}(d+1) = \lambda_{c}(\delta,d+1)$. Since $\lambda_{c}(d+1)$ is monotone decreasing in $d$ and $d < \Delta$, we have $\lambda_{c}(d+1) \geq \lambda_{c}(\Delta)$. Thus, we have shown that $\lambda$ is up-to-$\Delta$ unique with gap $0 < \delta < 1$ if and only if $\lambda < (1 - c(\delta)) \lambda_{c}(\Delta)$. All that remains is to show $c(\delta) = \Theta(\delta)$.

For this, we first calculate that
\begin{align*}
    1 - c(\delta) = (1 - \delta)\wrapp{\frac{d-1}{d-1+\delta}}^{d+1}
\end{align*}
Clearly, $\wrapp{\frac{d-1}{d-1+\delta}}^{d+1} \leq 1$, which implies $c(\delta) \geq \delta$. On the other hand, by Bernoulli's Inequality, we also have that
\begin{align*}
    \wrapp{\frac{d-1}{d-1+\delta}}^{d+1} = \wrapp{1 - \frac{\delta}{d-1+\delta}}^{d+1} \geq 1 - \frac{d+1}{d-1+\delta} \cdot \delta
\end{align*}
which implies $c(\delta) \leq O(\delta)$ as well. This concludes the proof.
\end{proof}

\end{appendices}
\end{document}